\documentclass{article} 
\usepackage{graphicx,amsmath, amssymb, verbatim}
\usepackage{framed}
\usepackage{fancybox}
\usepackage{amsthm}
\usepackage[all]{xy}
\usepackage{bm}
\usepackage{ascmac}
\usepackage{physics}
\usepackage{color}
\usepackage{comment}
\allowdisplaybreaks[2]
\setlength{\textwidth}{155mm}
\setlength{\textheight}{240mm}
\addtolength{\oddsidemargin}{-20mm}
\addtolength{\evensidemargin}{-20mm}
\addtolength{\textheight}{-20mm}
\addtolength{\topmargin}{-10mm}

\newtheorem{proposition}{Proposition}
\newtheorem{theorem}{Theorem}
\newtheorem{definition}{Definition}
\newtheorem*{definition*}{Definition}

\newtheorem{remark}{Remark}
\newtheorem{lemma}{Lemma}

\newcommand {\R } {\mathbb{R}}

\newcommand{\ba}{\begin{eqnarray}}
\newcommand{\ea}{\end{eqnarray}}
\newcommand{\no}{\nonumber}

\newcommand{\bt}{\beta}

\newcommand{\mul}{\widetilde{\mathrm{mul}}}
\newcommand{\lag}{\langle}
\newcommand{\rag}{\rangle}
\def\d{{\partial}}

\begin{document} 

\title{\bf{Evaluation of Euler Number of Complex Grassmann Manifold  
$G(k, N)$ via Mathai-Quillen Formalism}} 
\author{\large Shoichiro Imanishi${}^{(1)}$,\; Masao Jinzenji${}^{(2)}$, Ken Kuwata${}^{(3)}$ \\
\\ 
${}^{(1)}$\it Division of Mathematics, Graduate School of Science \\
\it Hokkaido University \\
\it  Kita-ku, Sapporo, 060-0810, Japan\\
\\
${}^{(2)}$ \it Department of Mathematics,  \\
\it Okayama University \\
\it  Okayama, 700-8530, Japan\\
\\
\\ 
${}^{(3)}$\it Department of General Education \\
\it National Institute of Technology, Kagawa Colledge \\
\it  Chokushi, Takamatsu, 761-8058, Japan\\
\\
\it e-mail address: ${}^{(1)}$ shoichiro.imanishi@gmail.com \\ 
\it\hspace{2.2cm}${}^{(2)}$ pcj70e4e@okayama-u.ac.jp \\
\it\hspace{2.2cm}${}^{(3)}$ kuwata@t.kagawa-nct.ac.jp }

\maketitle 

\begin {abstract} 
In this paper, we provide a recipe for computing Euler number of Grassmann manifold $G(k,N)$ by using Mathai-Quillen formalism (MQ formalism)\cite{MQ1} and Atiyah-Jeffrey construction \cite{AJ1}. Especially, we construct path-integral representation of Euler number of $G(k,N)$. Our model corresponds to a finite dimensional toy-model of topological Yang-Mills theory which motivated Atiyah-Jeffrey construction. As a by-product, we construct free fermion realization of cohomology ring of $G(k,N)$.
\end {abstract}

\section{Introduction}
 Our aim of this paper is to compute Euler number of  finite dimensional manifold (Grassmann manifold $G(k,N)$) by using Mathai-Quillen formalism (MQ formalism) and Atiyah-Jeffrey construction  \cite{AJ1}. 
 Mathai-Quillen formalism \cite{MQ1} is a method for constructing Thom class of finite-dimensional vector bundle $E$ on a manifold 
$M$, that decreases like Gaussian along fiber direction.
This Thom class plays the same role as the original Thom class which has compact support along fiber direction. 
The Euler class of $E$ is given by pull-back of the Thom class by the section $s:M\rightarrow E$. It does not depend on choice of the section of $s$ as a cohomology class
because of homotopy invariance of de Rham cohomology. Therefore, we can compute the Euler number of $E$ by choosing a convenient section, which leads us to localization technique. 
In this paper, we choose zero-section as $s$. 
Suppose $M$ is given as an orbit space $X/G$ where a Lie group $G$ acts freely on a manifold $X$.  Atiyah and Jefferey extended the MQ-formalism for an orbit space $X/G$ \cite{AJ1}.
Strictly speaking, they extended it to tha case when $X/G$ is given by infinite-dimensional space of gauge equivalence classes of connections of $SU(2)$ bundle on a 4-dimensional manifold, in order to study mechanism behind Witten's  construction of topological Yang-Mills theory. 
We call this method ``Atiyah-Jeffrey construction''. In this paper, we apply Atiyah-Jeffrey construction to the case when $X/G$ is finite-dimensional complex Grassmann manifold $G(k,N)$ and $E$ is holomorphic tangent bundle of $G(k,N)$, 
and we construct path-integral representation of Euler number of $G(k,N)$. Our construction corresponds to a finite dimensional toy-model of topological Yang-Mills theory which motivated Atiyah-Jeffrey construction.     
\subsection{Our Model and Main Theorem}
 Complex Grassmann manifold $G(k,N)$ is a space which parametrizes $k$-dimensional linear subspaces of $N$-dimensional complex vector space. We denote by $U(k)$ unitary group that acts on  complex $k$-dimensional vector space and by $V_k(\mathbb{C}^N)$  Stiefel manifold of orthonormal $k$-frames in $\mathbb{C}^N$. Then $G(k,N)\simeq V_k(\mathbb{C}^N)/U(k)$. We introduce detailed 
informations of $G(k,N)$ in Section 3.  By applying Atiyah-Jeffrey construction to the case when $X$ and $G$ are given by $V_k(\mathbb{C}^N)$ and $U(k)$, Euler number of $G(k,N)$ is represented by the following finite-dimensional path-integral (unless otherwise noted, we use the Einstein convention).
\begin{theorem}{\bf (Main Theorem)}
Euler number of $G(k,N)$ is evaluated by finite dimensional path-integral:
\begin{align}
\chi(G(k,N))&=\binom{N}{k}=\chi(V_k(\mathbb{C}^N)/U(k))\notag \\
&=\bt\int_{V_k(\mathbb{C}^N)} Dz \int D\psi D\phi D\bar \phi DA D\eta D\chi DH   \,\,\omega \exp(-\mathcal{L}_{MQ}),
\end{align}
where the Lagrangian $\mathcal{L}_{MQ}$, the projection operator $\omega$ and the normalization factor $\bt$ are given by,  
\begin{align}
\mathcal{L}_{MQ}&=\delta<\chi,H>+\delta<\psi_A,A>+\frac{i}{2}\delta\Bigl[<\psi,iz{}^{t}\bar \phi>+\overline{<\psi,iz{}^{t}\bar \phi>}\Bigr] 
\label{gmq}\\
&=\sum^N_{s=1}\Bigl[\sum^k_{i=1}H^{\bar i}_{\bar s} H^i_s-\bigl(\chi^{\bar i}_{\bar s}(\gamma\delta_{\bar i j}+i\phi_{\bar i j} )\chi^j_s \bigr) +\psi^{\bar i}_{\bar s} \bar \phi _{\bar i l} \psi^{ l}_s \Bigr]+i\mathrm{tr}(\phi \bar \phi) \notag \\
&+\frac{1}{2}\sum_{s=1}^N \left\{ \psi^{\bar i}_{\bar s} z^{ j}_s-\psi^{ j}_{ s}  z^{\bar  i}_{\bar s}\right\}\eta_{\bar i j} +\mathrm{tr}(A^\dag(\gamma A+[i\phi,A])-\psi^\dag_A\psi_A ). \label{gmq2}, \\
\omega&=\prod_{i,j=1}^k \Bigl[\sum_{s=1}^N(\psi^{\bar i}_s z^j_s+z^{\bar i}_s \psi^j_s) \Bigr], ~~~
\bt=\frac{\prod_{j=0}^{k-1} j!}{2^{2k}  (-\pi) ^{k^2+kN}\pi^{kN+\frac{k(k+1)}{2}}(-1)^{\frac{k}{2}(k-1)}}.
\end{align}
\end{theorem}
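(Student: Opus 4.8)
The plan is to prove the Main Theorem in two logically independent stages and then reconcile them: first, to show that the finite-dimensional path integral on the right-hand side computes the Euler number of the holomorphic tangent bundle of $G(k,N)$; and second, to evaluate that Euler number independently as $\binom{N}{k}$, so that the two sides of the claimed equality agree. The structural fact I would lean on throughout is that $\mathcal{L}_{MQ}$ is, by the very form in which it is written, a sum of $\delta$-variations and hence $\delta$-exact. Provided the projection operator $\omega$ is $\delta$-closed, the integrand $\omega\exp(-\mathcal{L}_{MQ})$ is $\delta$-closed, and the integral is independent of the coupling $\gm$. This is the invariance I would exploit to localize and to deform to a convenient limit.

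For the first stage I would read the field content off the Mathai-Quillen dictionary: $z$ parametrizes the Stiefel manifold $V_k(\C^N)$, the fermion $\psi$ plays the role of the tangent/fiber differential, and the pair $(\chi,H)$ is the antighost–auxiliary multiplet attached to the section, which is taken to be the zero section as declared. First I would integrate out the Gaussian auxiliary field $H$, which with the vanishing section contributes only a normalization; then I would carry out the Berezin integrals over $\chi$ and $\psi$. With the zero section these Grassmann integrations assemble into the Pfaffian-type top form representing the Euler class of the tangent bundle, pulled back to $V_k(\C^N)$.

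Next I would implement the Atiyah-Jeffrey passage from the total space $X=V_k(\C^N)$ to the quotient $X/U(k)=G(k,N)$. Here the Lie-algebra-valued fields $\phi,\bar\phi$, together with $A,\psi_A$ and the Lagrange multiplier $\eta$, realize the $U(k)$-equivariant cohomology in the Weil model and effect the horizontal projection. Concretely, I would integrate out $\phi,\bar\phi$ to generate the curvature of the principal bundle $V_k(\C^N)\to G(k,N)$, integrate the multiplet $(A,\psi_A)$ to restrict to horizontal forms, and use $\omega$ — built from $\psi$ and $z$ and serving as the equivariant volume/Thom representative — to collapse the $z$-integral over the Stiefel manifold onto an integral of the equivariant Euler class over $G(k,N)$. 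Carrying the constant $\bt$ through every Gaussian and Berezin integration is what should make the overall coefficient exactly $1$, so that the path integral equals $\int_{G(k,N)} e(T G(k,N)) = \chi(G(k,N))$.

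Finally, to evaluate $\chi(G(k,N))=\binom{N}{k}$, I would invoke localization for the standard torus action on the Grassmannian: the fixed points are the $\binom{N}{k}$ coordinate $k$-planes, and since $G(k,N)$ has a Schubert cell decomposition with cohomology only in even degree, each contributes $+1$ with no cancellation, giving $\chi=\binom{N}{k}$. The hard part, I expect, is not the topology but the bookkeeping of the first two stages: verifying that the composite of the $H$-, $\phi$-, $A$-, and fermionic integrations, weighted by $\bt$ and $\omega$, reproduces the Euler-class integral with coefficient precisely $1$. The two most delicate points are the treatment of the $U(k)$ quotient — ensuring the vertical directions are divided out exactly once, with no residual power of the group volume — and the exact phase and $\pi$-factor accounting encoded in $\bt$, since any slip there would rescale the answer away from $\binom{N}{k}$.
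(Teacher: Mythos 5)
Your two-stage architecture (path integral $=$ Euler class integral, then $\chi(G(k,N))=\binom{N}{k}$ by the Schubert cell decomposition) matches the paper's intent, but the engine you propose for the first stage fails for this particular model. Your structural premise --- that $\mathcal{L}_{MQ}$ is $\delta$-exact, hence the integrand is $\delta$-closed and the integral is $\gamma$-independent, licensing deformation and localization --- is precisely what the central extension destroys: the paper notes explicitly that $\delta$ is \emph{not} nilpotent (indeed $\delta^2\langle\psi_A,A\rangle=2\gamma\langle\psi_A,A\rangle$) and that $\delta\mathcal{L}_{MQ}\neq 0$ for $\gamma\neq 0$. In the paper, $\gamma$-independence is not an input from a Ward identity; it emerges only at the very end, because the residual $\psi$-integral acts as integration of top-degree forms on $G(k,N)$, so only the top Chern class survives and the $\gamma$'s rescale away. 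You also misassign the roles of the multiplets: the horizontal projection onto the quotient is implemented by the $(\bar\phi,\eta)$ term $\frac{i}{2}\delta\bigl[\langle\psi,iz{}^{t}\bar\phi\rangle+\overline{\langle\psi,iz{}^{t}\bar\phi\rangle}\bigr]$ together with the insertion $\omega$ (which enforces the Stiefel constraint on $\psi$), whereas $(A,\psi_A)$ is a deliberately added \emph{new} ingredient, not part of the Atiyah--Jeffrey projection: its integration produces $1/\bigl(\gamma^{k}\prod_{l>j}(\gamma^2-(\lambda_l-\lambda_j)^2)\bigr)$, i.e.\ the factor $1/c(S\otimes S^{*})$, needed because $T^{\prime}G(k,N)\simeq Q\otimes S^{*}$ is a quotient of $\mathbb{C}^N\otimes S^{*}$. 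Correspondingly, the $(\chi,H)$ integration yields $\det(\gamma I_k+\Phi)^N\sim c(S^{*})^N$, not a Pfaffian representing the Euler class of $TG(k,N)$; your assembly step fails on rank counting alone, since $\chi$ has $kN$ components rather than $k(N-k)$.

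The deeper gap is that what you defer as ``bookkeeping that should make the coefficient exactly $1$'' is the entire second half of the paper, and no mechanism in your outline can supply it. After a $U(N)\times U(k)$ symmetry argument collapses the $z$-integral to a single point $z_0$ (times $\mathrm{vol}(V_k(\mathbb{C}^N))$), one is left with a pure fermion integral of $\det(\gamma I_k+\Phi^{\prime})^N/\bigl(\gamma^{k}\prod_{l>j}(\gamma^2-(\lambda_l-\lambda_j)^2)\bigr)$, and identifying this with $\int_{G(k,N)}c(T^{\prime}G(k,N))$ requires: (i) the vanishing $b_m=0$ for $m>N-k$ in $1/\det(I_k+t\Phi^{\prime})=\sum_m b_mt^m$, so that the fermion bilinears satisfy the defining relations of $H^{*}(G(k,N))$; (ii) the normalization $\frac{\prod_{j=0}^{k-1}j!}{\prod_{j=N-k}^{N-1}j!}\int D\psi^{\prime}\,(\det\Phi^{\prime})^{N-k}=1$, proved through nontrivial permutation combinatorics via the recursion $C(l+1,k)=\frac{(l+k)!}{l!}C(l,k)$; and (iii) injectivity of the resulting ring map $f:H^{*}(G(k,N))\to\mathbb{R}[\sigma_1,\ldots,\sigma_k]$, obtained from Poincar\'e duality together with (ii). Without (ii) there is no argument that $\beta$ is the correct constant; torus localization on the Grassmannian evaluates $\chi(G(k,N))$ but says nothing about whether the path integral equals it, which is the actual content of the theorem.
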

In the above Lagrangian, $\delta$ represents supersymmetric transformation whose detailed construction will be explained in Section 2. 
In this section, we briefly introduce our notations and supersymmetric transformation. Let 
\begin{align}
&z:=(\bm{z}^1  \cdots \bm{z}^k),\\
&\bm{z}^j:=~^t(z^j_1,z^j_2,\cdots,z^j_N)\,\,\;(j=1,2,\cdots,k),
\end{align}
be local complex coordinate system of $\mathbb{C}^{kN}$. Here, we introduce some notation for our model ( this notation is used  in this chapter, Chapters 3 and Chapter 4).
\begin{definition}\label{Def1}

Let $X$and $Y$ be matrix variables used in our Lagrangian.\\
(1) 
We represent complex conjugate of $X$ by $\overline{X}$ or $~^*X$. We also use the notation $(\overline{X})_{i}^j=:X_{\bar i}^{\bar j}$ or $(\overline{X})_{\bar i j}=:X_{i \bar j}$. In our Lagrangian, $\phi$ and $\bar \phi$ are diferent independent fields, and we use $~^*\phi$ and $~^*\bar \phi$ to represent the complex conjugate of $\phi$ and $\bar \phi$. \\
(2)
We represent transpose of  $X$ (resp.  adjoint of $X$) by $~^tX$ (resp. $X^\dag$). ($X^\dag :=~^{t *}X$).
\\
(3) 
We define inner product of matrix variables $X$and $Y$ of the same type  by $<X,Y>:=\mathrm{tr}((X^{\dag}) Y)$. 
\end{definition}
 Then $V_k(\mathbb{C}^N)$ is given by a set of points in $\mathbb{C}^{kN}$ that satisfy,
\begin{align}
\sum_{s=1}^N z^{\bar i}_{\bar s}z^j_{s} - \delta_{i j}=0 \,\;\;\;(i,j=1,2,\cdots,k),
\label{stiefel}
\end{align}
where $z^{\bar i}_{\bar s}$ represents $\overline{z^{i}_{s}}$, and $\delta_{i j}$ is the Kronecker's delta. $\psi$'s are complex grassmann variables that correspond to super-partner of $z$:
\begin{align}
&\psi:=(\bm{\psi}^1,\bm{\psi}^2,\cdots,\bm{\psi}^k),& &\bm{\psi}^j:=~^t(\psi^j_1,\psi^j_2,\cdots,\psi^j_N).&
\end{align}
$\phi$ and $\bar \phi$ are $k\times k$ Hermite matrices ($\phi=\phi^{\dag}$, $\bar \phi={\bar \phi}^{\dag}$). $i\phi$ and $i\bar\phi$ plays the role of generator of Lie algebra of the gauge group $U(k)$. Note that $\bar \phi$ is not complex conjugate of $\phi$. $\eta$ is super-partner of $\phi$. Hence $\eta$ is a grassmann and Hermite matrix ($\eta=\eta^{\dag}$).
\begin{align}
&\phi :=\left(\begin{array}{ccc}
\phi_{\bar 1 1}&\cdots &\phi_{\bar 1 k}\\
\vdots&\ddots& \vdots \\
\phi_{\bar k 1}& \cdots & \phi_{\bar k k}
\end{array} \right).&
&\bar \phi :=\left(\begin{array}{ccc}
\bar \phi_{\bar 1 1}&\cdots &\bar \phi_{\bar 1 k}\\
\vdots&\ddots& \vdots \\
\bar \phi_{\bar k 1}& \cdots &\bar \phi_{\bar k k}
\end{array} \right).& \\
&\eta:=\left(\begin{array}{ccc}
\eta_{\bar 1 1}&\cdots &\eta_{\bar 1 k}\\
\vdots&\ddots& \vdots \\
\eta_{\bar k 1}& \cdots &\eta_{\bar k k}
\end{array} \right).& 
\end{align}
$H$ is $k\times N$ complex matrix which plays the role of auxiliary variable in MQ-formalism. $\chi$ is super-partner of $H$.  $A$ is $k\times k$ complex matrix and $\psi_A$ is super-partner of $A$.
\begin{align}
&H:=\left(\begin{array}{ccc}
\bm{H}^1& \cdots & \bm{H}^k
\end{array}
\right), &
&\bm{H}^j:=~^t(H^j_1,\cdots,H^j_N).& \\
&\chi:=\left(\begin{array}{ccc}
\bm{\chi}^1& \cdots & \bm{\chi}^k
\end{array}
\right), &
&\bm{\chi}^j:=~^t(\chi^j_1,\cdots,\chi^j_N).&\\
&A:=
\left(\begin{array}{ccc}
A_{\bar 1 1}&\cdots &A_{\bar 1 k}\\
\vdots&\ddots& \vdots\\
A_{\bar k1}& \cdots &A_{\bar k k}
\end{array}
\right),& 
&\psi_A:=
\left(\begin{array}{ccc}
{\psi_{A}}^1_{1}&\cdots &{\psi_A}^k_{1}\\
\vdots&\ddots& \vdots\\
{\psi_{A}}^1_{k}& \cdots &{\psi_A}^k_k
\end{array}
\right).& 
\end{align} 

The supersymmetric transformation $\delta$ of our model is given as follows.
\begin{align}
&\delta z^i_s=\psi^i_s,& &\delta \psi^i_s= i\phi_{\bar i m}z^m_s,&&\delta \chi^i_s =H^i_s,&&\delta H^i_s=(\gamma\delta_{\bar i j}+i\phi_{\bar i j}) \chi^j_s,&\no\\ 
&\delta A^i_j=\psi^i_{A j}, &
&\delta \phi_{\bar i j}=0,&&\delta \bar \phi_{\bar i j} =\eta_{\bar i j},&&\delta \psi_A=\gamma A+[A,i\phi].
\label{gsusy}
\end{align}
In the above transformation, $\gamma$ is central charge of central extension of standard supersymmetric transformation. We assume that $\gamma$ is a non-zero constant. 
 Supersymmetric transformation for complex conjugate of $X$ is defined as complex conjugate of $\delta X$ ($\delta \overline{X}:=\overline{(\delta X)}$). $\delta$ behaves like a grassmann variable. 
By applying (\ref{gsusy}) to (\ref{gmq}), we can obtain explicit form of the Lagrangian (\ref{gmq2}).
Note that the $z$ variables satisfy (\ref{stiefel}), the defining equations of  the Stiefel manifold $V_k(\mathbb{C}^N)$. On the other hand, $\psi$, the super-partner of $z$, plays the role of the $1$-form $d z$ in the supersymmetric path-integral. Since $ (\bm{z}^{ i})^\dag \bm{z}^j - \delta^{\bar i j}=0$ ($i,j=1,2,\cdots,k$), we also have the constraint for $dz$:
\begin{eqnarray}
\sum_{s=1}^N (dz^{\bar i}_{\bar s}z^j_s+z^{\bar i}_{\bar s}dz^j_s)=0\;\;\;(i,j=1,2,\cdots,k).
\end{eqnarray}
 By identifying $dz^i_s$ and  $dz^{\bar i}_{\bar s}$ with  $\psi^i_s$ and $\psi^{\bar i}_{\bar s}$. respectively, 
the above constraint is realized by insertion of the following projection operator $\omega$:
 \begin{align}
\omega&=\prod_{i,j=1}^k \Bigl[\sum_{s=1}^N(\psi^{\bar i}_{\bar s} z^j_s+z^{\bar i}_{\bar s} \psi^j_s) \Bigr].
\end{align}
$\bt$ is the normalization factor that normalizes volume of $G(k,N)$ into $1$.    
\begin{align}
\bt&=\frac{\prod_{j=0}^{k-1} j!}{2^{2k}  (-\pi) ^{k^2+kN}\pi^{kN+\frac{k(k+1)}{2}}(-1)^{\frac{k}{2}(k-1)}}.
\end{align}
Note that the volume $U(k)$ is given by, 
\begin{align}
\mathrm{vol}(U(k))&=\prod_{j=1}^k \mathrm{vol}(S^{2j-1})=\frac{2^k\pi^\frac{k(k+1)}{2}}{\prod_{j=1}^{k-1}j!}. 
\end{align}
Then, we can rewrite $\bt$ into the following form: 
\begin{align}
\bt&=\frac{1}{2^{k}  (-\pi) ^{k^2+kN}\pi^{kN}(-1)^{\frac{k}{2}(k-1)}\mathrm{vol}(U(k))}.
\end{align}
Lastly, we remark that our Lagrangian is not strictly  invariant under the supersymmetric transformation, i.e., $\delta \mathcal{L}_{MQ} \neq 0$, because our supersymmetric transformation is not nilpotent.
For example, we have
the following relation.
\begin{align*}
\delta^2<\psi_A,A>&=\delta \{ \mathrm{tr}(A^\dag(\gamma A+[i\phi,A])-\psi^\dag_A\psi_A ) \} \\
&=\mathrm{tr}(\psi_A^\dag(\gamma A+[i\phi,A])+A^\dag(\gamma\psi_A+[i\phi,\psi_A])
-(\gamma A+[A,i\phi])^\dag \psi_A+\psi_A^\dag (\gamma A+[A,i\phi]) )\\
&=2\gamma<\psi_A ,A>.
\end{align*}
This comes from central extension of the standard supersymmetry. The reason why we introduce central extension will be given in the next subsection.   
In some sense, we consider central extension of supersymmetric transformation in order to obtain top Chern class of tangent bundle of $G(k,N)$ from the toy model version of topological Yang-Mills theory.

\subsection{The New Feature of Our Model}
Explicit evaluation of the above path-integral will be given in Section 3 and Section 4, but 
in this subsection, we briefly explain new feature of our model which has not appeared in the former literatures on MQ 
formalism.  Let us introduce well-known facts on Chern classes of  holomorphic tangent bundle $T^{\prime}G(k,N)$ of $G(k,N)$. 
 Let $S$ be  tautological bundle of  $G(k,N)$ whose fiber of $\Lambda\in G(k,N)$ is given by complex  $k$-dimensional subspace $\Lambda\subset \mathbb{C}^N$ itself ($\mathrm{rk} (S)=k$). Then universal quotient bundle $Q$ ( $\mathrm{rk}Q=N-k$) is defined by the following exact sequence
\begin{align}
0\to S \to \mathbb{C}^N  \to Q \to 0. \label{exact 1}
\end{align}
where $\mathbb{C}^N$ means trivial bundle $G(k,N)\times \mathbb{C}^N$. Since $T^{\prime}G(k,N)$ can be identified with $Q \otimes S^*$, 
we obtain the following exact sequence:
\begin{align}
0\to S\otimes S^* \to \mathbb{C}^N \otimes S^* \to T^{\prime}G(k,N) \to 0.
\end{align}
Hence total Chern class of $T^{\prime}G(k,N)$ is given by $\displaystyle{\frac{(c(S^*))^N}{c(S\otimes S^*)}}$ (Euler number of $G(k, N)$ is obtained from integration of top Chern class of $T^{\prime}G(k,N)$).  If we decompose $S^*$ formally by the line bundle $L_i$ ($i=1,2,\cdots,k$):
\ba
S^{*}=\mathop{\oplus}_{i=1}^{k}L_{i},
\ea
$c(S^*)$ is written as $\prod_{i=1}^k(1+t x_i),\;\;(x_i:= c_1(L_i))$.  From splitting principle of Chern classes, we also have $c(S\otimes S^*)=\prod_{l>j}(1-t^2(x_l-x_j)^2)$. 
Then we obtain,
\begin{align}
c(T^{\prime}G(k,N))=\frac{\prod_{i=1}^k(1+t x_i)^N}{\prod_{l>j}(1-t^2(x_l-x_j)^2)}.
\label{topch}
\end{align}
Top chern class is given as coefficent of $t^{k(N-k)}$. 
The new feature of our model is given as follows.\\ 
\\
{\bf  By introducing central extension of standard supersymmetry with central charge $\gamma(\neq 0)$, we can produce top Chern class of $T^{\prime}G(k,N)$
via the total Chern class (\ref{topch}).}\\
Precisely speaking, integration of $H$ and $\chi$ results in  $\displaystyle{\prod_{i=1}^{k}(\gamma+x_{i})^N}$ and integration of $A$ and $\psi_{A}$ produces $\displaystyle{\frac{1}{\gamma^{k}\prod_{l>j}(\gamma^2-(x_l-x_j)^2)}}$. But after integration of the Grassmann variable $\psi$, which corresponds to 
integration of differential form on $G(k,N)$, only the contribution from the top Chern class survives and we obtain the Euler number of 
$G(k,N)$. As we will see later, final result does not depend on $\gamma$ as long as $\gamma$ is non-zero. 
This construction was obtained after many try and errors, but we don't know whether there exists more natural construction 
that evaluates Euler number of $G(k,N)$ by using MQ formalism and Atiyah-Jeffrey construction.     

\subsection{Organization of the paper}
This paper is organized as follows.
 In Section 2, we give an overview of MQ formalism and  Atiyah-Jeffrey construction. In Section 3,  we construct Lagrangian that counts Euler number of $G(k,N)$ by applying these techniques. Then we  integrate out fields except for $\psi$ and show that the Euler number is represented by fermion integral of the Chern class represented by the matrix $i\phi$ whose $(i,j)$-element is given by $<\bm{\psi}^i,\bm{\psi}^j>$.  In Section 4, we prove our main theorem by showing that the representation of the Chern class by the fermion variables give the desired Chern class as an elements of  cohomology ring of $G(k,N)$.  Especially, validity of normalization factor $\beta$ will  be verified. We think that combinatorial aspects in the discussions in Section 4 is quite interesting for mathematicians.  

\vspace{1cm}

\noindent
{\bf  Acknowledgements}    We would like to thank Prof. M.~Yoshinaga and Prof. Y.~Goto for valuable discussions. Research of M.J. is partially supported by JSPS KAKENHI 
Grant No. JP17K05214.

\section{The Mathai Quillen formalism and the Atiyah Jeffrey construction}
In this section, we explain outline of Mathai-Quillen formalism and Atiyah Jeffrey construction. For more details, see the   
literatures \cite{Bla1,Lab1,Sako,WZ}.
\subsection{Overview of Mathai Quillen formalism}
 Mathai-Quillen formalism (MQ formalism) provides us with a recipe to construct Thom from of a vector bundle. Here, we briefly explain outline of Mathai-Quillen formalism. Let $\pi:E \to M$ be a vector bundle of rank $n$ on $n$-dimensional compact manifold $M$. 
We assume that each fiber $\pi^{-1}(x)$ has metric (or inner product) that varies smoothly as $x\in M$ varies. 
We denote by $\{f_1,\cdots,f_n\}$ a local orthonormal frame of $\pi^{-1}(U)$ ($U$ is some open subset of $M$) with respect to this metric. 
Let $\Omega^q (M,E)$ be vector space of $E$-valued differntial $q$-form on $M$ ($\Omega^0 (M,E) \simeq \Gamma(E)$ is vector space of smooth section of $E$) and $\triangledown^E: \Omega^q (M,E) \to \Omega^{q+1} (M,E)$ be a connection compatible with inner product on $E$. $\triangledown^E$ satisfies Leibniz rule for
 $g \in \Omega^q (M), s \in \Gamma(E)$
 \begin{align}
 \triangledown^E (gs) =(d_Mg)s+(-1)^q g\wedge (\triangledown^E s),
 \end{align}
where $d_M$ is exterier derivative on $M$ and $\Omega^q (M)$ is vector space of smooth $q$-form on $M$.
Let us define a connection form $\omega^j_i$ by, 
\begin{align}
\triangledown^E f_i=\omega_i^j f_j.
\end{align}
where $\omega_i^j=-\omega_j^i$. Curvature of $\triangledown^E$ is given by $(\triangledown^E)^2:=R^E$. $R^E$ for the local orthonormal frame is represented in the following form: 
\begin{align}
(R^E)_i^j f_j &:= (\triangledown^E)^2f_i=\triangledown^E(\omega_i^j f_j)=d_M\omega_i^j f_j -\omega_i^j \wedge \omega_j^kf_k \notag \\
&=(d_M\omega_i^j -\omega_i^k \wedge \omega_k^j)f_j. 
\end{align}

From now on, we assume $n=2m$. Let ${\bf u}=~^t(u^1,\cdots,u^{2m})$ be coordinates of fiber $\pi^{-1}(x)$ of $E$ and  
$\chi$ be  Grassmann variable: $\chi=~^t(\chi^1,\cdots,\chi^{2m})$, which correspond to super-partner of ${\bf u}$. And let $R_{ij}$ be $(R^{E})^j_i$ ($R_{i j}$ is skew-symmetric). Note that $R_{ij}$ is locally a 2-form on $M$.
With this set-up, Thom form $\Phi_\nabla(E)$ constructed in MQ-formalism is given as follows. 
\begin{align}
\Phi_\nabla(E)=\frac{1}{(2\pi)^m} e^{-|\bm{u}|^2/2} \int D\chi \exp \left( \frac{1}{2} ~^t \chi R \chi + i~^t(\nabla u)\chi \right). \\
|\bm{u}|^2:=\sum_{i=1}^{2m} (u^i)^2 , 
~^t(\nabla u)\chi:=\sum_{i=1}^{2m}(\nabla u)^i\chi^i:=\sum_{i=1}^{2m}(d_Eu^{i}+\omega^{ i}_j u^j)\chi^i,
\end{align}
where $d_E=d_M+ \sum_{i=1}^{2m}du^i \frac{\d}{\d u^i}$ is exterior derivative on $E$. 
Let $\mathcal{L}_0$ be a Lagrangian defined by,  
\begin{align}
\mathcal{L}_0:=|\bm{u}|^2/2-\frac{1}{2} ~^t \chi R \chi - i~^t(\nabla u)\chi. 
\end{align}
Then we define supersymmetric transformation as follows:
\begin{align}
\delta \chi^i := i u^i,\,\delta u^i := \nabla u^i.
\end{align}
Here, we assume the following.
 \begin{itembox}[l]{Assumptions for $\delta$}
 1. $\delta$ behaves like fermionic variable. Hence $\delta$ is anti-commutative with $d_E$.
 
 2. $\delta$ acts only fiber variables and $\delta \omega=\delta R=0$.
 \end{itembox}
\vspace{1em}
Then we can show that $\mathcal{L}_0$ invariant under $\delta$ transformation.  
\begin{align}
\delta(|\bm{u}|^2/2)&=\sum_{i=1}^{2m} u^i (\nabla u)^i. \\
\delta(-\frac{1}{2} ~^t \chi R \chi) &=\frac{1}{2} ( -\delta(\chi^i) R_{ i j} \chi^j+\chi^i R_{i j} \delta(\chi^j)) =- i u^i R_{i j} \chi^j. \\
\delta(- i~^t(\nabla u)\chi)&= - i\sum_{i=1}^{2m}(\delta((\nabla u)^i)\chi^i- (\nabla u)^i\delta(\chi^i)) \\
&=- i\sum_{i=1}^{2m}((-d_E\delta(u^i)-\omega^{ i}_j\delta(u^j))\chi^i- i(\nabla u)^iu^i) \\
&=- i\sum_{i=1}^{2m}((-d_E(\omega^{i}_j u^j)-\omega^{ i}_j(d_E u^j+\omega^{ j}_k u^k))\chi^i- i(\nabla u)^iu^i)\\
&=- i\sum_{i=1}^{2m}(-(d_E(\omega^{ i}_j) u^j -\omega^{ k}_j\omega^{ i}_k u^j))\chi^i- i(\nabla u)^iu^i) \\
&\text{~~~(Since $\omega^{ i}_j$ is locally a $1$-form on $M$, $d_E(\omega^{ i}_j) =d_M\omega^{ i}_j$.)}\\
&=\sum_{i=1}^{2m}(iR_{ j i}u^j\chi^i-(\nabla u)^iu^i)
=iR_{ j i}u^j\chi^i-\sum_{i=1}^{2m}(\nabla u)^iu^i.
\end{align}
Hence $\delta \mathcal{L}_0 =0$. 
Let us  integrate out $\Phi_\nabla(E)$ on a fiber $\pi^{-1}(x)$. Since $x\in M$ is fixed, $R=\omega=0$.  
Then we can derive,
\begin{align}
\int_{\pi^{-1}(x)} \Phi_\nabla(E)=1.
\label{t1}
\end{align}
Explicit derivation is given as follows.
\begin{align}
\int_{\pi^{-1}(x)} \Phi_{\nabla}(E) &=(2\pi)^{-m} (-1)^{m+\frac{(2m-1)(2m)}{2}} \int_{\pi^{-1}(x)} e^{-|\bm{u}|^2/2} \int \mathcal{D} \chi  \prod_{a=1}^{2m} (1+idu^a \chi^a)  \notag \\
&= \frac{(-1)^{\frac{(2m-1)(2m)}{2}}}{(2\pi)^m } \int_{\pi^{-1}(x)} e^{-|\bm{u}|^2/2} \int \mathcal{D} \chi  (du^1 \chi^1) \cdots (du^{2m} \chi^{2m}) \notag \\
&= \frac{1}{(2\pi)^m }   \int_{\pi^{-1}(x)} e^{-|\bm{u}|^2/2}  du^1  \cdots du^{2m}=1.
\end{align}
(\ref{t1}) is one of the two features that characterizes Thom form of $E$. The other one is given by,
\ba
s_{0}^{*}(\Phi_\nabla(E))=e_{0,\nabla}(E),
\ea 
where $s_{0}:M\rightarrow E$ is the zero section of $E$ and $e_{0,\nabla}(E)$ is Euler class of $E$. This can be easily seen as follows:
\ba
s_{0}^{*}(\Phi_\nabla(E))&=&\frac{1}{(2\pi)^{m}}\int D\chi \exp \left( \frac{1}{2} {}^t \chi R \chi \right)\no\\
                               &=&\frac{1}{(2\pi)^{m}} \mbox{Pfaff}(R)\no\\
                               &=&e_{0,\nabla}(E),
\ea
where we used that $u^{i}(s_{0}(x))=0$ and that $R$ is skew symmetric. Integration of Euler class $e_{0,\nabla}(E)$ on $M$ gives Euler 
number of $E$, which is denoted by $\chi(E)$. Therefore, we have,
\ba
\chi(E)=\int_{M}s_{0}^{*}(\Phi_\nabla(E)).
\ea

At this stage, we include auxiliary bosonic variable $H^i$ 
and modify the supersymmetric transformation as follows.
\ba
\delta \chi^i :=H^i,\;\;\delta H^i :=R_{ij} \chi^j.
\label{susy}
\ea
Now we intoroduce $ \Psi :=\left< \chi, \frac{H}{2}-i u \right>$ where $<A,B>:=~^tAB$ is inner product of $A$ and $B$. We also use  
the notation $|A|^2:=<A,A>$.
Then $\delta \Psi$ is given as follows.
\begin{align}
\delta \Psi & =\delta \left< \chi, \frac{H}{2}-i u \right>=~^t(\delta \chi) \left(\frac{H}{2}-i u \right)-~^t\chi \left(\frac{\delta H}{2}-i (\delta u) \right) \notag \\
&=\frac{1}{2}\sum_{i=1}^{2m}(H^i-i u^i)^2 +\frac{|\bm{u}|^2}{2}-\frac{1}{2} ~^t \chi R \chi - i~^t(\nabla u)\chi
=\frac{1}{2}|H-i u|^2+\mathcal{L}_0.
\end{align}
$\delta \Psi$ can be identified with $\mathcal{L}_{0}$ modulo the relation $H^a=i u^a$ (equation of motion of $H$).
We can easily see that $\Phi_{\nabla}(E)$ is obtained by integrating $\exp(-\delta \Psi)$ by $H$ and $\chi$.
\begin{align}
\Phi_{\nabla}(E) &:= \frac{1}{(2\pi)^{2m}} \int \mathcal{D}\chi \int \mathcal{D}H \exp\qty( - \frac{1}{2} | H- iu |^2 -\mathcal{L}_0 ) \notag \\
&=\frac{1}{(2\pi)^{2m}} \int \mathcal{D}\chi \int \mathcal{D}H \exp\qty( - \delta \left\lag \chi, \frac{H}{2} -iu  \right\rag ).
\end{align}
Let $s:M\rightarrow E$ is any smooth section of $E$, then we can easily derive, 
\begin{align}
e_{s, \nabla}(E) := s^* \Phi_{\nabla}(E) = \frac{1}{(2\pi)^{2m}} \int \mathcal{D}\chi \mathcal{D}H \exp\qty( - \delta \left\lag \chi, \frac{H}{2} -is  \right\rag ). \label{eq:e_s,conn} 
\end{align}
Since $s$ is homotopic to the zero section $s_{0}$, $e_{s,\nabla}$ belongs to the same cohomology class as $e_{0,\nabla}(E)$. 
Hence we obtain the following equality.
\ba
\chi(E)&=& \int_{M}e_{0,\nabla}(E)\no\\
         &=& \int_{M}e_{s,\nabla}(E)\no\\
&=&\frac{1}{(2\pi)^{2m}} \int_{M} \mathcal{D}x \mathcal{D}\psi \mathcal{D}\chi \mathcal{D}H \exp\left( - \delta \left\lag \chi, \frac{H}{2} -is  \right\rag \right). \label{eq:MQ_original_form}
\ea
where $x$ is local coordinate of $M$ and $\psi$ is fermion variable that plays the role of differential form $dx$ on $M$.

\subsection{The Case when $M$ is an Orbit Space}
In this subsection, we search for Lagrangian that produces Euler class of $M$ when $M$ is given as an orbit space  
, by using  Atiyah-Jeffrey construction. Atiyah-Jeffrey construction is an extension of MQ formalism for 
vector bundle whose base space $M$ is given as an orbit space $X/G$ ($G$: Lie group) . 
Originally, Atiyah and Jeffrey constructed their formalism to study mathematical background to Witten's construction of 
Lagrangian of Topological Yang-Mills theory \cite{AJ1,Wit1}. In \cite{AJ1}, the orbit space is given by
$\mathcal{A}/\mathcal{G}$ where $\mathcal{A}$ is an infinite dimensional space of $SU(2)$ connections ($A_{\alpha}$) on a compact 4-dimensional manifold $M_{4}$ and $\mathcal{G}$ is gauge transformation group that acts on $\mathcal{A}$. 
The vector bundle $\pi: \mathcal{E}\rightarrow \mathcal{A}/\mathcal{G}$ is not clearly stated in \cite{AJ1}, but  
the section $s:\mathcal{A}/\mathcal{G} \to \mathcal{E}$ is given by, 
\ba
s(A_{\alpha})&=&F_{\alpha\beta}+\frac{1}{2}\epsilon_{\alpha\beta\gamma\delta}F^{\gamma\delta},\no\\
              &&(F_{\alpha\beta}=\d_{\alpha}A_{\beta}-\d_{\beta}A_{\alpha}+[A_{\alpha},A_{\beta}]).
\ea
Zero locus of the above section is nothing but the moduli space of anti self-dual instantons on $M_{4}$, and it connects 
topological Yang-Mills theory with Donaldson invariants. Then how was the supersymmetric transformation (\ref{susy}) 
modified to fit into the situation of an orbit space. Let us consider first behavior of $\delta^{2}$ of the transformation 
(\ref{susy}). 
\ba
\delta^2 \chi^i :=R_{ij}\chi^j,\;\;\delta^2 H^i :=R_{ij} H^j.
\label{sr}
\ea
Therefore, $\delta^{2}$ corresponds to ``infinitessimal rotation of fiber coordinates generated by $R_{ij}$''. 
Then Atiyah and Jeffrey modified the above relation into the following:   
\begin{align}
\delta^2=\delta_\phi,
\end{align}
where $\delta_\phi$ is the infinitesimal gauge transformation generated by $\phi$. 
It corresponds to infinitesimal rotation of infinite dimensional Lie group $\mathcal{G}$.   
Note that $\delta$ is nilpotent when we consider orbit space $\mathcal{A}/\mathcal{G}$. Then $\delta$ can be regarded as 
 infinite dimensional version of equivariant derivative $d-\iota_\omega,\,\omega \in \mathrm{Lie}(\mathcal{G})$ on $X/G$.
\begin{align}
\delta \Leftrightarrow d-\iota_\omega, \\
\delta^2=\delta_\phi \Leftrightarrow (d-\iota_\omega)^2=-d \iota_\omega-\iota_\omega d=-\mathcal{L}_\omega,
\end{align}
where $\mathcal{L}_\omega$ is the Lie derivative. With these considerations, $\delta$ transformation is modified as follows.
\begin{align}
&\delta x=\psi,& &\delta \psi= \delta_\phi x,&&\delta \chi =H,&&\delta H=\delta_\phi \chi,
&\delta \phi=0,&
\end{align}
where $x$ is coordinate of $\mathcal{A}$, $\psi$ is the fermion coordinate that plays the role of $dx$, $\chi$ is fermion coordinate of $\mathcal{E}$ and $H$ is auxiliary field and super-partner of $\chi$. 

In the previous subsection, $\mathcal{L}_{MQ}$, the Lagrangian obtained from the MQ-formalism was represented as  $\delta \Psi :=\delta <\chi,\frac{H}{2}-is>$. In Atiyah-Jeffrey construction, Lagrangian $\mathcal{L}_{MQ}$, which is expected to produce Euler number $\chi( \mathcal{E})$ of the vector bundle $\mathcal{E}$ on $\mathcal{A}/\mathcal{G}$,
\ba
\chi(\mathcal{E})\stackrel{?}{=}\int\mathcal{D}x\mathcal{D}\psi\mathcal{D}H\mathcal{D}\chi\mathcal{D}\phi\exp(-\mathcal{L}_{MQ}),
\label{ajmq}
\ea 
is given by,
\ba
\mathcal{L}_{MQ}=\delta( \Psi+\Psi_{\mathrm{proj}}).
\ea
The term $\delta \Psi_{\mathrm{proj}}$ plays the role of projecting out gauge horizontal direction (direction parallel to orbit of 
$\mathcal{G}$) in integrating $\psi$ over $T^{*}\mathcal{A}$. In other words, $\exp(-\delta \Psi_{\mathrm{proj}})$ can 
be regarded as projection operator from $T^{*}\mathcal{A}$ to $T^{*}(\mathcal{A}/\mathcal{G})$. 
In (\ref{ajmq}), we add $?$ above $=$ because $\mathcal{E}$, the vector bundle of infinite rank, is not clearly stated in \cite{AJ1} and 
$\chi(\mathcal{E})$ is not well-defined. Indeed, the Lagrangian $\mathcal{L}_{MQ}$ is used to produce Donaldson invariants of $M_{4}$ 
in context of topological Yang-Mills theory \cite{Wit1}. 

Let us explain outline of construction of $\Psi_{\mathrm{proj}}$. On $\mathcal{G}$-principle bundle $\mathcal{A} \to \mathcal{A}/\mathcal{G}$, group action for $x \in \mathcal {A}$ is given by $\mathcal{G}$. Let  $C:\mathfrak{g}\; (\mbox{Lie algebra of } \mathcal{G} ) \to T_x \mathcal{A}$ be differentiation of group action on $x \in \mathcal{A}$. Let $\theta$ be 
an element of $\mathfrak{g}$. Then, $C\theta$ is given by,
\begin{align}
C\theta= \delta_{\theta} x.
\end{align}
$C^\dag$ is defined as adjoint operator of C,
\begin{align}
\left< C^\dag \psi,\theta \right>=\left<\psi,C \theta \right>,
\end{align}
where $\left<*,*\right>$ in the l.h.s. is inner product of $\mathfrak{g}$ and  $\left<*,*\right>$ in the r.h.s is inner product of $T^{+}\mathcal{A}$.
If $\psi \in \mathrm{Ker} C^\dag$, we obtain
\begin{align}
C^\dag \psi=0 \Leftrightarrow 
0=\left< C^\dag \psi,\theta \right>=\left<\psi,C \theta \right>=\left<  \psi,\delta_\theta x \right>.
\end{align}
So, $\mathrm{Ker}C^\dag \subset T^{*}\mathcal{A}$ corresponds to vertical direction of gauge transformation.
Then we have to restrict integration of $\psi$ into $\mathrm{Ker}C^\dag$.
At this stage, we introduce additional boson field $\bar \phi$ and fermion field $\eta$.
Supersymmetric transformation for these fields are defined by,
\begin{align}
&\delta \bar \phi =\eta,&& \delta \eta= \delta_ \phi \bar \phi,&
\end{align}
where $\phi$ is gauge transformation parameter (element of $\mathfrak{g}$). Then $\Psi_{\mathrm{proj}}$ is defined in the following form.
\begin{align}
\Psi_{\mathrm{proj}} := \left< \psi, C\bar \phi \right>.
\end{align}
We obtain
\begin{align}
\delta \Psi_{\mathrm{proj}} =\delta \left<C^\dag \psi, \bar \phi \right>
=\left<\delta(C^\dag \psi), \bar \phi \right>-\left<C^\dag \psi, \delta(\bar \phi) \right>
=\left<\delta(C^\dag \psi), \bar \phi \right>-\left<C^\dag \psi, \eta \right>.
\end{align}
From equation of motion of $\eta$, we obtain
 \begin{align}
 \frac{\delta}{\delta \eta} \delta \Psi_{\mathrm{proj}}=0 \Leftrightarrow C^\dag \psi=0.
 \end{align}
Hence multiplying $\exp\left (-\delta \Psi_{\mathrm{proj}} \right)$ can restrict $\psi$ integration to $\mathrm{Ker}C^\dag$.
When we consider the case of zero section ($s\equiv 0$), the Lagrangian becomes, 
\begin{align}
\mathcal{L}_{MQ}=\delta \Bigl<\chi,\frac{H}{2}\Bigr>+i \delta\left<\psi,C\bar \phi\right>,
\end{align}
where supersymmetric transformation is given by,
\begin{align}
&\delta x=\psi,& &\delta \psi= \delta_\phi x,&&\delta \chi =H,&&\delta H=\delta_\phi \chi,&  \notag \\
&\delta \phi=0,&&\delta \bar \phi =\eta,&&\delta \eta=\delta_\phi \bar \phi.
\end{align}
This is our starting point of construction of the Lagrangian (\ref{gmq}) and of the supersymmetric transformation (\ref{gsusy}).
But what we aim to compute is $\chi(G(k,N))=\chi(T^{\prime}G(k,N))$, we have to modify the above settings further. 
Some points of modification are already mentioned in  Subsection 1.1 and Subsection 1.2. We will discuss again points of 
modification in Subsection 3.2.

\section{Construction of Lagrangian and First Half of Evaluation of Path-Integral}
\subsection{The Grassmann Manifold }
 In order to apply MQ formalism on the Grassmann manifold, it is necessary to consider the Grassmann manifold $G(k,N)$ as an orbit space $X/G$ \cite{Wit2}. $G(k,N)$ is a space which parametrizes all $k$-dimensional linear subspaces of the $N$-dimensional complex vector space $\mathbb{C}^N$. 
\begin{align}
G(k,N):= \{ W \subset \mathbb{C}^N | \dim_{\mathbb{C}} W=k\}.
\end{align}  
Then we introduce the Stiefel manifold $V_k(\mathbb{C}^N)$. A point of $V_k(\mathbb{C}^N)$ is given by a set of $k$ unit vectors 
in $\mathbb{C}^N$ which are orthogonal to each other.
 \begin{align}
 V_k(\mathbb{C}^N)&:=\{ (\bm{z}^1,\cdots,\bm{z}^k) \in \mathbb{C}^{kN}|\; \bm{z}^i \in \mathbb{C}^N,\,<\bm{z}^i, \bm{z}^j>=\delta_{i j} \,\,(i,j=1,\cdots,k)\} 
 \end{align}
Let $<\bm{z}^1,\cdots, \bm{z}^k>_{\mathbb{C}}$ be a vector space spanned by $\bm{z}^1, \cdots, \bm{z}^k$. 
Two points $(\bm{z}^1, \cdots ,\bm{z}^k)$ and $(\bm{z}^{\prime 1}, \cdots, \bm{z}^{\prime k})$ in $V_k(\mathbb{C}^N)$ satisfy the relation $<\bm{z}^1,\cdots, \bm{z}^k>_{\mathbb{C}}= <\bm{z}^{\prime 1},\cdots, \bm{z}^{\prime k}>_{\mathbb{C}}$
if and only if there exists $(U_{i j})_{1\leq i,j \leq k} \in U(k)$
that satisfy $\bm{z}^i=\sum_{j=1}^k U_{i j} \bm{z}^{\prime j}\,\, (i=1,\cdots,k)$.
Hence we can identify $G(k,N)$ with the orbit space $V_k(\mathbb{C}^N)/U(k)$.

Since $V_k(\mathbb{C}^N)$ is regraded as quotient space $U(N)/U(N-k)$ and volume of $U(N)$ is given by 
$\displaystyle{\prod_{j=1}^{N}\mathrm{vol}(S^{2j-1})}$ ($S^{2j-1}$ is the $(2j-1)$-dimensional unit sphere) \cite{Cam1, Fuj1}, we obtain,    
\begin{align}
\mathrm{vol}(V_k(\mathbb{C}^N))=\prod_{j=N-k+1}^{N}\mathrm{vol}(S^{2j-1})=\frac{2^k (\pi )^{kN-\frac{k(k-1)}{2}}}{\displaystyle \prod_{j=N-k+1}^{N} (j-1)!}.
\label{volst}
\end{align}

\subsection{Lagrangian that counts Euler Number $\chi(G(k,N))$} \label{LEN} 
In this section, we construct the Lagrangian for $\mathcal{L}_{MQ}$ by applying MQ-formalism and Atiyah-Jeffrey construction
outlined in the previous section to the orbit space $G(k,N)=V_k(\mathbb{C}^N)/U(k)$ From now on, we set section $s$ used 
in Subsection 2.2 to zero section.

Let us mention again the fields used in our Lagrangian.
The variable $z$ that describes a point of $V_k(\mathbb{C}^N)$ is given as follows.
\begin{align}
&z:=(\bm{z}^1\cdots \bm{z}^k) \in  \mathbb{C}^{kN},\\
&\bm{z}^j:=~^t(z^j_1,z^j_2,\cdots,z^j_N),\,\,(j=1,2,\cdots,k),\,\,
 (\bm{z}^{ i})^\dag \bm{z}^j - \delta^{\bar i j}=0 \,(i,j=1,2,\cdots,k).\label{eq1}
\end{align}
This corresponds to the variable $x$ in Subsection 2.2.
 Other fields are represented in the following form.
\begin{align}
&\psi:=(\bm{\psi}^1 \cdots \bm{\psi}^k),& &\bm{\psi}^j:=~^t(\psi^j_1,\psi^j_2,\cdots,\psi^j_N).&
\\&\phi :=\left(\begin{array}{ccc}
\phi_{\bar 1 1}&\cdots &\phi_{\bar 1 k}\\
\vdots&\ddots& \vdots \\
\phi_{\bar k 1}& \cdots & \phi_{\bar k k}
\end{array} \right).&
&\bar \phi :=\left(\begin{array}{ccc}
\bar \phi_{\bar 1 1}&\cdots &\bar \phi_{\bar 1 k}\\
\vdots&\ddots& \vdots \\
\bar \phi_{\bar k 1}& \cdots &\bar \phi_{\bar k k}
\end{array} \right),& \\
&\eta:=\left(\begin{array}{ccc}
\eta_{\bar 1 1}&\cdots &\eta_{\bar 1 k}\\
\vdots&\ddots& \vdots \\
\eta_{\bar k 1}& \cdots &\eta_{\bar k k}
\end{array} \right).& \\
&\chi:=\left(\begin{array}{ccc}
\bm{\chi}^1& \cdots & \bm{\chi}^k
\end{array}
\right), &
&\bm{\chi}^j:=~^t(\chi^j_1,\cdots,\chi^j_N).&\\
&H:=\left(\begin{array}{ccc}
\bm{H}^1& \cdots & \bm{H}^k
\end{array}
\right), &
&\bm{H}^j:=~^t(H^j_1,\cdots,H^j_N).& 
\end{align} 
$\phi$ and $\bar \phi$ are Hermite matrix $(\phi_{\bar i j}=\phi_{j \bar i}, \bar \phi_{\bar i j}=\bar \phi_{j \bar i})$. They are the generators of the elements of $U(k)$. $\bar \phi$ is not the complex conjugate of $\phi$. These fields play the same roles as the 
corresponding fields in Subsection 2.2.
Next, we introduce the field $A$ and its superpartner $\psi_A$ in order to produce the part $\displaystyle{\frac{1}{c(S\otimes S^{*})}}$ in the total Chern class $\displaystyle{\frac{c(S^{*})^{N}}{c(S\otimes S^{*})}}$.
\begin{align}
&A:=
\left(\begin{array}{ccc}
A_{\bar 1 1}&\cdots &A_{\bar 1 k}\\
\vdots&\ddots& \vdots\\
A_{\bar k1}& \cdots &A_{\bar k k}
\end{array}
\right),& 
&\psi_A:=
\left(\begin{array}{ccc}
\psi^1_{A1}&\cdots &\psi^k_{A1}\\
\vdots&\ddots& \vdots\\
\psi^1_{Ak}& \cdots &\psi^k_{Ak}
\end{array}
\right).& 
\end{align}
At this stage, we define supersymmetric transformation for each variables.  Supersymmetry are represent in following forms.
\begin{align}
&\delta z^i_s=\psi^i_s,& &\delta \psi^i_s= i\phi_{\bar i m}z^m_s,&&\delta \chi^i_s =H^i_s,&&\delta H^i_s=(\gamma\delta_{\bar i j}+i\phi_{\bar i j}) \chi^j_s,&\no\\ 
&\delta A^i_j=\psi^i_{A j}, &
&\delta \phi_{\bar i j}=0,&&\delta \bar \phi_{\bar i j} =\eta_{\bar i j},&&\delta \psi_A=\gamma A+[A,i\phi].
\label{susy1}
\end{align}
These are fundamentally obtained from applying the construction in Subsection 2.2 to the orbit space $G(k,N)=V_k(\mathbb{C}^N)/U(k)$
but we intoroduce central charge $\gamma(\neq 0)$ and modify the transformation of $H$ and $\psi_{A}$ from the standard version $\delta H^i_s=i\phi_{\bar i j} \chi^j_s$ and 
$\delta \psi_{A}=[A,i\phi]$. This modification corrsponds to central extension of supersymmetry algebra.

With these set-up's Lagrangian is defined as follows:
\begin{align}
\mathcal{L}_{MQ}=\delta <\chi,H >+\delta<\psi_A,A>+\frac{i}{2}\delta\{<\psi,C\bar \phi>+~^*<\psi,C\bar \phi>\}.
\end{align}
where $C\bar \phi=\delta_{\bar \phi} z=iz{}^{t}\bar \phi$. Except for the term $\delta<\psi_A,A>$, this Lagrangian is 
obtained from applying discussion of Subsection 2.2  with $s=0$ to the orbit space $G(k,N)=V_k(\mathbb{C}^N)/U(k)$. 
Then we can derive (\ref{gmq2}) from (\ref{gmq}) by straightforward computation.
Lastly, we write again the path-integral that will be evaluated in the remaining part of this paper.
\begin{align*}
Z_{MQ}:=&\bt\int_{V_k(\mathbb{C}^N)} Dz \int D\psi D\phi D\bar \phi DA D\eta D\chi DH   \,\,\omega \exp(-\mathcal{L}_{MQ}) \\
\mathcal{L}_{MQ}&=\delta<\chi,H>+\delta<\psi_A,A>+\frac{i}{2}\delta\Bigl[<\psi,iz\bar \phi>+\overline{<\psi,iz\bar \phi>}\Bigr] 
\label{gmq3}\\
&=\sum^N_{s=1}\Bigl[\sum^k_{i=1}H^{\bar i}_{\bar s} H^i_s-\bigl(\chi^{\bar i}_{\bar s}(\gamma\delta_{\bar i j}+i\phi_{\bar i j} )\chi^j_s \bigr) +\psi^{\bar i}_{\bar s} \bar \phi _{\bar i l} \psi^{ l}_s \Bigr]+i\mathrm{tr}(\phi \bar \phi) \notag \\
&+\frac{1}{2}\sum_{s=1}^N \left\{ \psi^{\bar i}_{\bar s} z^{ j}_s-\psi^{ j}_{ s}  z^{\bar  i}_{\bar s}\right\}\eta_{\bar i j} +\mathrm{tr}(A^\dag(\gamma A+[i\phi,A])-\psi^\dag_A\psi_A ).
\end{align*}
As we have already mentioned,  $\bt$ is normalization factor for Chern class. and $\omega$ is projective operator along tangent space of $V_k(\mathbb{C}^N)$. They are given as follows. 
\begin{align}
\omega&=\prod_{i,j=1}^k \Bigl[\sum_{s=1}^N(\psi^{\bar i}_s z^j_s+z^{\bar i}_s \psi^j_s) \Bigr], ~~~
\bt=\frac{\prod_{j=0}^{k-1} j!}{2^{2k}  (-\pi) ^{k^2+kN}\pi^{kN+\frac{k(k+1)}{2}}(-1)^{\frac{k}{2}(k-1)}}.
\end{align}

\subsection{First Half of Evaluation of the Path-Integral} \label{CEN}

\subsubsection{$U(N)\times U(k)$ Symmetry of the Lagrangian}

\begin{lemma}
\label{transz}
\begin{align}
Z_{MQ}&=\bt\int_{V_k(\mathbb{C}^N)} Dz \int D\psi D\phi D\bar \phi DA D\eta D\chi DH   \,\,\omega \exp(-\mathcal{L}_{MQ})\no \\
&=\bt \mathrm{vol}(V_k(\mathbb{C}^N))\int  D\psi D\phi D\bar \phi DA D\eta D\chi DH   \,\,\omega^{\prime} 
\exp(-\mathcal{L}_{MQ}^{\prime}). \no\\
\mathcal{L}_{MQ}^{\prime}&=\sum^N_{s=1}\Bigl[\sum^k_{i=1}H^{\bar i}_{\bar s} H^i_s-\bigl(\chi^{\bar i}_{\bar s}(\gamma\delta_{\bar i j}+i\phi_{\bar i j} )\chi^j_s \bigr) +\psi^{\bar i}_{\bar s} \bar \phi _{\bar i l} \psi^{ l}_s \Bigr]+i\mathrm{tr}(\phi \bar \phi) \notag \\
&+\frac{1}{2}\sum_{i,j=1}^k \left\{ \psi^{\bar i}_{\bar j} -\psi^{ j}_{ i}  \right\}\eta_{\bar i j} +\mathrm{tr}(A^\dag(\gamma A+[i\phi,A])-\psi^\dag_A\psi_A ). \no\\
\omega^{\prime}&=\prod_{i=1}^k \prod_{j=1}^k \Bigl[(\psi^{\bar i}_{\bar j} + \psi^j_i) \Bigr].
\end{align}
\end{lemma}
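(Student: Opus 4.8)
The plan is to exploit the fact that all $z$-dependence of the integrand enters only through $U(N)$-invariant contractions over the spatial index $s$, together with the transitivity of the $U(N)$-action on $V_k(\mathbb{C}^N)$. First I would record the relevant symmetry. Let $g\in U(N)$ act on every field carrying an $s$-index by $z^i_s\mapsto \sum_t g_{st}z^i_t$, and likewise on $\psi^i_s,\ H^i_s,\ \chi^i_s$ (with the conjugate components rotated by $g^*$), while leaving the purely frame-indexed fields $\phi,\bar\phi,\eta,A,\psi_A$ untouched. Every term of $\mathcal{L}_{MQ}$ and every factor of $\omega$ is built from $s$-contractions of the form $\sum_s \psi^{\bar i}_{\bar s}z^j_s$, $\sum_s H^{\bar i}_{\bar s}H^i_s$, $\sum_s \chi^{\bar i}_{\bar s}(\cdots)\chi^j_s$, and each is invariant because $\sum_s \bar g_{st}g_{su}=(g^\dag g)_{tu}=\delta_{tu}$; the terms $i\,\mathrm{tr}(\phi\bar\phi)$ and $\mathrm{tr}(A^\dag(\gamma A+[i\phi,A])-\psi_A^\dag\psi_A)$ carry no $s$-index and are trivially invariant. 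The same computation, now using $U U^\dag = I$ on the frame index $i$ together with the adjoint transformations $\phi\mapsto U\phi U^\dag$ etc., establishes the full $U(N)\times U(k)$ invariance advertised in the subsection title, although only the $U(N)$ factor is needed below.

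Next I would check that the fiber measures are unchanged under this rotation. The bosonic Lebesgue measure $DH$ is invariant since $U(N)$ acts by (real) isometries on $\mathbb{C}^{kN}$, and the Berezin measures $D\psi$ and $D\chi$ each pick up $(\det g)^{-k}$ from the $\psi^i_s$ together with $(\overline{\det g})^{-k}$ from the conjugate components, so the combined Jacobian is $|\det g|^{-2k}=1$ for $g\in U(N)$. Consequently the function
\begin{align*}
F(z):=\int D\psi\,D\phi\,D\bar\phi\,DA\,D\eta\,D\chi\,DH\ \ \omega\exp(-\mathcal{L}_{MQ}),
\end{align*}
obtained by integrating out all fields except $z$, satisfies $F(g\cdot z)=F(z)$ for all $g\in U(N)$.

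Because $U(N)$ acts transitively on $V_k(\mathbb{C}^N)$, the invariant function $F$ is constant, so I may evaluate it at the reference frame $z_0=(\bm{e}_1,\dots,\bm{e}_k)$, i.e.\ $(z_0)^j_s=\delta^j_s$, and pull the constant out of the $z$-integral,
\begin{align*}
\int_{V_k(\mathbb{C}^N)}Dz\,F(z)=\mathrm{vol}(V_k(\mathbb{C}^N))\,F(z_0).
\end{align*}
Substituting $z^j_s=\delta^j_s$ then collapses the two $z$-dependent pieces: the $\eta$-coupling becomes $\tfrac12\sum_{i,j=1}^k(\psi^{\bar i}_{\bar j}-\psi^j_i)\eta_{\bar i j}$ and the projection operator becomes $\omega^{\prime}=\prod_{i,j=1}^k(\psi^{\bar i}_{\bar j}+\psi^j_i)$, while every remaining term of $\mathcal{L}_{MQ}$ is $z$-free and survives verbatim. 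This produces exactly $\mathcal{L}_{MQ}^{\prime}$ and $\omega^{\prime}$, and hence the claimed identity.

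The main obstacle I expect is not the bilinear invariance (a one-line unitarity computation) but the careful handling of the measures and of the gauge-fixing step: one must confirm that the modulus-one unitary determinants from the fermionic Berezin measures genuinely cancel, so that collapsing the $z$-integration onto the single reference frame is an exact rather than a merely formal manipulation, and that integrating the constant $F$ over $V_k(\mathbb{C}^N)$ legitimately reproduces the Stiefel volume of (\ref{volst}). Since $V_k(\mathbb{C}^N)\cong U(N)/U(N-k)$ is a homogeneous space and not a group, the argument is cleanest when phrased as \emph{a $U(N)$-invariant function on a homogeneous space is constant}, rather than as a change of variables on the orbit.
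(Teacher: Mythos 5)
Your proposal is correct and follows the same overall strategy as the paper's proof: establish invariance of the integrand and of the fiber measures under a unitary rotation, conclude that the function $F(z)$ obtained by integrating out all fields but $z$ is constant on the Stiefel manifold, and evaluate at the reference frame $z_0$ to pull out $\mathrm{vol}(V_k(\mathbb{C}^N))$. The genuine difference lies in how the projection operator $\omega$ is handled. The paper first exponentiates $\omega$ by introducing an auxiliary fermionic Hermite matrix $\theta$, writing $\omega = (-1)^{k^2(k^2+1)/2}\int D\theta\,\exp\bigl(\langle\psi,z\,{}^t\theta\rangle+\overline{\langle\psi,z\,{}^t\theta\rangle}\bigr)$, so that the entire integrand becomes a single exponential $\exp(-\widetilde{\mathcal{L}_{MQ}})$ with manifest $U(N)\times U(k)$ symmetry; the adjoint transformation of $\theta$ is what absorbs the frame-index mixing that the individual factors of $\omega$ would otherwise suffer under $U(k)$. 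You instead observe that each factor $\sum_s(\psi^{\bar i}_{\bar s}z^j_s+z^{\bar i}_{\bar s}\psi^j_s)$ is already a $U(N)$-scalar contraction over the $s$-index, and that $U(N)$ alone acts transitively on $V_k(\mathbb{C}^N)\cong U(N)/U(N-k)$, so the $U(k)$ factor --- and with it the $\theta$-trick and its sign bookkeeping --- is dispensable for this lemma. This is a mild but genuine streamlining: the paper's normal form $U^N z\, U^k = z_0$ is obtainable with $U^k = I_k$, exactly as you use. Your measure checks (unit Jacobian for the bosonic measures; the cancelling $(\det g)^{-k}(\overline{\det g})^{-k}$ factors for the Berezin measures) and the final substitution $z^j_s=\delta^j_s$ for $s\le k$, zero otherwise --- which collapses the $\eta$-coupling to $\tfrac12\sum_{i,j}(\psi^{\bar i}_{\bar j}-\psi^j_i)\eta_{\bar i j}$ and $\omega$ to $\omega'$ --- reproduce the paper's $\mathcal{L}_{MQ}^{\prime}$ and $\omega^{\prime}$ exactly.
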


\begin{proof}
 First, we note that projection operator $\omega$ is rewritten as follows: 
 \begin{align}
&\omega=\prod_{i=1}^k \prod_{j=1}^k (\sum_{s=1}^N (\psi^{\bar i}_{\bar s} z^j_s+\psi^{j}_{ s} z^{\bar i}_{\bar s}))\no\\
&=(-1)^{\frac{k^2}{2}(k^2+1)} \int D\theta \exp \left ( <\psi,z~^t\theta>+ \overline{<\psi,z~^t\theta>}\right ),
\end{align}
where $\theta$ is fermionic  Hermite matrix.

Hence we can rewrite $Z_{MQ}$ by, 
\begin{eqnarray}
&&Z_{MQ}=(-1)^{\frac{k^2}{2}(k^2+1)}\bt\int_{V_k(\mathbb{C}^N)} Dz \int D\psi D\phi D\bar \phi DA D\eta D\chi DH D\theta  \;\;\exp(-\widetilde{\mathcal{L}_{MQ}}(z,\psi,\phi,\bar\phi,A,\eta,\chi,H,\theta)) \no\\
&&\widetilde{\mathcal{L}_{MQ}}(z,\psi,\phi,\bar\phi,A,\eta,\chi,H,\theta)
=\mathcal{L}_{MQ}(z,\psi,\phi,\bar\phi,A,\eta,\chi,H)+<\psi,z~^t\theta>+ \overline{<\psi,z~^t\theta>}.
\end{eqnarray}
If we transform $\chi, \psi, \phi, \bar \phi, \eta, A, z, H,\theta$ in the following way,
\begin{align}
&\chi=U^{N}\chi^\prime U^{k}, &&\psi=U^{N} \psi^\prime U^{k},&&\phi=~^tU^k \phi^\prime ~^*U^k,&&\bar \phi=~^tU^k \bar \phi^\prime ~^*U^k,
\no\\
&\eta=~^tU^k \eta^\prime ~^*U^k,&&A=~^tU^k A^\prime ~^*U^k,&&z=U^N z^{\prime} U^k,&&H=U^N H^{\prime} U^k,\no\\
&\theta=~^tU^k \theta^\prime ~^*U^k,
\end{align}
we can easily see that $\widetilde{\mathcal{L}_{MQ}}$ has $U(N)\times U(k)$ symmetry, i.e., 
\ba
&&\widetilde{\mathcal{L}_{MQ}}(z^{\prime},\psi^{\prime},\phi^{\prime},\bar\phi^{\prime},A^{\prime},\eta^{\prime},\chi^{\prime},H^{\prime},
\theta^{\prime})\no\\
&&=\widetilde{\mathcal{L}_{MQ}}(z,\psi,\phi,\bar\phi,A,\eta,\chi,H,\theta).
\label{unuk}
\ea
On the other hand, we  can confirm that the integral measures of each variable is also $U(N)\times U(k)$ invariant: $DX=DX^\prime$ ($X$ is each variable). 
Let us define $F_{MQ}(z)$ by,
\ba
F_{MQ}(z):=(-1)^{\frac{k^2}{2}(k^2+1)}\bt\int D\psi D\phi D\bar \phi DA D\eta D\chi DH D\theta  \;\;\exp(-\widetilde{\mathcal{L}_{MQ}}(z,\psi,\phi,\bar\phi,A,\eta,\chi,H,\theta)).
\ea
Then we obviously have,
\ba
Z_{MQ}=\int_{V_k(\mathbb{C}^N)}Dz F_{MQ}(z).
\ea 
Then by using (\ref{unuk}) and invariance of integral measure, we obtain,
\ba
&&F_{MQ}(z)\no\\
&&=(-1)^{\frac{k^2}{2}(k^2+1)}\bt\int D\psi D\phi D\bar \phi DA D\eta D\chi DH D\theta  \;\;\exp(-\widetilde{\mathcal{L}_{MQ}}(z,\psi,\phi,\bar\phi,A,\eta,\chi,H,\theta))\no\\
&&=(-1)^{\frac{k^2}{2}(k^2+1)}\bt\int D\psi D\phi D\bar \phi DA D\eta D\chi DH D\theta\exp(-\widetilde{\mathcal{L}_{MQ}}(z^{\prime},\psi^{\prime},\phi^{\prime},\bar\phi^{\prime},A^{\prime},\eta^{\prime},\chi^{\prime},H^{\prime},
\theta^{\prime}))\no\\
&&=(-1)^{\frac{k^2}{2}(k^2+1)}\bt\int D\psi^{\prime} D\phi^{\prime} D\bar \phi^{\prime} DA^{\prime} D\eta^{\prime} D\chi^{\prime} 
DH^{\prime} D\theta^{\prime}\;\;\exp(-\widetilde{\mathcal{L}_{MQ}}(z^{\prime},\psi^{\prime},\phi^{\prime},\bar\phi^{\prime},A^{\prime},\eta^{\prime},\chi^{\prime},H^{\prime},
\theta^{\prime}))\no\\
&&=F_{MQ}(z^{\prime})
\ea
For each $z \in V_k(\mathbb{C}^N)$, we can choose  $U^N \in U(N)$ and $U^k \in U(k)$ that satisfy, 
\begin{align}
U^N z U^k=\left (\begin{array}{c}
I_k\\0_{N-k,k} 
\end{array}
\right)=:z_{0},
\end{align}
where $I_k$ is  $k\times k$-type unit matrix and $0_{N-k,k}$ is $(N-k)\times k$-type zero matrix. 
Hence we obtain,
\ba
Z_{M,Q}=\int_{V_k(\mathbb{C}^N)}Dz F_{MQ}(z)=\int_{V_k(\mathbb{C}^N)}Dz F_{MQ}(z_{0})=\mathrm{vol}(V_k(\mathbb{C}^N))F_{MQ}(z_{0}).
\ea
This is nothing but the assertion of the Lemma.
\end{proof}

\subsubsection{Integration of Fields except for $\psi$}
 
Note that we have integrated out $z$ in the previous subsubsection. Next, we integrate out $\psi_{A}$, $H$ and $\chi$.
\begin{align}
&\int D\psi_A \exp( \mathrm{tr}(\psi^\dag_A \psi_A)) \notag \\
&=\int d{\psi_{A}}^1_1d{\psi_{A}}^{\bar 1}_{\bar 1} \cdots d{\psi_{A}}^k_1d{\psi_{A}}^{\bar k}_{\bar 1} \cdots d{\psi_{A}}^1_kd{\psi_{A}}^{\bar 1}_{\bar k} \cdots d{\psi_{A}}^k_kd{\psi_{A}}^{\bar k}_{\bar k} \prod_{i=1}^k \{(1+{\psi_{A}}^{\bar i}_{\bar 1}{\psi_{A}}^i_1)\cdots (1+{\psi_{A}}^{\bar i}_{\bar k}{\psi_{A}}^i_k)\} \notag \\
&=(-1)^{k^2}.\\
&\int DH \exp(-\sum^N_{s=1}\sum^k_{i=1}H^{\bar i}_{\bar s} H^i_s) \notag \\ 
&:=\int \prod_{s=1}^N (\frac{i}{2})^k dH^1_ s dH^{\bar 1}_{\bar s} \cdots dH^k_s dH^{\bar k}_{\bar s} \exp(-\sum^N_{s=1}\sum^k_{i=1}H^{\bar i}_{\bar s} H^i_s)
=\pi^{kN}.\\
&\int D\chi \exp(\sum^N_{s=1}\chi^{\bar i}_{\bar s}\bigl(\gamma\delta_{\bar i j}+i\phi_{\bar i j} \bigr)\chi^j_s) \notag \\
&:=\int \prod_{s=1}^N d\chi^1_s d\chi^{\bar 1}_{\bar s} \cdots d\chi^k_s d\chi^{\bar k}_{\bar s} \exp(\sum^N_{s=1}\chi^{\bar i}_{\bar s} \bigl(\gamma\delta_{\bar i j}+i\phi_{\bar i j} \bigr)\chi^j_s) \notag \\
&= (-1)^{kN} \int \prod_{s=1}^N d\chi^{\bar 1}_{\bar s} d\chi^1_s \cdots d\chi^{\bar k}_{\bar s} d\chi^k_s  \exp(\sum^N_{s=1}\chi^{\bar i}_{\bar s}\bigl(\gamma\delta_{\bar i j}+i\phi_{\bar i j} \bigr)\chi^j_s) \notag \\
&=\left( -1\right)^{kN} \left(\det(\gamma I_k+i\phi)\right)^N.
\end{align}
Then let us integrate out $\eta$. We set, 
\begin{align}
I_\eta&:=\int D\eta \exp \left( \frac{-1}{2} \sum_{i,j=1}^k \left\{ \psi^{\bar i}_{\bar j} -\psi^{ j}_{ i}  \right\}\eta_{\bar i j} \right)
\end{align}
We abbreviate $\left\{ \psi^{\bar i}_{\bar j} -\psi^{ j}_{ i}  \right\}$ as $\alpha^{\bar i j}$ and obtain, 
\begin{align}
I_\eta&=\int D\eta \left( \frac{-1}{2} \right)^{k^2} \prod_{i=1}^k \prod_{j=1}^k \alpha^{\bar i j} \eta_{\bar i j}
=\left( \frac{-1}{2} \right)^{k^2} (-1)^{\frac{k^2}{2}(k^2+1)} \int D\eta  \left[\prod_{i=1}^k \prod_{j=1}^k \eta_{\bar i j}\right]  \left[\prod_{i=1}^k \prod_{j=1}^k \alpha^{\bar i j} \right]\\
&=\left( \frac{-1}{2} \right)^{k^2} (-1)^{\frac{k^2}{2}(k^2+1)}  \left[\prod_{i=1}^k \prod_{j=1}^k \alpha^{\bar i j} \right].
\end{align}
We also abbreviate $\left\{ \psi^{\bar i}_{\bar j} +\psi^{ j}_{ i}  \right\}$ as $\beta^{\bar i j}$ and evaluate $I_\eta \omega$. Since 
\begin{align}
\left[\prod_{i=1}^k \prod_{j=1}^k \alpha^{\bar i j} \right]\left[\prod_{i=1}^k \prod_{j=1}^k \beta^{\bar i j} \right]
=(-1)^{\frac{k^2}{2}(k^2+1)}\left[\prod_{i=1}^k \prod_{j=1}^k \alpha^{\bar i j} \beta^{\bar i j}\right],
\end{align}
and $\alpha^{\bar i j} \beta^{\bar i j}=(\psi^{\bar i}_{\bar j} -\psi^{ j}_{ i})(\psi^{\bar i}_{\bar j} +\psi^{ j}_{ i})=-2\psi^{ j}_{ i}\psi^{\bar i}_{\bar j}$
, we obtain, 
\begin{align}
I_\eta \omega&=\left( \frac{-1}{2} \right)^{k^2} \left[\prod_{i=1}^k \prod_{j=1}^k \alpha^{\bar i j} \beta^{\bar i j}\right]
=\prod_{i=1}^k \prod_{j=1}^k \psi^{ j}_{ i}\psi^{\bar i}_{\bar j}
=\left(\prod_{i=1}^k  \psi^{ i}_{ i}\psi^{\bar i}_{\bar i} \right)\left(\prod_{i<j} \psi^{ j}_{ i}\psi^{\bar i}_{\bar j} \right)\left(\prod_{i>j} \psi^{ j}_{ i}\psi^{\bar i}_{\bar j} \right) \notag \\
&=\left(\prod_{i=1}^k  \psi^{ i}_{ i}\psi^{\bar i}_{\bar i} \right)\left(\prod_{i<j} \psi^{ j}_{ i}\psi^{\bar i}_{\bar j}  \psi^{ i}_{ j}\psi^{\bar j}_{\bar i} \right) 
=\left(\prod_{i=1}^k  \psi^{ i}_{ i}\psi^{\bar i}_{\bar i} \right)\left(\prod_{i<j} (-\psi^{ j}_{ i} \psi^{\bar j}_{\bar i} \psi^{ i}_{ j}\psi^{\bar i}_{\bar j}) \right) \notag \\
&=(-1)^{\frac{k(k-1)}{2}}\prod_{i=1}^k \prod_{j=1}^k \psi^{ j}_{ i} \psi^{\bar j}_{\bar i}.
\end{align}
Then the result of these integrations is given as follows. 
\begin{align}
Z_{MQ}&=\bt\int_{V_k(\mathbb{C}^N)} Dz D\psi D\phi D\bar \phi DA   \,\, (-1)^{k^2+\frac{k(k-1)}{2}}(-\pi )^{kN} \left(\det(\gamma I_k+i\phi)\right)^N \notag \\ 
&\exp(-\Bigl \{  \sum^N_{s=1}\psi^{ \bar i}_{\bar s} \bar \phi _{\bar i l} \psi^{  l}_s +i\mathrm{tr}(\phi \bar \phi)+\mathrm{tr}(A^\dag(\gamma A+[i\phi,A]))\Bigr \}) \prod_{i=1}^k \prod_{j=1}^k \psi^{ j}_{ i} \psi^{\bar j}_{\bar i}. 
\end{align}
In order to integrate out $\phi$ and  $\bar \phi$, we we decompose complex fields $\phi$, $\bar\phi$ and $\psi$ into real parts and imaginary parts:  
\begin{align}
&\phi_{\bar i j}=\phi^R_{i j}+i \phi^I_{i j} ,&
&\phi_{ i \bar j}=\phi^R_{i j}-i \phi^I_{i j}. &\\
&\bar \phi_{\bar i j}=\bar \phi^R_{i j}+i\bar \phi^I_{i j} ,&
&\bar \phi_{ i \bar j}=\bar \phi^R_{i j}-i \bar \phi^I_{i j}. &\\
&\psi^{ i}_s=\psi^{i}_{Rs}+i\psi^{ i}_{Is}, &
& \psi^{ \bar i}_{\bar s}=\psi^{ i}_{Rs}-i\psi^{  i}_{I s},&
\end{align}
and define integration measures for $\bar\phi$ and $\phi$ as,
\begin{align}
D\bar\phi&:=\prod_{i=1}^k d\bar\phi^{R}_{\bar i i} \prod_{j=i+1}^k d\bar\phi^R_{ i j} d\bar\phi^I_{i j},\\
D\phi&:=\prod_{i=1}^k d\phi^{R}_{\bar i i} \prod_{j=i+1}^k d\phi^R_{ i j} d\phi^I_{i j}.
\end{align}
Note that $\bar\phi^I_{ii}=\phi^I_{ii}=0$ and $\phi^R_{ij}=\phi^R_{j i}, \;\bar\phi^R_{ij}=\bar\phi^R_{j i},\; \phi^I_{i j}=-\phi^I_{j i}, \;\phi^I_{i j}=-\phi^I_{j i} (i\neq j)$ since both $\bar\phi$ and $\phi$ are both Hermitian matrices.   
Then integration of $\phi$ and $\bar\phi$ results in the following lemma:
\begin{lemma}
Let $\lambda_i (i=1,2,\cdots,k)$ be eigenvalues of the Hermite matrix $\phi$. Then we have,
\begin{align}
&\int D\bar \phi \exp\left (-i\mathrm{tr}(\phi \bar \phi)-\sum^N_{s=1} \psi^{ \bar i}_{\bar s} \bar \phi _{\bar i l} \psi^{ l}_s \right)
\notag \\
 &=\frac{(2\pi)^{k^2}}{2^{k(k-1)}}\left\{\prod_{i=1}^k \delta \left( \phi_{\bar i i}   + 2\sum^N_{s=1} \psi^{ i}_{Rs}\psi^{ i}_{Is}\right) \right\} \notag \\
&\times \Biggl \{ \prod_{i=1}^{k-1} \prod_{j=i+1}^k \delta \left( \phi^R_{ j i}+\sum_{s=1}^N(\psi^{  i}_{Rs}\psi^{ j}_{Is}-\psi^{  i}_{I s}\psi^{ j}_{Rs}) \right) \notag \\
&\times \delta \left( \phi^I_{ j i} -\sum^N_{s=1} (\psi^{  i}_{Rs}\psi^{ j}_{Rs}+\psi^{  i}_{I s}\psi^{ j}_{Is}) \right) \Biggr \}. \\
 &\int DA\exp\left(-\mathrm{tr}(\gamma A^\dag A+A^\dag [i\phi,A]) \right)
=\frac{\pi^{k^2}}{\gamma^{k}\prod_{l<j}(\gamma^2-(i\lambda_l-i\lambda_j)^2)}
\end{align}
\end{lemma}
This follows from straightforward computation, and we leave the proof to readers as exercises.  
By using the above lemma, we obtain,
\begin{align}
&Z_{MQ}=\bt\int D\psi^\prime Dz  D\phi \frac{(-1)^{\frac{k}{2}(k-1)}(2\pi)^{k^2}  (-\pi)^{k^2+kN}\left(\det(\gamma I_k+i\phi)\right)^N }{2^{k(k-1)}\gamma^{k}  \prod_{l>j} \left(\gamma^2-(i\lambda_l-i\lambda_j)^2 \right)}   \,\, \notag \\
&\left\{\prod_{i=1}^k \delta \left( \phi_{\bar i i}  + 2\sum^N_{s=1} \psi^{ i}_{Rs}\psi^{ i}_{Is} \right) \right\}  \Biggl \{ \prod_{i=1}^{k-1} \prod_{j=i+1}^k \delta \left( \phi^R_{ j i}+\sum_{s=1}^N(\psi^{  i}_{Rs}\psi^{ j}_{Is}-\psi^{  i}_{I s}\psi^{ j}_{Rs}) \right) \notag \\
&\times \delta \left( \phi^I_{ j i} -\sum^N_{s=1} (\psi^{ i}_{Rs}\psi^{ j}_{Rs}+\psi^{ i}_{I s}\psi^{ j}_{Is}) \right) \Biggr \}  \prod_{i=1}^k \prod_{j=1}^k \psi^{ j}_{ i} \psi^{\bar j}_{\bar i}
\label{med1}
\end{align}
Then, integration of $\phi$ results in replacement of $\phi$ fields by the following composite of $\psi$ fields.
\begin{align}
\phi_{\bar i i} &\to-2\sum^N_{s=1} \psi^{  i}_{Rs}\psi^{ i}_{Is}
=-\sum^N_{s=1}( \psi^{  i}_{Rs}\psi^{ i}_{Is} -\psi^{ i}_{Is} \psi^{  i}_{Rs}) \notag \\
&=-i\sum^N_{s=1}( \psi^{  i}_{Rs}+i\psi^{ i}_{Is})(\psi^{ i}_{Rs}-i\psi^{ i}_{Is} )
=- i\sum_{s=1}^N \psi^{ i}_{ s} \psi^{ \bar i}_{\bar s} , 
\end{align}
\begin{align}
\phi_{\bar j i}=\phi^R_{j i}+i \phi^I_{j i} &\to  \sum_{s=1}^N \{-(\psi^{ i}_{Rs} \psi^{j}_{Is} - \psi^{ i}_{Is} \psi^{ j}_{R s})+i(\psi^{i}_{Rs} \psi^{ j}_{Rs} +\psi^{ i}_{Is} \psi^{j}_{Is})\} \notag \\
&=i \sum_{s=1}^N \{\psi^{ i}_{Rs} \psi^{ j}_{Rs} +\psi^{i}_{Is} \psi^{ j}_{Is}+i(\psi^{ i}_{Rs} \psi^{j}_{Is} - \psi^{ i}_{Is} \psi^{ j}_{R s}) \\
&=i\sum_{s=1}^N \psi^{ \bar i}_{s} \psi^{ j}_{s}
=-i\sum_{s=1}^N \psi^{ j}_{s} \psi^{ \bar i}_{s}.
\end{align} 
At this stage, let us introduce the following Hermitian matrix:
 \begin{align}
& \Phi_s:=\left(
\begin{array}{ccc}
\psi^{ 1}_{s} \psi^{ \bar 1}_{\bar s} & \cdots &\psi^{ 1}_{s}\psi^{\bar k}_{\bar s}  \\
\vdots & \ddots & \vdots \\
\psi^{k}_{s}  \psi^{\bar 1}_{\bar s}& \cdots &\psi^{ k}_{s}  \psi^{\bar k}_{\bar s} 
\end{array}\right),\;\;\;\;\Phi:=\sum_{s=1}^N \Phi_s.
\end{align}
Then the result of integration of $\phi$ is summarized by replacement of $\phi$ by $-i\Phi$. 
Let $\lambda_{i}^\prime\;\;(i=1,\cdots,k)$ be  eigenvalues of the  $\Phi$.
Then integration of $\bar\phi$ and $\phi$ results in,  
\begin{align}
&Z_{MQ}=\bt\mathrm{vol}(V_k(\mathbb{C}^N))\int D\psi  \frac{(-1)^{\frac{k}{2}(k-1)}2^k \pi^{k^2} (-\pi) ^{k^2+kN}\left(\det(\gamma I_k+\Phi)\right)^N }{ \gamma^{k} \prod_{l>j} \left(\gamma^2-(\lambda^\prime_l-\lambda^\prime_j)^2 \right)}  \prod_{i=1}^k \prod_{j=1}^k \psi^{ j}_{ i} \psi^{\bar j}_{\bar i}.
\end{align}
On the other hand, we obtain from (\ref{volst}),
\begin{align}
\bt\mathrm{vol}(V_k(\mathbb{C}^N))(-1)^{\frac{k}{2}(k-1)}2^k \pi^{k^2} (-\pi) ^{k^2+kN}&=\frac{ \prod_{j=0}^{k-1} j!}{ \prod_{j=N-k}^{N-1} j!},
\end{align}
and reach the final expression of $Z_{MQ}$ in this section:
\begin{align}
&Z_{MQ}=\frac{ \prod_{j=0}^{k-1} j!}{ \prod_{j=N-k}^{N-1} j!}\int D\psi  \frac{\left(\det(\gamma I_k+\Phi)\right)^N }{ \gamma^{k} \prod_{l>j} \left(\gamma^{2}-(\lambda^\prime_l-\lambda^\prime_j)^2 \right)}  \prod_{i=1}^k \prod_{j=1}^k \psi^{ j}_{ i} \psi^{\bar j}_{\bar i}.  
\label{euler}
\end{align}

\section{Second Half: Proof of the Main Theorem}
\subsection{Free Fermion Realization of Cohomology Ring of $G(k,N)$}
In the previous section, we reached the expression (\ref{euler}). 
Then what remains to prove is the following equality:
\ba
Z_{MQ}&=&\frac{ \prod_{j=0}^{k-1} j!}{ \prod_{j=N-k}^{N-1} j!}\int D\psi  \frac{\left(\det(\gamma I_k+\Phi)\right)^N }{\gamma^{k} \prod_{l>j} \left(\gamma^2-(\lambda^\prime_l-\lambda^\prime_j)^2 \right)}  \prod_{i=1}^k \prod_{j=1}^k \psi^{ j}_{ i} \psi^{\bar j}_{\bar i}\\ 
&=&\int_{G(k,N)}c(T^{\prime}G(k,N))\\
&=&\int_{G(k,N)}\frac{\prod_{i=1}^{k}(1+x_{i})}{\prod_{l>j}(1-(x_{l}-x_{j})^2)}\\
&=&{N\choose k},
\ea
where $c(S^{*})=\prod_{i=1}^{k}(1+x_{i})$.
First, we note that the factor  $\prod_{i=1}^k \psi^j_i \psi^{\bar j}_{\bar i}$ allows us to neglect $\psi^j_i, \;\psi^{\bar j}_{\bar i}
\;(i,j=1,2,\cdots,k)$ in the integral measure and the remaining part of the integrand. 
Hence we only have to consider the fields $\psi^j_i, \;\psi^{\bar j}_{\bar i}\;\;(i=k+1,\cdots,N, j=1,\cdots,k)$.
At this stage, we redefine $\psi^j_{k+i}, \;\psi^{\bar j}_{\bar k+\bar i}$ by  $\psi^j_{i}, \;\psi^{\bar j}_{\bar i}$
$(i=1,\cdots,N-k,\;j=1,\cdots k)$ and introduce
\begin{align}
\Phi^\prime&:=\sum_{s=1}^{N-k} \left( \begin{array}{ccc}
\omega_s^{1 \bar1} &\ldots&\omega^{1\bar k }_s\\
\vdots& \ddots&\vdots\\
\omega_s^{k \bar 1}&\ldots&\omega_s^{k \bar k}\\
\end{array}
\right)\qquad (\omega_{s}^{i\bar j}:=\psi_{s}^{i}\psi_{\bar s}^{\bar j}), \\
D\psi^\prime&=\prod_{s=1}^{N-k} d\psi^1_s d\psi^{\bar 1}_{\bar s} \cdots d\psi^k_s d\psi^{\bar k}_{\bar s}.
\end{align}
Let $\lambda_{i}\; (i=1,\cdots,k)$ be eigenvalues of the Hermite matrix $\Phi^{\prime}$.
Then (\ref{euler}) is rewritten as follows.
\begin{align}
Z_{MQ}
&=\frac{ \prod_{j=0}^{k-1} j!}{ \prod_{j=N-k}^{N-1} j!}\int D\psi^\prime \frac{\left(\det(\gamma I_k+\Phi^\prime)\right)^N }{\gamma^{k}\prod_{l>j} \left(\gamma^2-(\lambda_l-\lambda_j)^2 \right)}.
\label{zfinal}
\end{align}

\begin{theorem} \label{Th1}
Let us define $b_{i}\;(i=0,1,2,\cdots)$ by, 
\begin{align}
\frac{1}{\mathrm{det}(I_k+t\Phi^\prime)}=\sum_{m=0}^\infty b_m t^m.
\end{align}
Then, $b_m=0$ if $m>N-k$.
\end{theorem}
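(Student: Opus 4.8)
The plan is to prove the sharper statement that the formal power series $1/\det(I_k+t\Phi^\prime)$ is in fact a polynomial in $t$ of degree at most $n:=N-k$; since $\det(I_k+t\Phi^\prime)$ has constant term $1$, this inverse is a well-defined series with Grassmann-even coefficients, and the claim ``$b_m=0$ for $m>N-k$'' is exactly the assertion that its $t$-degree does not exceed $n$. The geometric reason to expect this is that $\sum_m b_m t^m$ represents the total Chern class $c(Q)=1/c(S)$ of the rank-$n$ universal quotient bundle, so that $b_m=c_m(Q)$ must vanish above $\mathrm{rk}\,Q=n$; the task is to reproduce this algebraically for the \emph{formal} fermionic matrix $\Phi^\prime$ whose entries are $\Phi^\prime_{ij}=\sum_{s=1}^{n}\psi^i_s\psi^{\bar j}_{\bar s}$.

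First I would represent the inverse determinant by a bosonic Gaussian average. Introducing auxiliary commuting complex variables $w_1,\dots,w_k$ and the Gaussian functional $\langle f\rangle:=\pi^{-k}\int_{\mathbb{C}^k}f\,e^{-|w|^2}\prod_i d^2w_i$, extended $R$-linearly over the even part $R$ of the Grassmann algebra and applied order by order in the formal parameter $t$, one has the standard moment identity
\[
\frac{1}{\det(I_k-t\Phi^\prime)}=\Big\langle \exp\!\big(t\,{}^t\bar w\,\Phi^\prime w\big)\Big\rangle .
\]
Working coefficient-wise in $t$ sidesteps any convergence question, because each coefficient is an ordinary Gaussian moment with Grassmann-even coefficients. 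The crucial rewriting is
\[
{}^t\bar w\,\Phi^\prime w=\sum_{i,j}\bar w_i\,w_j\sum_{s=1}^{n}\psi^i_s\psi^{\bar j}_{\bar s}=\sum_{s=1}^{n}A_sB_s,\qquad A_s:=\sum_i \bar w_i\psi^i_s,\quad B_s:=\sum_j w_j\psi^{\bar j}_{\bar s},
\]
where each $A_s,B_s$ is Grassmann-odd and involves only the fixed-$s$ fermions.

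The key step is then purely algebraic: each $A_sB_s$ is even, so distinct-$s$ terms commute, and $A_s^2=0$ forces $(A_sB_s)^2=0$. Hence the exponential collapses to a finite product,
\[
\exp\!\Big(t\sum_{s=1}^{n}A_sB_s\Big)=\prod_{s=1}^{n}\big(1+t\,A_sB_s\big),
\]
which is a polynomial in $t$ of degree at most $n$. Applying $\langle\cdot\rangle$ preserves the $t$-degree, so $1/\det(I_k-t\Phi^\prime)$ is a polynomial of degree $\le n$; replacing $t\mapsto -t$ gives the same for $1/\det(I_k+t\Phi^\prime)$, and therefore $b_m=0$ whenever $m>n=N-k$.

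The main obstacle, and the reason a naive attempt fails, is that the vanishing is produced by \emph{cancellation} rather than by individual Grassmann monomials dying: already for $k=2,\ n=1$ both $(\mathrm{tr}\,\Phi^\prime)^2$ and $\det\Phi^\prime$ are nonzero, yet $b_2=(\mathrm{tr}\,\Phi^\prime)^2-\det\Phi^\prime=0$. Any monomial-by-monomial degree count only sees the crude bound $m\le kn$ (there being $2kn$ fermions) and cannot detect the true truncation. The whole force of the argument is that the Gaussian representation repackages $\Phi^\prime$ into exactly $n$ mutually commuting square-zero blocks $A_sB_s$, one for each ``space'' index $s\in\{1,\dots,n\}$, which is precisely the rank-$n$ data carried by $Q$. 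The only technical care required is to treat the bosonic integral as an $R$-linear moment functional defined coefficient-wise in $t$, so that no analytic convergence over Grassmann-valued coefficients is ever invoked.
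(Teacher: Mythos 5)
Your proof is correct, and while it opens with the same move as the paper --- representing $1/\det(I_k+t\Phi^\prime)$ as a bosonic Gaussian average over auxiliary complex variables --- the truncation mechanism you use afterwards is genuinely different. The paper expands the exponential in powers of $t$, evaluates the Gaussian moments by Wick's theorem, and then proves vanishing by an explicit pairwise cancellation: for each permutation $\sigma\in S_m$ contributing $\omega^{i_1\bar i_{\sigma(1)}}_{\mu_1}\cdots\omega^{i_m\bar i_{\sigma(m)}}_{\mu_m}$ with two coinciding space indices $\mu_1=\mu_2$, a partner $\sigma^\prime$ (transposing the two targets) contributes the opposite sign, so the whole coefficient dies by pigeonhole once $m>N-k$. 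You instead reorganize the quadratic form \emph{before} integrating: writing ${}^t\bar w\,\Phi^\prime w=\sum_{s=1}^{N-k}A_sB_s$ with $A_s=\sum_i\bar w_i\psi^i_s$, $B_s=\sum_j w_j\psi^{\bar j}_{\bar s}$, you use that each $A_sB_s$ is even and square-zero (since $A_s^2=B_s^2=0$ by antisymmetry of $\psi^i_s\psi^j_s$ against the commuting $\bar w_i\bar w_j$), so the exponential factorizes exactly as $\prod_{s=1}^{N-k}(1+t\,A_sB_s)$, manifestly a polynomial of degree at most $N-k$ in $t$; linearity of the moment functional then finishes the argument with no Wick combinatorics at all. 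Your route is shorter and more structural --- it isolates the rank-$(N-k)$ data (one nilpotent block per value of $s$) as the direct source of the truncation, and it even proves the slightly sharper statement that the full series is a polynomial of degree at most $N-k$, not merely that individual coefficients vanish. What the paper's longer computation buys is the explicit Wick-expansion bookkeeping (the $\omega_s^{i\bar j}$, symmetric factors, and permutation sums) in the same style as the combinatorial machinery it develops in Section 4.2 for Theorem \ref{ThV}; your argument, by contrast, stands alone. Your cautionary remark that the vanishing is a cancellation phenomenon (illustrated by $b_2=(\mathrm{tr}\,\Phi^\prime)^2-\det\Phi^\prime=0$ for $k=2$, $N-k=1$) correctly identifies why a naive monomial count cannot work, and your coefficient-wise treatment of the Gaussian functional properly disposes of the only analytic subtlety.
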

\begin{proof}
By using Gaussian integral of complex variables $X_{1},\cdots, X_{k}$, we obtain the equality:
\begin{align}
\frac{1}{\det(I_k+t\Phi^\prime)}&=\int_{\mathbb{C}^n} \mathcal{D} X \exp{(- ~^t\bar X (I_k+t\Phi^\prime)X)}. 
\end{align}
where $X={}^{t}(X_1,\cdots,X_{k})$ and integral measure is given by 
$\mathcal{D}X:=\frac{1}{(-2\pi i)^k}dX_1d X_{\bar 1}\cdots dX_k dX_{\bar k}$.
\begin{align}
&(R.H.S)=\int_{\mathbb{C}^n} \mathcal{D} X \exp{(- X_{\bar i}(\delta_{\bar i j}+t\sum_{\mu=1}^{N-k} \omega^{ i \bar j}_\mu)X_j)} \notag \no\\ 
&=\int_{\mathbb{C}^n} \mathcal{D} X \{e^{-|X|^2} e^{-t\sum_{\mu=1}^{N-k} \omega^{i \bar j}_\mu X_{\bar i} X_j } \}\no\\
&=\int_{\mathbb{C}^n} \mathcal{D} X e^{-|X|^2} \sum_{m=0}^\infty \frac{1}{m!} \Bigl( -t\sum_{\mu=1}^{N-k} \omega^{i\bar j}_\mu X_{\bar i} X_j \Bigr)^m \no\\
&=\sum_{m=0}^\infty \frac{(-t)^m}{m!}  \sum_{(\mu_1,\cdots,\mu_m)} \sum_{\substack{(i_1,\cdots,i_m)\\ (j_1,\cdots,j_m)}}\omega^{ i_1 \bar j_1}_{\mu_1}\cdots \omega^{ i_m \bar j_m}_{\mu_m} \int_{\mathbb{C}^n} \mathcal{D} X e^{-|X|^2}  X_{\bar i_1} X_{j_1}\cdots X_{\bar i_m} X_{j_m}.
\label{gauss}
\end{align}
Here, $|X|^2$ represents $\sum_{i=1}^k X_{\bar i} X_i$.
We remark that $1\leq \mu_{l}\leq N-k,\;\;1\leq i_{l},j_{l}\leq k,\;(l=1,\cdots,m)$. 
By Wick's theorem of Gaussian integral, we can easily see that the integral in the last line of (\ref{gauss}) does not vanish  
if and only if $\{i_1,\cdots,i_m\}=\{j_1,\cdots,j_m\}$.
Hence we obtain, 
\begin{align}
&\int_{\mathbb{C}^n} \mathcal{D} X \exp{(- ~^t\bar X (I_k+t\Phi^\prime)X)}\no\\
&=\sum_{m=0}^\infty \frac{(-t)^m}{m!}  \sum_{(\mu_1,\cdots,\mu_m)} \sum_{(i_{1},\cdots,i_m)} \sum_{\sigma \in S_{m}} 
\mbox{Sym}(i_{1},\cdots,i_{m})\omega^{ i_1 \bar i_{\sigma(1)}}_{\mu_1}\cdots \omega^{ i_m \bar i_{\sigma(m)}}_{\mu_m}\int_{\mathbb{C}^n} \mathcal{D} X e^{-|X|^2} \prod_{j=1}^{m} |X_{i_j}|^{2}. 
\label{key}
\end{align}
where $\mbox{Sym}(i_{1},\cdots,i_{m})$ is symmetric factor of the m-tuple $(i_1,\cdots,i_{m})$ given by,
\ba
\mbox{Sym}(i_{1},\cdots,i_{m})&=&\prod_{j=1}^{k}\frac{1}{\mbox{mul}((i_{1},\cdots,i_{m});j)!},\no\\
\mbox{mul}((i_{1},\cdots,i_{m});j)&:=&(\mbox{ number of $l$'s that satisfy $i_{l}=j$}).
\ea
Then let us fix $(\mu_{1},\cdots,\mu_{m})$ and $(i_{1},\cdots, i_{m})$, and  assume that there exists a pair $(i,j)\;(1\leq i<j\leq m)$ that satisfy $\mu_{i}=\mu_{j}$. Without loss of generality, we can further assume that $\mu_{1}=\mu_{2}=\mu$.
Obviously,  for any $\sigma \in S_m$  we can uniquely take $\sigma^\prime \in S_m$ that satisfy $\sigma^{\prime}(1)=\sigma(2),\;\sigma^{\prime}(2)=\sigma(1),\;\sigma^{\prime}(i)=\sigma(i) (i=3,4,\cdots m)$. 
Then, we can easily see, 
\ba
&&\omega^{i_1 \bar i_{\sigma(1)}}_{\mu}\omega^{i_2 \bar i_{\sigma(2)}}_{\mu}\omega^{i_3 \bar i_{\sigma(3)}}_{\mu_{3}} 
\cdots \omega^{ i_{m} \bar i_{\sigma(m)}}_{\mu_m}
+\omega^{i_1\bar i_{\sigma^{\prime}(1)}}_{\mu}\omega^{i_2 \bar i_{\sigma^{\prime}(2)}}_{\mu}\omega^{i_3 \bar i_{\sigma^{\prime}(3)}}_{\mu_{3}}\cdots \omega^{ i_{m} \bar i_{\sigma^{\prime}(m)}}_{\mu_m}\no\\
&=&\omega^{i_1 \bar i_{\sigma(1)}}_{\mu}\omega^{i_2 \bar i_{\sigma(2)}}_{\mu}\omega^{i_3 \bar i_{\sigma(3)}}_{\mu_{3}} 
\cdots \omega^{ i_{m} \bar i_{\sigma(m)}}_{\mu_m}
+\omega^{i_1\bar i_{\sigma(2)}}_{\mu}\omega^{i_2 \bar i_{\sigma(1)}}_{\mu}\omega^{i_3 \bar i_{\sigma(3)}}_{\mu_{3}}\cdots \omega^{ i_{m} \bar i_{\sigma(m)}}_{\mu_m}\no\\
&=&0
\ea
because $\omega^{ i_1\bar  i_{\sigma(2)}}_{\mu}\omega^{ i_2\bar  i_{\sigma(1)}}_{\mu}
= \psi_\mu^{ i_1} \psi_{\bar\mu}^{\bar i_{\sigma(2)}} \psi_{\mu}^{i_2} \psi_{\bar\mu}^{\bar i_{\sigma(1)}}
=-\psi_\mu^{ i_1} \psi_{\bar\mu}^{\bar i_{\sigma(1)}} \psi_\mu^{i_2} \psi_{\bar\mu}^{\bar i_{\sigma(2)}}=-\omega^{ i_1\bar  i_{\sigma(1)}}_{\mu}\omega^{ i_2\bar  i_{\sigma(2)}}_{\mu}$.
Hence $$\sum_{\sigma \in S_{m}} 
\mbox{Sym}(i_{1},\cdots,i_{m})\omega^{ i_1 \bar i_{\sigma(1)}}_{\mu_1}\cdots \omega^{ i_m \bar i_{\sigma(m)}}_{\mu_m}$$
in the last line of (\ref{key}) vanishes if some $\mu_{j}$'s in the m-tuple $(\mu_{1},\cdots,\mu_{m})$ coincide.
Since $1\leq \mu_{j}\leq N-k$, it follows that the summand in the last line of (\ref{key}) vanishes if $m> N-k$. 
\end{proof} 
Then we introduce another theorem which will be proved in the next subsection. 
\begin{theorem}\label{ThV}
\begin{align}
\frac{ \prod_{j=0}^{k-1} j!}{ \prod_{j=N-k}^{N-1} j!}\int D\psi^\prime  \left(\det(\Phi^\prime)\right)^{N-k} =1.
\end{align}
\end{theorem}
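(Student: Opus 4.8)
The plan is to collapse Theorem~\ref{ThV} into a single closed-form evaluation of the Grassmann integral and then to carry out that evaluation with a bosonic source field together with the classical Cayley identity. First I would rewrite the prefactor: putting $n:=N-k$, one has $\prod_{j=N-k}^{N-1}j!=\prod_{i=0}^{k-1}(n+i)!$ and $\prod_{j=0}^{k-1}j!=\prod_{i=0}^{k-1}i!$, so the prefactor equals $\big(\prod_{i=0}^{k-1}\tfrac{(n+i)!}{i!}\big)^{-1}$. Hence the theorem is equivalent to the clean identity
\begin{equation*}
\int D\psi^{\prime}\,(\det\Phi^{\prime})^{n}=\prod_{i=0}^{k-1}\frac{(n+i)!}{i!}.
\end{equation*}
A degree count confirms that this is the only power that can give a nonzero fermionic integral: $(\det\Phi^{\prime})^{n}$ has $\psi$-degree $2kn$, exactly saturating the $2kn$ Grassmann variables. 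The base case $k=1$ is immediate, since $(\sum_{s}\psi_{s}^{1}\psi_{\bar s}^{\bar 1})^{n}=n!\prod_{s}\psi_{s}^{1}\psi_{\bar s}^{\bar 1}$ integrates to $n!=\tfrac{n!}{0!}$.

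The key device I would use is a source matrix $J\in M_{k}(\mathbb{C})$ and the generating function $Z[J]:=\int D\psi^{\prime}\,\exp(\mathrm{tr}(J\Phi^{\prime}))$. Because $\mathrm{tr}(J\Phi^{\prime})=\sum_{s=1}^{n}\sum_{i,j}J_{ji}\,\psi_{s}^{i}\psi_{\bar s}^{\bar j}$ splits over the independent sites $s$, the integral factorizes into $n$ identical $2k$-dimensional Grassmann Gaussian integrals, each equal to $\det J$; thus $Z[J]=(\det J)^{n}$ (with the ordering of $D\psi^{\prime}$ fixed in the statement, the overall sign is $+1$, as the $k=1$ check shows). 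Next I would observe that $\partial/\partial J_{ij}$ applied to $\exp(\mathrm{tr}(J\Phi^{\prime}))$ brings down a factor $\Phi^{\prime}_{ji}$, and since $\Phi^{\prime}$ does not depend on $J$ there are no cross terms; consequently the constant-coefficient operator $\Omega:=\det(\partial/\partial J_{ij})$ satisfies $\Omega\,\exp(\mathrm{tr}(J\Phi^{\prime}))=(\det\Phi^{\prime})\exp(\mathrm{tr}(J\Phi^{\prime}))$, so that
\begin{equation*}
\int D\psi^{\prime}\,(\det\Phi^{\prime})^{n}=\Big[\Omega^{n}Z[J]\Big]_{J=0}=\Big[\Omega^{n}(\det J)^{n}\Big]_{J=0},
\end{equation*}
where the $J$-derivatives have been exchanged with the (independent) $\psi$-integration.

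The remaining computation is purely analytic and is settled by the Cayley identity $\Omega\,(\det J)^{s}=s(s+1)\cdots(s+k-1)(\det J)^{s-1}$ for a $k\times k$ matrix $J$. Applying it $n$ times, peeling off one power of $\det J$ at each step, yields $\Omega^{n}(\det J)^{n}=\prod_{m=1}^{n}\frac{(m+k-1)!}{(m-1)!}$, and regrouping the factors as $\prod_{m=1}^{n}\prod_{j=0}^{k-1}(m+j)=\prod_{j=0}^{k-1}\prod_{m=1}^{n}(m+j)=\prod_{j=0}^{k-1}\frac{(n+j)!}{j!}$ reproduces exactly the target. Multiplying by the reciprocal prefactor then gives $\frac{\prod_{j=0}^{k-1}j!}{\prod_{j=N-k}^{N-1}j!}\int D\psi^{\prime}(\det\Phi^{\prime})^{N-k}=1$.

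The two places where care is needed — and which I expect to be the main obstacle — are the Grassmann bookkeeping in evaluating $Z[J]$ (pinning the sign from the ordering of $D\psi^{\prime}$ so that the normalization is genuinely $+1$ rather than $\pm1$) and a self-contained justification of the Cayley identity in the form used. If one prefers not to quote Cayley's identity, the same constant $\Omega^{n}(\det J)^{n}$ can instead be produced by expanding $(\det\Phi^{\prime})^{n}$ over the $n$ determinant permutations and applying Wick's theorem directly; this route replaces the clean operator calculation by a signed sum over $n$-tuples of permutations in $S_{k}$ subject to the holomorphic and antiholomorphic saturation constraints, whose resummation — a cancellation of all but the fully non-intersecting configurations — is the genuinely delicate combinatorial step. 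I would therefore present the generating-function/Cayley argument as the main line and relegate the explicit Wick resummation to a remark.
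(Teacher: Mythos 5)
Your proposal is correct, but it takes a genuinely different and more streamlined route than the paper's. You convert the fermionic integral into a purely bosonic one via the source integral $Z[J]=\int D\psi^{\prime}\exp(\mathrm{tr}(J\Phi^{\prime}))=(\det J)^{n}$ (with $n=N-k$) and the observation $\det(\partial/\partial J)\,e^{\mathrm{tr}(J\Phi^{\prime})}=(\det\Phi^{\prime})\,e^{\mathrm{tr}(J\Phi^{\prime})}$, and then finish in one sweep by iterating Cayley's identity $n$ times, obtaining the closed form $\int D\psi^{\prime}(\det\Phi^{\prime})^{n}=\prod_{j=0}^{k-1}\frac{(n+j)!}{j!}$, which indeed cancels the prefactor of Theorem~\ref{ThV}. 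The paper instead expands $(\det\Phi^{\prime})^{l}$ over the set ${\cal M}_{l}$ of nonnegative integer matrices with all row and column sums equal to $l$, proves the combinatorial formula $C(l,k)=\sum_{M_l}\bigl[\prod_{a,b}(m^{l}_{a,b})!\bigr]\mul(M_{l})^{2}$ (Lemma~\ref{clklem}), and derives the recursion $C(l+1,k)=\frac{(l+k)!}{l!}C(l,k)$ (Lemma~\ref{lemma6}) by a delicate comparison of $\mul$-coefficients with exactly the identity you quote as classical: the paper's Lemma~\ref{lemma5} \emph{is} Cayley's identity, stated as $\sum_{\sigma}\mathrm{sgn}(\sigma)\prod_{a}\frac{\partial}{\partial x_{a,\sigma(a)}}|X|^{l+1}=\frac{(k+l)!}{l!}|X|^{l}$ and proved self-containedly by a fermionic/partition-counting argument, after which the paper telescopes from $C(1,k)=k!$. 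So both arguments pivot on the same analytic engine, but you apply it globally ($n$ successive applications to $(\det J)^{n}$, bypassing the $\mul(M_l)$ bookkeeping entirely), while the paper applies it once per step of a recursion; your route buys brevity and transparency (your product visibly equals the paper's $C(n,k)=\prod_{m=0}^{n-1}\frac{(m+k)!}{m!}$), whereas the paper's route buys a self-contained proof of the Cayley identity and exhibits the free-fermion combinatorics the authors explicitly advertise as of independent interest — your sketched ``Wick resummation'' alternative is essentially the path the paper actually walks. Two points to tighten: first, your sign claim for $Z[J]$ needs one more line beyond the $k=1$ check, but it is easily supplied — since each measure block $d\psi^{i}_{s}d\psi^{\bar i}_{\bar s}$ and each bilinear $\psi^{i}_{s}\psi^{\bar j}_{\bar s}$ is even, the $2k$-dimensional per-site integral factorizes into $k$ pair integrals, reducing the sign for all $k$ to the single-pair convention, which is exactly the normalization the paper fixes implicitly through its computation $C(1,k)=k!$; second, if you prefer not to cite Cayley's identity as classical, you may simply import Lemma~\ref{lemma5}, whose proof in the paper does not rely on the rest of its Subsection 4.2 machinery, so no circularity arises.
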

Now we recall the well-known facts on the cohomology ring of $G(k,N)$ \cite{Bot1}. 
As we have mentioned in Subsection 1.2, we have the following exact sequence of vector bundles on $G(k,N)$.
\begin{align}
0\to S \to \mathbb{C}^N \to Q \to 0, \label{ges}
\end{align}
where $S$ is the tautological bundle of rank $k$, $\mathbb{C}^N$ is trivial bundle $G(k,N)\times \mathbb{C}^N$ and 
$Q$ is the quotient bundle of rank $N-k$.
Then the cohomology ring of $G(k,N)$ is given by, 
\begin{align}
H^{*}(G(k,N))=\frac{\R[c_1(S),\cdots,c_k(S),c_1(Q),\cdots,c_{N-k}(Q)]}{(c(S)c(Q)=1)}.
\end{align}
Since relation between $i$-th Chern class of  a vector bundle $E$ and the one  of its dual bundle $E^*$ is 
given by, $c_i(E^*)=(-1)^ic_i(E)$, we can take $c_{i}(S^{*})$'s  and $c_{i}(Q^{*})$'s as generators of  $H^{*}(G(k,N))$.               
\begin{align}
H^{*}(G(k,N))=\frac{\R[c_1(S^*),\cdots,c_k(S^*),c_1(Q^*),\cdots,c_{N-k}(Q^*)]}{(c(S^*)c(Q^*)=1)}. 
\label{crgd}
\end{align}
On the other hand, the relation $c(S^*)c(Q^*)=1$ is rewritten by, 
\begin{align}
&c(Q^*)=\frac{1}{c(S^*)}.
\label{grrel}
\end{align}
If we expand $\displaystyle{\frac{1}{c(S^*)}=1/(1+c_{1}(S^{*})t+c_{2}(S^{*})t^2+\cdots+c_{k}(S^{*})t^k)}$ in powers of $t$,
\ba
\frac{1}{1+c_{1}(S^{*})t+c_{2}(S^{*})t^2+\cdots+c_{k}(S^{*})t^{k}}=\sum_{i=0}^{\infty}a_{i}t^{i},
\ea
we can rewrite (\ref{grrel}) as follows:
\ba
c_{i}(Q^{*})=a_{i}\;\;(i=1,2,\cdots,N-k),\;\;a_{i}=0\;\;(i>N-k).
\ea
Note that $a_{i}$ is degree $i$ homogeneous polynomial of $c_{j}(S^{*})$'s $(j=1,2,\cdots,k)$.  
Hence we can eliminate generators $c_{j}(Q^{*})$'s from (\ref{crgd}) and obtain another representation of $H^{*}(G(k,N))$.
\begin{align}
H^{*}(G(k,N))=\frac{\R[c_1(S^*),\cdots,c_k(S^*)]}{(a_{i}=0\;\;(i>N-k))}. 
\label{crgd2}
\end{align} 
At this stage, we look back at Theorem \ref{Th1}.  Let us define $\sigma_{j}$ $(j=1,2\cdots,k)$ by,  
\begin{align}
1+\sigma_1t+\cdots+\sigma_k t^k:=\det(I_k+t \Phi^\prime)=\prod_{j=1}^{k}(1+\lambda_{j}t).
\end{align} 
In other words, $\sigma_{j}$ is the $j$-th fundamental symmetric polynomial of $\lambda_{1},\cdots,\lambda_{k}$.
Note that $\sigma_{k}$ is identified with $\det(\Phi^{\prime})$. 
Then let us consider the ring $\R[\sigma_{1},\cdots,\sigma_{k}]$.
Since we have set, 
\begin{align}
\sum_{i=0}^{N-k}b_it^i=\frac{1}{\det(I_k+t \Phi^\prime)}=\frac{1}{1+\sigma_1t+\cdots+\sigma_k t^k}, \label{qdp}
\end{align} 
assertion of Theorem \ref{Th1}: $b_{i}=0\;\;(i>N-k)$ tells us that we have a ring homomorphism $f:H^{*}(Gr(k,N))\rightarrow \R[\sigma_{1},\cdots,\sigma_{k}]$ defined by,  
\ba
f(c_{j}(S^{*}))=\sigma_{j}\;\;(j=1,2,\cdots,k).
\ea
\begin{theorem}
The ring homomorphism $f:H^{*}(Gr(k,N))\rightarrow \R[\sigma_{1},\cdots,\sigma_{k}]$ is an isomorphism. 
\label{isom}
\end{theorem}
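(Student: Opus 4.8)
The plan is to show that $f$ is simultaneously surjective and injective, deriving injectivity from the Poincar\'e-duality (Gorenstein) structure of $H^{*}(G(k,N))$ together with the nonvanishing supplied by Theorem \ref{ThV}. Surjectivity is immediate: by construction $f(c_{j}(S^{*}))=\sigma_{j}$, and $\sigma_{1},\dots,\sigma_{k}$ generate $\R[\sigma_{1},\dots,\sigma_{k}]$ as an $\R$-algebra, so the whole target lies in the image. I would also record at the outset that $f$ is well defined and graded. Assigning $\deg c_{j}(S^{*})=j$ on the source and $\deg\sigma_{j}=j$ on the target (the latter being half the Grassmann degree, since $\sigma_{j}$ is a sum of $j\times j$ minors of $\Phi^{\prime}$ whose entries $\omega_{s}^{i\bar j}$ have Grassmann degree $2$), the defining relations $a_{i}=0$ $(i>N-k)$ of the source map to $b_{i}=0$ $(i>N-k)$, which hold in the target by Theorem \ref{Th1}. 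Thus $f$ is a graded surjective $\R$-algebra homomorphism, and it only remains to prove $\ker f=0$.

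For injectivity I would use that $H^{*}(G(k,N))$ is a graded Poincar\'e-duality algebra. Since $G(k,N)$ is a compact complex manifold of complex dimension $k(N-k)$ with vanishing odd cohomology, $H^{*}(G(k,N))$ is a graded Artinian Gorenstein $\R$-algebra whose socle is the one-dimensional top piece in algebraic degree $k(N-k)$, and the cup-product pairing $H^{d}\times H^{k(N-k)-d}\to H^{k(N-k)}\cong\R$ is perfect. In such an algebra every nonzero graded ideal contains the socle: if $0\neq x$ lies in a graded ideal $I$ in degree $d$, perfectness of the pairing gives $y$ with $xy\neq0$ in $H^{k(N-k)}$, whence the entire socle $H^{k(N-k)}\subseteq I$. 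Consequently $\ker f$, being a graded ideal, is either zero or contains the socle, so it suffices to show that $f$ is nonzero on the socle.

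This last point is where Theorem \ref{ThV} enters. The top class is represented by $c_{k}(S^{*})^{N-k}$: writing $c(S^{*})=\prod_{i=1}^{k}(1+x_{i})$ in Chern roots, one has $c_{k}(S^{*})^{N-k}=(x_{1}\cdots x_{k})^{N-k}=s_{((N-k)^{k})}(x_{1},\dots,x_{k})$, the Schur class of the maximal $k\times(N-k)$ rectangle, which generates $H^{k(N-k)}(G(k,N))$ and integrates to $1$. Its image is $f(c_{k}(S^{*})^{N-k})=\sigma_{k}^{N-k}=(\det\Phi^{\prime})^{N-k}$, and this is a nonzero element of the Grassmann algebra precisely because its normalized fermionic integral equals $1$, which is the assertion of Theorem \ref{ThV}. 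Hence $f$ does not annihilate the socle, so $\ker f$ cannot contain it, forcing $\ker f=0$. Together with surjectivity this gives that $f$ is an isomorphism.

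The genuine content, and the step I expect to be the main obstacle, is the nonvanishing $(\det\Phi^{\prime})^{N-k}\neq0$, that is, Theorem \ref{ThV} itself; everything else in the argument is a formal consequence of surjectivity and the Gorenstein structure. Granting Theorem \ref{ThV}, the only care required is to match the two gradings correctly and to identify $c_{k}(S^{*})^{N-k}$ with the generator of the top cohomology through its rectangular Schur-polynomial description.
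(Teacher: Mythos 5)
Your proposal is correct and follows essentially the same route as the paper: surjectivity from the generators $\sigma_{j}$ via Theorem \ref{Th1}, and injectivity by Poincar\'e duality pairing a hypothetical kernel element up to the top class $c_{k}(S^{*})^{N-k}$, whose image $(\det\Phi^{\prime})^{N-k}$ is nonzero by Theorem \ref{ThV}. Your packaging of the duality step in Gorenstein-socle language is a mild abstraction of the paper's explicit contradiction argument, not a different method.
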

\begin{proof}
Since (\ref{crgd2}) and Theorem \ref{Th1} holds true, $f$ is surjective. Then it is enough for us to show that $f$ is injective.    
Let us assume that $\mbox{Ker}(f)\neq \{0\}$. Then we have $a\neq 0$ that satisfies $f(a)=0$. Since $f$ preserves degree (of cohomology ring),
we can assume that $a\in H^{2j}(Gr(k,N))$. By Poincar\'e duality theorem, we have $b\in H^{2k(N-k)-2j}(Gr(k,N)$ that satisfies 
$ab=(c_k(S^{*}))^{N-k}\in H^{2k(N-k)}(Gr(k,N))$. On the other hand,  $f(c_k(S^{*})^{N-k})=f(c_k(S^{*}))^{N-k}=(\det(\Phi^{\prime}))^{N-k}$ and Theorem \ref{ThV} tells us 
that $f(c_k(S^{*})^{N-k})\neq 0$. But from the assumption $f(a)=0$, $f(ab)=f(a)f(b)=f(c_k(S^{*})^{N-k})=0$ follows. This contradiction leads 
us to conclude that $f$ is injective. 
\end{proof}
In Subsection 1.2, we have introduced formal line bundle decomposition $S^{*}=\mathop{\oplus}_{i=1}^{k}L_{i}$ and the relation:
\ba
c(S^{*})=\prod_{j=1}^{k}(1+x_{j}t)\;\;(x_{j}=c_{1}(L_{j})).
\ea 
Then $f(x_{j})=\lambda_{j}$, i.e., $f$ identifies $x_{j}$ with $\lambda_{j}$.
Moreover, according to \cite{GH}, normalization condition of integration
on $Gr(k,N)$ is given by,
\ba 
\int_{G(k,N)}(c_{k}(S^{*}))^{N-k}=1.
\ea
Therefore, Theorem \ref{ThV}  and Theorem \ref{isom} lead us to the following equality:
\ba
\int_{Gr(k,N)}g(x_{1},\cdots,x_{k})=\frac{ \prod_{j=0}^{k-1} j!}{ \prod_{j=N-k}^{N-1} j!}\int D\psi^\prime g(\lambda_{1},\cdots,\lambda_{k}),
\label{intrel}
\ea
where $g(x_{1},\cdots,x_{k})$ is a symmetric polynomial of $x_{1},\cdots,x_{k}$ that represents an element of $H^{2k(N-k)}(G(k,N))$.
By combining (\ref{zfinal}) with (\ref{intrel}), we obtain,  
\ba
Z_{MQ}
&=&\frac{ \prod_{j=0}^{k-1} j!}{ \prod_{j=N-k}^{N-1} j!}\int D\psi^\prime \frac{\left(\det(\gamma I_k+\Phi^\prime)\right)^N }{ \gamma^{k} \prod_{l>j} \left(\gamma^{2}-(\lambda_l-\lambda_j)^2 \right)}\no\\
&=&\frac{ \prod_{j=0}^{k-1} j!}{ \prod_{j=N-k}^{N-1} j!}\int D\psi^\prime \gamma^{k(N-k)}\frac{\left(\det(I_k+\frac{1}{\gamma}\Phi^\prime)\right)^N }
{\prod_{l>j} \left(1-\frac{1}{\gamma^{2}}(\lambda_l-\lambda_j)^2 \right)}\no\\
&=&\frac{ \prod_{j=0}^{k-1} j!}{ \prod_{j=N-k}^{N-1} j!}\int D\psi^\prime \gamma^{k(N-k)}\frac{\prod_{i=1}^{k}(1+\frac{1}{\gamma}\lambda_{i})^N }
{\prod_{l>j} \left(1-\frac{1}{\gamma^{2}}(\lambda_l-\lambda_j)^2 \right)}\no\\
&=&\frac{ \prod_{j=0}^{k-1} j!}{ \prod_{j=N-k}^{N-1} j!}\int D\psi^\prime \frac{\prod_{i=1}^{k}(1+\lambda_{i})^N }
{\prod_{l>j} \left(1-(\lambda_l-\lambda_j)^2 \right)}\no\\
&=&\int_{Gr(k,N)}\frac{\prod_{i=1}^{k}(1+x_{i})^N }
{\prod_{l>j} \left(1-(x_l-x_j)^2 \right)}\no\\
&=&\int_{Gr(k,N)}c(T^{\prime}G(k,N))\no\\
&=&\int_{Gr(k,N)}c_{top}(T^{\prime}G(k,N))\no\\
&=&\chi(G(k,N))={N\choose k}.
\ea
This completes proof of the main theorem. $\Box$

\subsection{Proof of Theorem \ref{ThV} }
\begin{definition}
Let ${\cal M}_{l}$ be set of $k\times k$ matrix $M_{l}$:
\ba
M_l:=\left( \begin{array}{ccc}
m^{l}_{1,1} &\ldots&m^{l}_{1,k}\\
\vdots& \ddots&\vdots\\
m^{l}_{k,1}&\ldots&m^{l}_{k,k}\\
\end{array}
\right),
\ea
whose $(i,j)$-element $m_{i,j}^{l}$ is given by non-negative integer that satisfies
the following conditions:
\ba
\sum_{i=1}^{k}m_{i,j}^{l}=l,\;\;(j=1,\cdots,k),\;\;\;
\sum_{j=1}^{k}m_{i,j}^{l}=l\;\;(i=1,\cdots,k).
\ea
\end{definition}
\begin{definition}
Let $S_{k}$ be symmetric group of size $k$. For $\sigma\in S_{k}$, we define $k \times k$ matrix $R(\sigma)$:
\begin{align}
R(\sigma):= \left( \begin{array}{ccc}
\delta_{\sigma(1),1} &\ldots&\delta_{\sigma(1),k}\\
\vdots& \ddots&\vdots\\
\delta_{\sigma(k),1}&\ldots&\delta_{\sigma(k),k}\\
\end{array}
\right),
\end{align}
where $\delta_{i,j}$ is Kronecker's delta symbol.
\end{definition}
\begin{proposition}
We denote by $(n_{\sigma})_{\sigma\in S_{k}}$ a sequence of $k!$ non-negative integers labeled by $\sigma\in S_{k}$.
Let ${\cal N}_{l}$ be set of $(n_{\sigma})_{\sigma\in S_{k}}$'s that satisfy $\sum_{\sigma\in S_{k}}n_{\sigma}=l$.
Then $\varphi:{\cal N}_{l}\to {\cal M}_{l}$ defined by,
\ba
\varphi((n_{\sigma})_{\sigma\in S_{k}})=\sum_{\sigma\in S_{k}}n_{\sigma}R(\sigma)\in {\cal M}_{l},
\ea 
is a surjection.
\end{proposition}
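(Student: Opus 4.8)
The plan is to recognize this claim as the integer form of the Birkhoff--von Neumann theorem: a square matrix of non-negative integers all of whose row sums and column sums equal $l$ is expressible as a sum of $l$ permutation matrices. First I would check that $\varphi$ really lands in $\mathcal{M}_l$: for $(n_\sigma)\in\mathcal{N}_l$, the matrix $\sum_{\sigma}n_\sigma R(\sigma)$ has non-negative integer entries, and since each $R(\sigma)$ contributes exactly $1$ to every row sum and every column sum, the $i$-th row sum of $\varphi((n_\sigma))$ is $\sum_\sigma n_\sigma=l$, and likewise for the columns. Thus $\varphi$ is well defined, and surjectivity is the real content, so the goal is to hit an arbitrary $M_l\in\mathcal{M}_l$.

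I would prove surjectivity by induction on $l$. The base case $l=0$ is the zero matrix, which is $\varphi$ of the all-zero sequence. For the inductive step, the key step is to extract a single permutation matrix: I claim that any $M_l\in\mathcal{M}_l$ with $l\geq 1$ admits a permutation $\sigma$ with $(M_l)_{i,\sigma(i)}\geq 1$ for all $i$, that is, a \emph{positive diagonal}. Granting this, $M_l-R(\sigma)$ again has non-negative integer entries (only the diagonal entries $(M_l)_{i,\sigma(i)}\geq 1$ are decreased, by $1$) and all row and column sums equal to $l-1$, hence lies in $\mathcal{M}_{l-1}$; by the inductive hypothesis it equals $\sum_\tau n'_\tau R(\tau)$ with $\sum_\tau n'_\tau=l-1$, and adding back $R(\sigma)$ exhibits $M_l$ as $\varphi$ of a sequence in $\mathcal{N}_l$.

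The main obstacle is establishing the positive diagonal, which I would settle with Hall's marriage theorem. Form the bipartite graph on rows $\{1,\dots,k\}$ and columns $\{1,\dots,k\}$ with an edge $(i,j)$ whenever $(M_l)_{i,j}>0$; a perfect matching is precisely the desired $\sigma$. To verify Hall's condition, fix a set $S$ of rows and let $N(S)$ be the set of columns meeting some row of $S$. By definition all the mass in the rows of $S$ sits in the columns of $N(S)$, so $l\lvert S\rvert=\sum_{i\in S}\sum_{j}(M_l)_{i,j}=\sum_{i\in S}\sum_{j\in N(S)}(M_l)_{i,j}\leq\sum_{i=1}^{k}\sum_{j\in N(S)}(M_l)_{i,j}=l\lvert N(S)\rvert$, and dividing by $l\geq 1$ gives $\lvert S\rvert\leq\lvert N(S)\rvert$. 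Hall's theorem then yields the matching, completing the induction. I expect the only delicate point to be this counting inequality, the rest being bookkeeping.
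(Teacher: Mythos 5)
Your proof is correct, and it follows the same inductive skeleton that the paper uses --- but the comparison is lopsided, because the paper's printed ``proof'' is only a two-line sketch: it observes that $\varphi:\mathcal{N}_1\to\mathcal{M}_1$ is a bijection and then asserts ``we can prove the proposition by induction of $l$,'' without ever saying why the inductive step works. The entire content of that step is exactly what you supply: any $M_l\in\mathcal{M}_l$ with $l\geq 1$ dominates a permutation matrix, i.e.\ admits a positive diagonal, so that $M_l-R(\sigma)\in\mathcal{M}_{l-1}$. Your verification of this via Hall's marriage theorem (equivalently the Frobenius--K\"onig theorem, which is the standard route to the integer Birkhoff--von Neumann decomposition) is the right tool, and your counting check of Hall's condition,
\begin{align}
l\lvert S\rvert=\sum_{i\in S}\sum_{j\in N(S)}(M_l)_{i,j}\leq\sum_{i=1}^{k}\sum_{j\in N(S)}(M_l)_{i,j}=l\lvert N(S)\rvert,
\end{align}
is exactly where the hypothesis $l\geq 1$ and the equal row/column sums enter; the rest (well-definedness of $\varphi$, the base case $l=0$ in place of the paper's $l=1$, adding $R(\sigma)$ back after invoking the inductive hypothesis) is routine bookkeeping, as you say. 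In short: same strategy as the paper, but your writeup actually closes the gap the paper leaves implicit, so it is if anything more complete than the published argument.
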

\begin{proof}
In the $l=1$ case, assertion of the proposition is obvious because $\varphi:{\cal N}_{1}\to {\cal M}_{1}$ is a bijection.
Then we can prove the proposition by induction of $l$. 
\end{proof}
\begin{remark}
For general $k$ and $l$, $\varphi$ is not injective. For example, in the $k=l=3$ case, we have the following equalities:
\ba
\left(\begin{array}{ccc}1&1&1\\1&1&1\\1&1&1\end{array}\right)
&=&\left(\begin{array}{ccc}1&0&0\\0&1&0\\0&0&1\end{array}\right)
+\left(\begin{array}{ccc}0&1&0\\0&0&1\\1&0&0\end{array}\right)+\left(\begin{array}{ccc}0&0&1\\1&0&0\\0&1&0\end{array}\right)
\no\\
&=&\left(\begin{array}{ccc}0&0&1\\0&1&0\\1&0&0\end{array}\right)
+\left(\begin{array}{ccc}0&1&0\\1&0&0\\0&0&1\end{array}\right)+\left(\begin{array}{ccc}1&0&0\\0&0&1\\0&1&0\end{array}\right).
\ea
\end{remark}

\begin{definition}
Let $X$ be $k\times k$ matrix whose $(i,j)$-element is given by $x_{i,j}$. For $M_{l}\in {\cal M}_{l}$, we define integer $\mul(M_{l})$ by 
the following expansion:
\ba 
 |X|^l=\sum_{M_l\in {\cal M}_{l}} \mul(M_{l}) \prod_{a,b=1}^{k}  x_{a,b}^{m^{l}_{a,b}}.
\label{muldef}
\ea
\end{definition}
\begin{proposition}
$\mul(M_{l})$ is explicitly 
evaluated as follows.
\ba
\mul(M_{l})=\sum_{\varphi((n_{\sigma})_{\sigma\in S_{k}})=M_{l}}\frac{l! \prod_{\sigma \in S_k}\Bigl(\mathrm{sgn}(\sigma)\Bigr)^{n_\sigma}}{\prod_{\sigma \in S_k} n_{\sigma}!}  
\label{Ml}
\ea 
\end{proposition}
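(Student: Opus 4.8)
The plan is to obtain the identity by expanding $|X|=\det(X)$ through its Leibniz formula, raising it to the $l$-th power with the multinomial theorem, and then matching monomials. First I would record the Leibniz expansion of the $k\times k$ determinant and observe that the monomial attached to a permutation $\sigma\in S_k$ is exactly the one whose exponent matrix is the permutation matrix $R(\sigma)$; that is,
\[
|X|=\det(X)=\sum_{\sigma\in S_k}\mathrm{sgn}(\sigma)\prod_{a,b=1}^{k} x_{a,b}^{[R(\sigma)]_{a,b}},
\]
since the unique nonzero entry of $R(\sigma)$ in row $i$ sits in column $\sigma(i)$, so that $\prod_{a,b}x_{a,b}^{[R(\sigma)]_{a,b}}=\prod_{i=1}^{k} x_{i,\sigma(i)}$, the standard term of the determinant.

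Next I would raise this expression to the $l$-th power. Because the $x_{a,b}$ are ordinary commuting variables, the multinomial theorem applies to the $k!$-term sum indexed by $S_k$, giving
\[
|X|^l=\sum_{(n_\sigma)_{\sigma\in S_k}\in\mathcal N_l}\frac{l!}{\prod_{\sigma\in S_k}n_\sigma!}\Bigl(\prod_{\sigma\in S_k}\bigl(\mathrm{sgn}(\sigma)\bigr)^{n_\sigma}\Bigr)\prod_{a,b=1}^{k} x_{a,b}^{\sum_{\sigma\in S_k}n_\sigma[R(\sigma)]_{a,b}},
\]
where the outer sum runs over sequences of non-negative integers $(n_\sigma)_{\sigma\in S_k}$ with $\sum_\sigma n_\sigma=l$, i.e.\ over $\mathcal N_l$. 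The key observation is that the exponent matrix appearing here is precisely $\sum_\sigma n_\sigma R(\sigma)=\varphi((n_\sigma)_{\sigma\in S_k})$, whose row sums and column sums both equal $\sum_\sigma n_\sigma=l$, so it lies in $\mathcal M_l$.

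Finally I would collect terms according to the value $M_l=\varphi((n_\sigma))\in\mathcal M_l$ of the exponent matrix. Since the monomials $\prod_{a,b}x_{a,b}^{m^{l}_{a,b}}$ attached to distinct elements of $\mathcal M_l$ are distinct, hence linearly independent monomials in the commuting variables, this grouping is legitimate, and comparison with the defining expansion (\ref{muldef}) reads off
\[
\mul(M_l)=\sum_{\varphi((n_\sigma)_{\sigma\in S_k})=M_l}\frac{l!\prod_{\sigma\in S_k}\bigl(\mathrm{sgn}(\sigma)\bigr)^{n_\sigma}}{\prod_{\sigma\in S_k}n_\sigma!}.
\]
The one point that needs care --- and the only genuine obstacle --- is that $\varphi$ is \emph{not} injective, as the Remark following its surjectivity illustrates, so a single $M_l$ generally receives contributions from several sequences $(n_\sigma)$. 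Here the surjectivity of $\varphi$ proved in the preceding Proposition guarantees that every $M_l\in\mathcal M_l$ actually occurs, and the inner sum over the full fiber $\varphi^{-1}(M_l)$ is exactly what produces the stated formula.
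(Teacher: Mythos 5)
Your proposal is correct and follows essentially the same route as the paper's own proof: expand $|X|^l$ via the Leibniz formula and the multinomial theorem, identify the exponent matrix of each term with $\varphi((n_\sigma)_{\sigma\in S_k})=\sum_\sigma n_\sigma R(\sigma)$, and compare coefficients with the defining expansion (\ref{muldef}). Your explicit remarks on the non-injectivity of $\varphi$ and the linear independence of the monomials make precise a step the paper leaves implicit, but they do not change the argument.
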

\begin{proof}
First, we explicitly expand $|X|^{l}$ by using definition of $|X|$:
\ba
 |X|^l&=&\Bigl( \sum_{\sigma \in S_k} \mathrm{sgn}(\sigma) \prod_{a=1}^k x_{a,\sigma(a)} \Bigr)^{l} \no\\
&=&\sum_{(n_{\sigma})_{\sigma\in S_{k}}\in {\cal N}_{l}}\frac{l!}{\prod_{\sigma \in S_{k}} n_{\sigma}!} \prod_{\sigma \in S_{k}}\Bigl((\mathrm{sgn}(\sigma))^{n_{\sigma}}  \prod_{a=1}^{k} x_{a,\sigma(a)}^{n_{\sigma}} \Bigr) \no\\
&=&\sum_{(n_{\sigma})_{\sigma\in S_{k}}\in {\cal N}_{l}}\frac{l!}{\prod_{\sigma \in S_{k}} n_{\sigma}!} \prod_{\sigma \in S_{k}}\Bigl((\mathrm{sgn}(\sigma))^{n_{\sigma}}  \prod_{a,b=1}^{k} x_{a,b}^{n_{\sigma}\delta_{\sigma(a),b}} \Bigr) \no\\
&=&\sum_{(n_{\sigma})_{\sigma\in S_{k}}\in {\cal N}_{l}}\frac{l!}{\prod_{\sigma \in S_{k}} n_{\sigma}!} \prod_{\sigma \in S_{k}}\Bigl((\mathrm{sgn}(\sigma))^{n_{\sigma}}\bigr)\prod_{a,b=1}^{k}x_{a,b}^{\sum_{\sigma\in S_{k}}n_{\sigma}\delta_{\sigma(a),b}}. 
\label{pexp}
\ea
$\sum_{\sigma\in S_{k}}n_{\sigma}\delta_{\sigma(a),b}$ is nothing but the $(a,b)$-element of $\sum_{\sigma\in S_{l}}n_{\sigma}R(\sigma)$
and $\sum_{\sigma\in S_{l}}n_{\sigma}R(\sigma)=\varphi((n_{\sigma})_{\sigma\in S_{k}})\in {\cal M}_{l}$.
Then assertion of proposition immediately follows from (\ref{pexp}). 
\end{proof}

\begin{lemma}
The following equality holds, 
\begin{align}
\mul(M_{l+1})=\sum_{\sigma \in S_k} \mathrm{sgn}(\sigma) \mul(M_{l+1}-R(\sigma)), \label{lemma2}
\end{align}
where we set $\mul(M_{l+1}-R(\sigma))=0$ if $M_{l+1}-R(\sigma)\notin {\cal M}_{l}$.
\end{lemma}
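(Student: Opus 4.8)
The plan is to exploit the elementary factorization $|X|^{l+1}=|X|\cdot|X|^{l}$ and to match coefficients of monomials on the two sides. The first factor is the Leibniz expansion
\begin{align}
|X|=\sum_{\sigma\in S_{k}}\mathrm{sgn}(\sigma)\prod_{a=1}^{k}x_{a,\sigma(a)},
\end{align}
while the second is precisely the defining expansion (\ref{muldef}) of $\mul$. The key observation is that, because $R(\sigma)$ has $(a,b)$-entry $\delta_{\sigma(a),b}$, a single permutation term of the determinant is exactly the monomial encoded by $R(\sigma)$:
\begin{align}
\prod_{a=1}^{k}x_{a,\sigma(a)}=\prod_{a,b=1}^{k}x_{a,b}^{[R(\sigma)]_{a,b}}.
\end{align}

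First I would substitute both expansions into $|X|\cdot|X|^{l}$, obtaining a double sum over $\sigma\in S_{k}$ and $M_{l}\in\mathcal{M}_{l}$ in which the exponent matrix of each term is $R(\sigma)+M_{l}$. Since $R(\sigma)$ has all row and column sums equal to $1$ and $M_{l}$ has them all equal to $l$, the sum $R(\sigma)+M_{l}$ has all row and column sums $l+1$ and hence lies in $\mathcal{M}_{l+1}$; thus every term is indexed by a legitimate element of $\mathcal{M}_{l+1}$. Next I would reindex by setting $M_{l+1}:=R(\sigma)+M_{l}$, equivalently $M_{l}=M_{l+1}-R(\sigma)$, and group the terms sharing the same exponent matrix $M_{l+1}$. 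This produces the coefficient $\sum_{\sigma\in S_{k}}\mathrm{sgn}(\sigma)\,\mul(M_{l+1}-R(\sigma))$ in front of $\prod_{a,b}x_{a,b}^{m^{l+1}_{a,b}}$, with the convention that a term vanishes whenever $M_{l+1}-R(\sigma)$ has a negative entry (equivalently lies outside $\mathcal{M}_{l}$). Comparing with the defining expansion (\ref{muldef}) of $|X|^{l+1}$ then yields (\ref{lemma2}).

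The only point requiring care is the coefficient comparison: I must check that distinct matrices in $\mathcal{M}_{l+1}$ give distinct monomials, so that matching coefficients is legitimate. This is immediate, since the exponent pattern $(m_{a,b})$ recovers the matrix $M\in\mathcal{M}_{l+1}$ uniquely. The remaining bookkeeping---that the boundary convention $\mul(M_{l+1}-R(\sigma))=0$ correctly absorbs exactly those $\sigma$ for which the subtraction leaves a negative entry---is precisely what guarantees the two sides agree term by term. I therefore expect the lemma to follow directly from $|X|^{l+1}=|X|\cdot|X|^{l}$, with no genuine obstacle beyond this routine verification.
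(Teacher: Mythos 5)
Your proposal is correct and is essentially identical to the paper's own proof: the paper likewise expands $|X|^{l+1}=|X|^{l}\cdot|X|$, writes each product term with exponent matrix $m^{l}_{a,b}+\delta_{\sigma(a),b}$, reindexes via $M_{l+1}=M_{l}+R(\sigma)$, and compares coefficients of the monomials $\prod_{a,b}x_{a,b}^{m^{l+1}_{a,b}}$, with the same vanishing convention for negative entries. Your additional remarks on the injectivity of the monomial encoding and the boundary convention are exactly the routine verifications the paper leaves implicit.
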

\begin{proof}
\ba
|X|^{l+1}&=&\sum_{M_{l+1}\in{\cal M}_{l+1}} \mul(M_{l+1}) \prod_{a,b=1}^{k}  x_{a,b}^{m_{a,b}^{l+1}}\no\\
&=&|X|^{l}\cdot|X|\no\\ 
&=&\Biggl[\sum_{M_{l}\in {\cal M}_{l}} \mul(M_{l}) \prod_{a,b=1}^{k}  x_{a,b}^{m^{l}_{a,b}} \Biggr]\times \Biggl[\sum_{\sigma \in S_k} \mathrm{sgn}(\sigma) \prod_{a=1}^k x_{a \sigma(a)}\Biggr]\no
\ea
\ba
&=& \sum_{M_{l}\in{\cal M}_{l}} \sum_{\sigma \in S_k} \mathrm{sgn}(\sigma) \mul(M_{l}) \prod_{a,b=1}^{k}  x_{a,b}^{m_{a,b}^{l}+\delta_{\sigma(a) b}}\no \\
&=&\sum_{M_{l+1}\in {\cal M}_{l+1}} \sum_{\sigma \in S_k} \mathrm{sgn}(\sigma) \mul(M_{l+1}-R(\sigma)) \prod_{a,b=1}^{k}  x_{a,b}^{m_{a,b}^{l+1}} .\no
\ea
\end{proof}

\begin{definition}
Let $\psi_{s}^{i}\;\;(s=1,\cdots,l,\;i=1,\cdots,k)$ be complex Grassmann variable and $\psi_{\bar s}^{\bar i}$ be its complex conjugate. 
We denote by $\Phi^{\prime}$ a $k\times k$ matrix whose $(i,j)$-element is given by $\sum_{s=1}^{l}\psi_{s}^{i}\psi_{\bar s}^{\bar j}$. 
Then we define $C(l,k)$ as follows.
\begin{align}
&C(l,k):=\int \Bigl(\prod_{s=1}^{l}\prod_{i=1}^{k}d\psi_{s}^{i}d\psi_{\bar s}^{\bar i}\Bigr) \det(\Phi^{\prime})^{l}.
\label{defc}
\end{align}

\end{definition}

For brevity, we introduce the following notations.
\ba
D\psi&:=&\prod_{s=1}^{l}\prod_{i=1}^{k}d\psi_{s}^{i}d\psi_{\bar s}^{\bar i},\no\\
\omega_{s}^{i\bar j}&:=&\psi_{s}^{i}\psi_{\bar s}^{\bar j}.
\ea
We note here the equalities:
\ba
(\omega_{s}^{i\bar j})^2=\omega_{s}^{i\bar j}\omega_{s}^{i\bar k}=\omega_{s}^{i\bar j}\omega_{s}^{k\bar j}=0,
\ea
which plays an important role in proof of the next lemma. 
\begin{lemma}
\begin{align}
C(l,k)=\sum_{M_l\in{\cal M}_{l}}\Biggl[\prod_{a,b=1}^k (m^{l}_{a,b}) ! \Biggr] \mul(M_{l})^2. 
\label{eq:con1}
\end{align}
\label{clklem}
\end{lemma}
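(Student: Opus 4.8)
The plan is to expand $\det(\Phi^\prime)^l$ through the polynomial identity (\ref{muldef}) defining $\mul$, integrate term by term, and show that the Berezin integral attached to each matrix $M_l\in{\cal M}_l$ produces a second copy of $\mul(M_l)$ together with the factorial weight. Substituting $x_{a,b}=\Phi^\prime_{a,b}=\sum_{s=1}^l\omega_s^{a\bar b}$ into (\ref{muldef}) gives
\begin{align}
\det(\Phi^\prime)^l=\sum_{M_l\in{\cal M}_l}\mul(M_l)\prod_{a,b=1}^k\Bigl(\sum_{s=1}^l\omega_s^{a\bar b}\Bigr)^{m^l_{a,b}},
\end{align}
so that $C(l,k)=\sum_{M_l}\mul(M_l)\,I(M_l)$, where $I(M_l):=\int D\psi\,\prod_{a,b=1}^k\bigl(\sum_s\omega_s^{a\bar b}\bigr)^{m^l_{a,b}}$. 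Everything therefore reduces to the single identity $I(M_l)=\bigl[\prod_{a,b}(m^l_{a,b})!\bigr]\,\mul(M_l)$, which when fed back into the sum reproduces (\ref{eq:con1}).

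To evaluate $I(M_l)$ I would first use that each $\omega_s^{a\bar b}$ is an even element, so all the $\omega$'s commute, together with the nilpotency relations $(\omega_s^{a\bar b})^2=0$ recorded just before Lemma \ref{clklem}. The multinomial theorem for commuting square-zero variables yields $\bigl(\sum_s\omega_s^{a\bar b}\bigr)^{m}=m!\sum_{|S|=m}\prod_{s\in S}\omega_s^{a\bar b}$, and distributing the product over $(a,b)$ gives
\begin{align}
I(M_l)=\Bigl[\prod_{a,b=1}^k(m^l_{a,b})!\Bigr]\sum_{\{S_{a,b}\}}\int D\psi\prod_{a,b=1}^k\prod_{s\in S_{a,b}}\omega_s^{a\bar b},
\end{align}
the sum running over all families $S_{a,b}\subseteq\{1,\dots,l\}$ with $|S_{a,b}|=m^l_{a,b}$. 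It then remains to identify the remaining sum of Berezin integrals with $\mul(M_l)$.

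The decisive step is to determine which families $\{S_{a,b}\}$ give a nonzero integral. Since $D\psi$ contains each $\psi^a_s$ and each $\psi^{\bar b}_s$ exactly once, the monomial $\prod_{a,b}\prod_{s\in S_{a,b}}\omega_s^{a\bar b}$ survives only if each of these $2kl$ variables occurs exactly once; as $\psi^a_s$ and $\psi^{\bar b}_s$ sit inside $\omega_s^{a\bar b}$, this forces, for each fixed $s$, the $0$–$1$ matrix $\bigl([\,s\in S_{a,b}\,]\bigr)_{a,b}$ to be a permutation matrix $R(\sigma_s)$ for some $\sigma_s\in S_k$, and conversely every sequence $(\sigma_1,\dots,\sigma_l)$ arises this way. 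The size constraints $|S_{a,b}|=m^l_{a,b}$ become $\sum_{s=1}^l R(\sigma_s)=M_l$, i.e. $\varphi\bigl((n_\sigma)_{\sigma\in S_k}\bigr)=M_l$ with $n_\sigma=\#\{s:\sigma_s=\sigma\}$. Reorganizing the commuting factors by $s$ (the blocks for different $s$ use disjoint variables) factorizes the integral into $\prod_s\int\prod_i d\psi^i_s d\psi^{\bar i}_s\,\prod_a\psi^a_s\psi^{\overline{\sigma_s(a)}}_s$, and a direct sign count (push the barred variables to the right past the unbarred ones, then sort them by $\sigma_s$) shows each single-site factor equals $\mathrm{sgn}(\sigma_s)$ up to one $\sigma$-independent sign. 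Summing over all sequences and grouping by the multiplicities $(n_\sigma)$, the $\frac{l!}{\prod_\sigma n_\sigma!}$ orderings in each multiplicity class together with $\prod_\sigma(\mathrm{sgn}\,\sigma)^{n_\sigma}$ reproduce exactly the expression (\ref{Ml}) for $\mul(M_l)$, giving $I(M_l)=[\prod_{a,b}(m^l_{a,b})!]\,\mul(M_l)$ and hence (\ref{eq:con1}).

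I expect the main obstacle to be the sign bookkeeping rather than the combinatorics: the bijection between nonvanishing subset-families and permutation sequences is clean, but one must verify that each single-site Berezin integral contributes precisely $\mathrm{sgn}(\sigma_s)$, so that the product over $s$ reassembles $\prod_\sigma(\mathrm{sgn}\,\sigma)^{n_\sigma}$ with the residual $s$-independent sign equal to $+1$. I would pin this down by fixing the Berezin conventions and checking the base case $l=1$, where the claimed formula reduces to $C(1,k)=k!$. The commutativity of the even elements $\omega_s^{a\bar b}$ is what makes the cross terms harmless and keeps the whole argument tractable.
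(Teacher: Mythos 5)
Your proposal is correct and follows essentially the same route as the paper's own proof: expand $\det(\Phi^\prime)^l$ via (\ref{muldef}), extract the factorials $\prod_{a,b}(m^l_{a,b})!$ using $(\omega_s^{a\bar b})^2=0$, identify the nonvanishing subset-families $\{S_{a,b}\}$ with sequences $(\sigma_1,\dots,\sigma_l)$ of permutations satisfying $\sum_{s}R(\sigma_s)=M_l$, and regroup by the multiplicities $(n_\sigma)$ to recover a second factor of $\mul(M_l)$ through (\ref{Ml}). Your extra attention to the per-site sign bookkeeping, pinned down by the base case $C(1,k)=k!$, merely makes explicit a step the paper asserts without detailed verification.
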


\begin{proof}
By substituting $\Phi^{\prime}$ for $X$ in (\ref{muldef}), we rewrite (\ref{defc}) as follows. 
\begin{align}
&C(l,k)=\int D\psi (\mathrm{det}\Phi^\prime)^{l} 
=\sum_{M_l\in{\cal M}_{l}} \mul(M_{l}) \int D\psi  \prod_{a,b=1}^{k} \Bigl(  \sum_{s=1}^{l} \omega_s^{ a \bar b} \Big)^{m^{l}_{a,b}} \notag \\
&=\sum_{M_l\in{\cal M}_{l}} \mul(M_{l}) \int D\psi  \prod_{a,b=1}^{k} \biggl( (m_{a,b}^{l})!\bigl(\sum_{1\leq s^1_{a,b}<\cdots<s^{m^{l}_{a,b}}_{a,b}\leq l}  \omega_{s^1_{a,b}}^{ a\bar b}\cdots \omega_{s^{m^l_{a,b}}_{a,b}}^{ a \bar b}\bigr)  \biggr) \notag \\
&=\sum_{M_l\in{\cal M}_{l}} \mul(M_{l}) \Biggl[ \prod_{a,b=1}^{k} (m^{l}_{a,b})!\Biggr]\int D\psi\Bigl( \sum_{1\leq s^1_{a,b}<\cdots<s^{m^{l}_{a,b}}_{a,b}\leq l}   \prod_{a,b=1}^{k}  \omega_{s^1_{a,b}}^{ a\bar b}\cdots \omega_{s^{m^l_{a,b}}_{a,b}}^{ a \bar b}  \Big).
\label{clk}
\end{align}
In going from the first line to the second line, we used an equality:
\ba
\Bigl(  \sum_{s=1}^{l} \omega_s^{ a \bar b} \Big)^{m^{l}_{a,b}}=(m_{a,b}^{l})!\bigl(\sum_{1\leq s^1_{a,b}<\cdots<s^{m^{l}_{a,b}}_{a,b}\leq l}  \omega_{s^1_{a,b}}^{ a\bar b}\cdots \omega_{s^{m^l_{a,b}}_{a,b}}^{ a \bar b}\bigr), 
\ea
which follows from $(\omega_{s}^{a\bar b})^2=0$.
Let us define $S_{a,b}:=\{s^1_{a,b},\cdots,s^{m_{a,b}^{l}}_{a,b}\}\;(|S_{a,b}|=m^{l}_{a,b})$
associated with the sequence $1\leq s^1_{a,b}<\cdots<s^{m^{l}_{a,b}}_{a,b}\leq l$ in the last line of (\ref{clk}). 
Since  $\omega_{s}^{i\bar j}\omega_{s}^{i\bar k}=\omega_{s}^{i\bar j}\omega_{s}^{k\bar j}=0$, $S_{ab}$'s associated with non-vanishing $\prod_{a,b=1}^{k}  \omega_{s^1_{a,b}}^{ a\bar b}\cdots \omega_{s^{m^l_{a,b}}_{a,b}}^{ a \bar b}$ satisfy the following condition:
\ba
\coprod_{a=1}^k S_{a,b}&=&\{1,\cdots, l\}\;\;\;(b=1,\cdots,k),\no\\
\coprod_{b=1}^k S_{a,b}&=&\{1,\cdots, l\}\;\;\;(a=1,\cdots,k).
\label{scond}
\ea
We denote by $\{S_{a,b}\}$ set of $S_{a,b}$'s $(a,b=1,\cdots,k)$ that satisfy $|S_{a,b}|=m_{a,b}^{l}$ and the condition (\ref{scond}).
We also denote by $S(M_{l})$ set of $\{ S_{a,b}\}$'s associated with the matrix $\bigl(m_{a,b}^{l}\bigr)=M_{l}\in {\cal M}_{l}$. 
Then we can further rewrite $C(l,k)$ into the following form:
\ba
C(l,k)=\sum_{M_l\in{\cal M}_{l}} \mul(M_{l}) \Biggl[ \prod_{a,b=1}^{k} (m^{l}_{a,b})!\Biggr]\int D\psi\Bigl( \sum_{\{S_{a,b}\}\in S(M_{l})}\prod_{a,b=1}^{k}  \omega_{s^1_{a,b}}^{ a\bar b}\cdots \omega_{s^{m^l_{a,b}}_{a,b}}^{ a \bar b}  \Big).
\ea
We take a closer look at the sum  $\displaystyle{\sum_{\{S_{a,b}\}\in S(M_{l})}\prod_{a,b=1}^{k}\omega_{s^1_{a,b}}^{ a\bar b}\cdots \omega_{s^{m^l_{a,b}}_{a,b}}^{ a \bar b}}$. For a fixed element $\{S_{a,b}\}\in S(M_{l})$, we can construct a sequence $(\sigma_{1},\sigma_{2},\cdots,\sigma_{l})$ of permutations $\sigma_{s}\in S_{k}\;\;(s=1,\cdots, l)$. This is because for each $s\in\{1,2,\cdots,l\}$, 
we can fix unique permutation $\sigma_{s}\in S_{k}$ that satisfy $s\in S_{a,\sigma_{s}(a)}\;\;(a=1,\cdots,k)$ by using the condition (\ref{scond}). Obviously, the sequence $(\sigma_{1},\sigma_{2},\cdots,\sigma_{l})$ satisfy the following condition:  
\ba
\sum_{s=1}^{l}R(\sigma_{s})=M_{l}.
\label{seqcond}
\ea
Conversely, for a sequence $(\sigma_{1},\cdots,\sigma_{l})$ that satisfy (\ref{seqcond}), we can construct unique $\{S_{a,b}\}\in S(M_{l})$ 
that satisfies $s\in S_{a,\sigma_{s}(a)}\;\;(a=1,\cdots,k)$. Hence we have one to one correspondence between $\{S_{a,b}\}\in S(M_{l})$
and a sequence $(\sigma_{1},\cdots,\sigma_{l})$ that satisfy (\ref{seqcond}).
Since we can construct  $\frac{l!}{\prod_{\sigma\in S_{k}} n_\sigma!}$ different elements of $S(M_{l})$ from a fixed $(n_{\sigma})_{\sigma\in S_{k}}$ that satisfies $\varphi((n_{\sigma})_{\sigma\in S_{k}})=M_{l}$, the following equality holds. 
$$
\int D\psi \biggl(\sum_{\{S_{a,b}\}\in S(M_{l})}\prod_{a,b=1}^{k}\omega_{s^1_{a,b}}^{ a\bar b}\cdots \omega_{s^{m^l_{a,b}}_{a,b}}^{ a \bar b}\biggr)=\sum_{\varphi((n_{\sigma})_{\sigma\in S_{k}})=M_{l}} \frac{l!}{\prod_{\sigma\in S_{k}} n_\sigma!} \int D\psi  \prod_{s=1}^{l}\omega_{s}^{1 \overline{\sigma_s(1)}}\cdots \omega_{s}^{ k \overline{\sigma_s(k)}}  
$$
Then we obtain,
\ba
&&C(l,k)\no\\
&&=\sum_{M_l\in{\cal M}_{l}} \mul(M_{l}) 
\Biggl[ \prod_{a,b=1}^{k} (m^{l}_{a,b})! \Biggr]
\left[\sum_{\varphi((n_{\sigma})_{\sigma\in S_{k}})=M_{l}}  \frac{l!}{\prod_{\sigma\in S_{k}} n_\sigma!} \int D\psi  \prod_{s=1}^{l} \Bigl(   \omega_{s}^{1 \overline{\sigma_s(1)}}\cdots \omega_{s}^{ k \overline{\sigma_s(k)}}  \Big) \right]\no\\
&&=\sum_{M_l\in{\cal M}_{l}} \mul(M_{l}) \Biggl[ \prod_{a,b=1}^{k} (m^{l}_{a,b})!\Biggr]
\left[\sum_{\varphi((n_{\sigma})_{\sigma\in S_{k}})=M_{l}} \frac{l!}{\prod_{\sigma\in S_{k}} n_\sigma !}   \prod_{\sigma} \mathrm{sgn}(\sigma)^{n_\sigma} \right]\no\\
&&=\sum_{M_l\in{\cal M}_{l}} \mul(M_{l}) \Biggl[ \prod_{a,b=1}^{k} (m^{l}_{a,b})!\Biggr] \mul(M_{l}).
\ea
In going from the third line to the last line, we used (\ref{Ml}).
\end{proof}
\begin{lemma} ({\bf cf.  \cite{Fuji2}})
\begin{align}
\sum_{\sigma  \in S_k}\mathrm{sgn}(\sigma)\Bigl(\prod_{a=1}^k \frac{\partial}{\partial x_{a, \sigma(a)}} \Bigr) |X|^{l+1} 
=\frac{(k+l)!}{l!}|X|^{l}. \label{lemma3}
\end{align}
\label{lemma5}
\end{lemma}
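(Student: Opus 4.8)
The plan is to recognise the operator $\mathcal{D}:=\sum_{\sigma\in S_k}\mathrm{sgn}(\sigma)\prod_{a=1}^k\partial/\partial x_{a,\sigma(a)}$ as $\det(\partial_{a,b})$ with $\partial_{a,b}:=\partial/\partial x_{a,b}$, the Cayley $\Omega$-operator, so that \eqref{lemma3} is the classical Cayley identity $\det(\partial)\,|X|^{s}=s(s+1)\cdots(s+k-1)\,|X|^{s-1}$ with $s=l+1$ (cf. \cite{Fuji2}). First I would settle the base case $l=0$: since every term of $|X|=\sum_{\tau\in S_k}\mathrm{sgn}(\tau)\prod_a x_{a,\tau(a)}$ is multilinear, $\prod_a\partial_{a,\sigma(a)}|X|$ picks out only the $\tau=\sigma$ term and equals $\mathrm{sgn}(\sigma)$; hence $\mathcal{D}|X|=\sum_{\sigma}\mathrm{sgn}(\sigma)^2=k!=\tfrac{(k+0)!}{0!}$, as required.

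The structural core is a covariance reduction. Put $F(X):=\mathcal{D}\,|X|^{\,l+1}$, a polynomial of degree $k(l+1)-k=kl$, the same degree as $|X|^{l}$. A change of variables $Y=AXB$ with $A,B\in GL_k$ gives $(\partial/\partial x_{a,b})=(A^{T}(\partial/\partial Y)B^{T})_{a,b}$, hence $\det(\partial_X)=(\det A\det B)\det(\partial_Y)$; combining this with $|X|^{l+1}=(\det A\det B)^{-(l+1)}|Y|^{l+1}$ yields $F(AXB)=(\det A\det B)^{l}F(X)$, which is exactly the transformation law of $|X|^{l}$. Since the invertible matrices form a single open $GL_k\times GL_k$-orbit, the rational function $F/|X|^{l}$ is invariant and therefore constant, so $F$ is a scalar multiple of $|X|^{l}$: $\mathcal{D}\,|X|^{\,l+1}=c_{k,l}\,|X|^{l}$.

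It then remains to prove $c_{k,l}=\tfrac{(k+l)!}{l!}$. Evaluating the last identity at $X=I$ gives $c_{k,l}=\big[\mathcal{D}\,|X|^{\,l+1}\big]_{X=I}$, and I would compute this by expanding $\det(\partial)$ on the factorisation $|X|^{l+1}=|X|\cdot|X|^{l}$ through multilinearity of the determinant in its rows: splitting each row $a$ of $(\partial_{a,b})$ into the piece acting on $|X|$ and the piece acting on $|X|^{l}$ produces a sum over subsets $S\subseteq\{1,\dots,k\}$ of products of complementary minor-operators, and the factor coming from $|X|$ collapses at $X=I$ to principal blocks. This gives a recursion relating $c_{k,l}$ to $c_{k,l-1}$ together with values of the \emph{partial} operators $\det(\partial_{a,b})_{a,b\in A}$ applied to $|X|^{l}$. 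Evaluating these intermediate partial-$\Omega$ terms is the main obstacle: they are what force the rising factorial $l(l+1)\cdots(l+k-1)$ instead of a naive power, and organising them so that the induction on $l$ (anchored by the base case above) closes is the only delicate point.

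An alternative that stays inside the combinatorial framework already developed here is to expand $|X|^{l+1}=\sum_{M'\in\mathcal{M}_{l+1}}\mul(M')\prod_{a,b}x_{a,b}^{m'_{a,b}}$ and apply $\mathcal{D}$ termwise. Differentiating the monomial indexed by $M'$ along $\sigma$ brings down $\prod_a m'_{a,\sigma(a)}$ and lowers the exponent matrix to $M'-R(\sigma)$; collecting, for each $M\in\mathcal{M}_{l}$, the contributions with $M'=M+R(\sigma)$ reduces \eqref{lemma3} to the scalar statement $\sum_{\sigma\in S_k}\mathrm{sgn}(\sigma)\big(\prod_{a}(m_{a,\sigma(a)}+1)\big)\mul(M+R(\sigma))=\tfrac{(k+l)!}{l!}\mul(M)$, which one would attack using the recursion \eqref{lemma2} and the explicit formula \eqref{Ml} for $\mul$. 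In either route the bookkeeping of the cross terms (equivalently, the partial-$\Omega$ contributions) is the crux; the rest is formal.
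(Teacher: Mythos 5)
Your covariance reduction is correct and complete as far as it goes: the transformation law $F(AXB)=(\det A\det B)^{l}F(X)$ for $F:=\mathcal{D}|X|^{l+1}$, the transitivity of the $GL_k\times GL_k$ action on invertible matrices, and polynomiality do legitimately yield $\mathcal{D}|X|^{l+1}=c_{k,l}|X|^{l}$, and your base case $\mathcal{D}|X|=k!$ is right. But the entire quantitative content of the lemma is the value $c_{k,l}=(l+1)(l+2)\cdots(l+k)$, and neither of your two routes establishes it --- as you concede yourself (``the main obstacle'', ``the crux''). Route A's row-splitting of $\det(\partial)$ over $|X|\cdot|X|^{l}$ produces partial Cayley operators (minors of $\partial$ applied to $|X|^{l}$) whose evaluation at $X=I$ is essentially as hard as the original problem, and you give no mechanism that closes the induction. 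Route B reduces the lemma to the scalar identity $\sum_{\sigma}\mathrm{sgn}(\sigma)\bigl(\prod_{a}(m_{a,\sigma(a)}+1)\bigr)\mul(M+R(\sigma))=\frac{(k+l)!}{l!}\mul(M)$; but this is precisely \eqref{essential}, which the paper \emph{derives from} the present lemma in the proof of Lemma \ref{lemma6} --- it is nowhere proved independently by manipulating \eqref{lemma2} and \eqref{Ml}, and you supply no such proof either. So the proposal proves proportionality to $|X|^{l}$ but leaves the constant, i.e.\ the actual assertion \eqref{lemma3}, unproved; as written it is a genuinely incomplete argument, not merely a different one.

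For comparison, the paper evaluates the constant head-on by a fermionic device: it writes $|X|^{l+1}=\int D\psi\exp(\sum_{i,j}x_{i,j}\omega^{i,j})$ with $\omega^{i,j}=\sum_{a=1}^{l+1}\psi_{\bar a}^{\bar i}\psi_{a}^{j}$, so that $\mathcal{D}$ simply inserts $\sum_{\sigma}\mathrm{sgn}(\sigma)\prod_{a}\omega^{a,\sigma(a)}$ under the integral; a linear change of fermionic variables reduces to diagonal $X$ (this plays the role of your orbit argument), after which $|X|^{l}$ factors out and the residual Berezin integral counts $k$-tuples $(a_1,\dots,a_k)\in\{1,\dots,l+1\}^{k}$ constant on the cycles of $\sigma$. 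This gives $c_{k,l}=\sum_{\sigma\in S_k}(l+1)^{(\text{number of cycles of }\sigma)}$, which the conjugacy-class generating identity \eqref{step6} evaluates as the rising factorial $(l+1)(l+2)\cdots(l+k)=\frac{(k+l)!}{l!}$. If you want to keep your (arguably cleaner) reduction, the shortest way to finish is exactly this cycle count: evaluate $c_{k,l}=[\mathcal{D}|X|^{l+1}]_{X=I}$ by expanding $|X|^{l+1}$ over $(l+1)$-tuples of permutations, observe that the surviving terms for a given $\sigma$ are indexed by functions constant on its cycles, and then invoke $\sum_{\sigma\in S_k}x^{(\text{number of cycles of }\sigma)}=x(x+1)\cdots(x+k-1)$. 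Without some such evaluation, the proof does not stand.
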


\begin{proof}
We start from the following identity:
\ba
|X|^{l+1}&=&\int\prod_{a=1}^{l+1}\prod_{i=1}^{k}d\psi_{\bar a}^{\bar i}d\psi_{a}^{i}\exp(\sum_{a=1}^{l+1}\sum_{i,j=1}^{k}\psi_{\bar a}^{\bar i}x_{i,j}\psi_{a}^{j})\no\\
&=&\int D\psi\exp(\sum_{i,j=1}^{k}x_{i,j}\omega^{i,j}).
\ea
In going from the first line to the second line, we set $D\psi=\prod_{i=1}^{k}d\psi_{a}^{i}d\psi_{\bar a}^{\bar i},\;\;\omega^{i,j}=\sum_{a=1}^{l+1}\psi_{\bar a}^{\bar i}\psi_{a}^{j}$.
Then we obtain,
\ba
&&\sum_{\sigma  \in S_k}\mathrm{sgn}(\sigma)\biggl(\prod_{a=1}^k \frac{\partial}{\partial x_{a,\sigma(a)}} \biggr) |X|^{l+1}\no\\
&=&\sum_{\sigma  \in S_k}\mathrm{sgn}(\sigma)\biggl(\prod_{a=1}^k \frac{\partial}{\partial x_{a, \sigma(a)}} \biggr)
\int D\psi\exp(\sum_{i,j=1}^{k}x_{i,j}\omega^{i,j})\no\\
&=&\int D\psi\exp(\sum_{i,j=1}^{k}x_{i,j}\omega^{i,j})\biggl(\sum_{\sigma  \in S_k}    \mathrm{sgn}(\sigma) \prod_{a=1}^{k}\omega^{a,\sigma(a)}\biggr).
\ea
Since the integrand in the last line is invariant under coordinate change $\psi_{a}^{i}\rightarrow P^{i}_{j}\psi_{a}^{j},\;\;
\psi_{\bar a}^{\bar i}\rightarrow  (P^{-1})^{i}_{j}\psi_{\bar a}^{\bar j}$ ($P$ is arbitrary $k \times k$ invertible matrix), we can assume that$X$ is a diagonal matrix $(x_{i,j}=
\lambda_{i}\delta_{i,j})$ from the start \footnote{Precisely speaking, $X$ is only reduced to Jordan normal form, but this does not 
affect the remaining part of the proof.}. 
Hence we obtain,
\ba
&&\sum_{\sigma  \in S_k}\mathrm{sgn}(\sigma)\biggl(\prod_{a=1}^k \frac{\partial}{\partial x_{a,\sigma(a)}} \biggr) |X|^{l+1}\no\\
&=&\int D\psi\exp(\sum_{i=1}^{k}\lambda_{i}\omega^{i,i})\biggl(\sum_{\sigma  \in S_k}\mathrm{sgn}(\sigma)\prod_{j=1}^{k}\omega^{j,\sigma(j)}\biggr)\no\\
&=&\int D\psi\frac{(\sum_{i=1}^{k}\lambda_{i}\omega^{i,i})^{kl}}{(kl)!}\biggl(\sum_{\sigma  \in S_k}\mathrm{sgn}(\sigma)\prod_{j=1}^{k}\omega^{j,\sigma(j)}\biggr)
\no\\
&=&\int D\psi\biggl(\sum_{\substack{n_1+n_2+\cdots+n_{k}=kl\\n_{1},\cdots,n_{k}\geq 0}}\prod_{i=1}^{k}\frac{(\lambda_{i}\omega^{i,i})^{n_{i}}}{n_{i}!}\biggr)\biggl(\sum_{\sigma  \in S_k}\mathrm{sgn}(\sigma)\prod_{j=1}^{k}\omega^{j,\sigma(j)}\biggr)\no\\
&=&\int D\psi \biggl(\prod_{i=1}^{k}\frac{(\lambda_{i}\omega^{i,i})^{l}}{l!} \biggr)\biggl(\sum_{\sigma  \in S_k}\mathrm{sgn}(\sigma)\prod_{j=1}^{k}\omega^{j,\sigma(j)}\biggr).
\label{step1}
\ea
In going from the fourth line to the last line, we used the condition that $\psi_{*}^{j}$ and $\psi_{\bar *}^{\bar j}$ must appear exactly 
at $l+1$ times for each $j$ for non vanishing result. We can further proceed as follows.
\ba
&&\int D\psi \biggl(\prod_{i=1}^{k}\frac{(\lambda_{i}\omega^{i,i})^{l}}{l!}\biggr) \biggl(\sum_{\sigma  \in S_k}\mathrm{sgn}(\sigma)\prod_{j=1}^{k}\omega^{j,\sigma(j)}\biggr)\no\\
&=&\prod_{i=1}^{k}(\lambda_{i})^{l}\int D\psi \biggl(\prod_{i=1}^{k}\frac{(\omega^{i,i})^{l}}{l!} \biggr) \biggl(\sum_{\sigma  \in S_k}\mathrm{sgn}(\sigma)\prod_{j=1}^{k}\omega^{j,\sigma(j)}\biggr)\no\\
&=&|X|^{l}\int D\psi \biggl(\prod_{i=1}^{k}\frac{(\omega^{i,i})^{l}}{l!}\biggr)\biggl(\sum_{\sigma  \in S_k}\mathrm{sgn}(\sigma)\prod_{j=1}^{k}\omega^{j,\sigma(j)}\biggr).
\label{step2}
\ea
At this stage, we recall the following equalities:
\ba
&&(\omega^{i,i})^{l}=(\sum_{a=1}^{l+1}\psi_{\bar a}^{\bar i}\psi_{a}^{i})^{l}=\sum_{a=1}^{l+1}l!(\prod_{b\neq a}\psi_{\bar b}^{\bar i}\psi_{b}^{i})
,\no\\
&&\sum_{\sigma  \in S_k}\mathrm{sgn}(\sigma)\prod_{j=1}^{k}\omega^{j,\sigma(j)}
=\sum_{a_{1}=1}^{l+1}\cdots\sum_{a_{k}=1}^{l+1}\sum_{\sigma  \in S_k}\mathrm{sgn}(\sigma)
\prod_{j=1}^{k}\psi_{\bar a_{j}}^{\bar j}\psi_{a_{j}}^{\sigma(j)}.
\label{step3}
\ea
By combining (\ref{step1}), (\ref{step2}) and (\ref{step3}), we obtain,
\ba
&&\sum_{\sigma  \in S_k}\mathrm{sgn}(\sigma)\biggl(\prod_{a=1}^k \frac{\partial}{\partial x_{a,\sigma(a)}} \biggr) |X|^{l+1}\no\\
&=&|X|^{l}\biggl(\sum_{\sigma  \in S_k}\sum_{a_{1}=1}^{l+1}\cdots\sum_{a_{k}=1}^{l+1}\int D\psi \left(\prod_{i=1}^{k}
\bigl(\sum_{a=1}^{l+1}\prod_{b\neq a}\psi_{\bar b}^{\bar i}\psi_{b}^{i}\bigr)\right) \mathrm{sgn}(\sigma)
\prod_{j=1}^{k}\psi_{\bar a_{j}}^{\bar j}\psi_{a_{j}}^{\sigma(j)}\biggr)\no\\
&=&|X|^{l}\biggl(\sum_{\sigma  \in S_k}\sum_{a_{1}=1}^{l+1}\cdots\sum_{a_{k}=1}^{l+1}\int D\psi \left(\prod_{i=1}^{k}
\bigl(\sum_{a=1}^{l+1}\prod_{b\neq a}\psi_{\bar b}^{\bar i}\psi_{b}^{i}\bigr) \right) \mathrm{sgn}(\sigma)\mathrm{sgn}(\sigma^{-1})
\prod_{j=1}^{k}\psi_{\bar a_{j}}^{\bar j}\psi_{a_{\sigma^{-1}(j)}}^{j}\biggr)\no\\
&=&|X|^{l}\biggl(\sum_{\sigma  \in S_k}\sum_{a_{1}=1}^{l+1}\cdots\sum_{a_{k}=1}^{l+1}\int D\psi \left(\prod_{i=1}^{k}
\bigl(\sum_{a=1}^{l+1}\prod_{b\neq a}\psi_{\bar b}^{\bar i}\psi_{b}^{i}\bigr)\right)
\prod_{j=1}^{k}\psi_{\bar a_{j}}^{\bar j}\psi_{a_{\sigma^{-1}(j)}}^{j}\biggr)\no\\
&=&|X|^{l}\biggl(\sum_{\sigma  \in S_k}\sum_{a_{1}=1}^{l+1}\cdots\sum_{a_{k}=1}^{l+1}\int D\psi \left(\prod_{i=1}^{k}
\bigl(\sum_{a=1}^{l+1}\prod_{b\neq a}\psi_{\bar b}^{\bar i}\psi_{b}^{i}\bigr) \right)
\prod_{j=1}^{k}\psi_{\bar a_{j}}^{\bar j}\psi_{a_{\sigma(j)}}^{j}\biggr).
\label{step4}
\ea
In going from the second line to the third line, we used the relation,
\ba
\prod_{j=1}^{k}\psi_{\bar a_{j}}^{\bar j}\psi_{a_{j}}^{\sigma(j)}=\mathrm{sgn}(\sigma^{-1})\prod_{j=1}^{k}\psi_{\bar a_{j}}^{\bar j}\psi_{a_{\sigma^{-1}(j)}}^{j}.
\ea
At this stage, we fix $\sigma\in {S_{k}}$ and consider the integral:
\ba
\sum_{a_{1}=1}^{l+1}\cdots\sum_{a_{k}=1}^{l+1}\int D\psi \left(\prod_{i=1}^{k}
\bigl(\sum_{a=1}^{l+1}\prod_{b\neq a}\psi_{\bar b}^{\bar i}\psi_{b}^{i}\bigr)\right) 
\prod_{j=1}^{k}\psi_{\bar a_{j}}^{\bar j}\psi_{a_{\sigma(j)}}^{j}.
\ea
We can easily see that it counts number of sequences $(a_{1},a_{2},\cdots,a_{k})$ $(1\leq a_{j}\leq l+1)$ that satisfy the 
following condition:
\ba
a_{j}=a_{\sigma(j)}\;\;(j=1,2,\cdots,k).
\label{acond}
\ea 
Let $P_{k}$ be set of partition of $k$,
\ba
P_{k}:=\{\tau=(n_{1},\cdots,n_{l(\tau)})\;|\;n_1+n_2+\cdots+n_{l(\tau)}=k,\;\;n_{1}\geq n_{2}\geq \cdots\geq n_{l(\tau)}>0\;\}.
\ea 
It is well-known that conjugacy class of symmetric group $S_{k}$ is labeled by $\tau\in P_{k}$. If $\sigma\in S_{k}$ belongs to 
the conjugacy class labeled by $\tau=(n_{1},\cdots,n_{l(\tau)})$, number of sequences $(a_{1},a_{2},\cdots,a_{k})$ that satisfy 
(\ref{acond}) is given by $(l+1)^{l(\tau)}$. It is well-known that number of permutations that belong to conjugacy class labeled by $\tau=(n_{1},\cdots,n_{l(\tau)})$ is given by, 
\ba
\frac{k!}{\prod_{j=1}^{l(\tau)}n_{j}\prod_{j=1}^{k}\mathrm{mul}(j;\tau)!},
\ea 
where $\mathrm{mul}(j;\tau)$ is number of $i$'s that satisfy $n_{i}=j$. Hence we have obtain,
\ba
&&\sum_{\sigma  \in S_k}\sum_{a_{1}=1}^{l+1}\cdots\sum_{a_{k}=1}^{l+1}\int D\psi \left(\prod_{i=1}^{k}
\bigl(\sum_{a=1}^{l+1}\prod_{b\neq a}\psi_{\bar b}^{\bar i}\psi_{b}^{i}\bigr) \right)
\prod_{j=1}^{k}\psi_{\bar a_{j}}^{\bar j}\psi_{a_{\sigma(j)}}^{j}\no\\
&=&\sum_{\tau\in P_{k}}\frac{k!}{\prod_{j=1}^{l(\tau)}n_{j}\prod_{j=1}^{k}\mathrm{mul}(j;\tau)!}(l+1)^{l(\tau)}.
\label{step5}
\ea
On the other hand, we have the following combinatorial identity:
\ba
\sum_{\tau\in P_{k}}\frac{k!}{\prod_{j=1}^{l(\tau)}n_{j}\prod_{j=1}^{k}\mathrm{mul}(j;\tau)!}x^{l(\tau)}
=x(x+1)\cdots(x+k-1),
\label{step6}
\ea
that follows from the following computation,
\ba
&&\sum_{\tau\in P_{k}}\frac{k!}{\prod_{j=1}^{l(\tau)}n_{j}\prod_{j=1}^{k}\mathrm{mul}(j;\tau)!}x^{l(\tau)}\no\\
&=&\left.\frac{d^{k}}{dq^{k}}\biggl(1+\sum_{m=1}^{\infty}\sum_{\tau\in P_{m}}
\frac{1}{\prod_{j=1}^{l(\tau)}n_{j}\prod_{j=1}^{m}\mathrm{mul}(j;\tau)!}q^{m}x^{l(\tau)}\biggr)\right|_{q=0}\no\\                                                         \no\\
&=&\left.\frac{d^{k}}{dq^{k}}\biggl(\prod_{n=1}^{\infty}\exp(\frac{q^{n}}{n}x)\biggr)\right|_{q=0}\no\\
&=&\left.\frac{d^{k}}{dq^{k}}\biggl(\exp(\sum_{n=1}^{\infty}\frac{q^{n}}{n}x)\biggr)\right|_{q=0}\no\\
&=&\left.\frac{d^{k}}{dq^{k}}\biggl(\exp(-x\log(1-q))\biggr)\right|_{q=0}\no\\
&=&\left.\frac{d^{k}}{dq^{k}}(1-q)^{-x}\right|_{q=0}\no\\
&=&x(x+1)\cdots(x+k-1).
\ea
By combining (\ref{step4}), (\ref{step5}) and (\ref{step6}), we obtain, 
\ba
&&\sum_{\sigma  \in S_k}\mathrm{sgn}(\sigma)\biggl(\prod_{a=1}^k \frac{\partial}{\partial x_{a,\sigma(a)}} \biggr) |X|^{l+1}\no\\
&=&(l+1)(l+2)\cdots(l+k)|X|^{l}\no\\
&=&\frac{(l+k)!}{l!}|X|^{l}.
\ea
This completes the proof of the lemma.
\end{proof}

\begin{lemma}
\begin{align}
C(l+1,k)=\frac{(l+k)!}{l!}C(l,k). \label{eq:con2}
\end{align}
\label{lemma6}
\end{lemma}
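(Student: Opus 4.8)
The plan is to recast $C(l,k)$ as the result of applying a constant-coefficient differential operator to a power of the determinant, so that the recursion collapses to a single application of Lemma~\ref{lemma5}. Throughout I treat the entries $x_{a,b}$ of a generic $k\times k$ matrix $X$ as independent indeterminates, and I introduce the operator
\[
\mathcal{D}:=\sum_{\sigma\in S_k}\mathrm{sgn}(\sigma)\prod_{a=1}^k\frac{\partial}{\partial x_{a,\sigma(a)}},
\]
which is exactly the polynomial $|X|=\det X$ with every entry $x_{a,b}$ replaced by $\partial/\partial x_{a,b}$. Because the partial derivatives commute, this substitution is a ring homomorphism into the commutative ring of constant-coefficient differential operators, so $\mathcal{D}^l$ is the operator obtained from the polynomial $|X|^l$ by the same substitution. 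Writing $x^{M_l}:=\prod_{a,b}x_{a,b}^{m^l_{a,b}}$ and $\partial^{M_l}:=\prod_{a,b}(\partial/\partial x_{a,b})^{m^l_{a,b}}$, the expansion (\ref{muldef}) then reads $|X|^l=\sum_{M_l\in\mathcal{M}_l}\mul(M_l)\,x^{M_l}$ and gives $\mathcal{D}^l=\sum_{M_l\in\mathcal{M}_l}\mul(M_l)\,\partial^{M_l}$.

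First I would establish the representation $C(l,k)=\bigl[\mathcal{D}^l\,|X|^l\bigr]_{x=0}$. Applying the above expansion of $\mathcal{D}^l$ to a second copy of $|X|^l=\sum_{M'_l}\mul(M'_l)\,x^{M'_l}$ and using the elementary evaluation $\bigl[\partial^{M_l}x^{M'_l}\bigr]_{x=0}=\delta_{M_l,M'_l}\prod_{a,b}m^l_{a,b}!$, only the diagonal terms $M_l=M'_l$ survive, so that $\bigl[\mathcal{D}^l\,|X|^l\bigr]_{x=0}=\sum_{M_l\in\mathcal{M}_l}\bigl[\prod_{a,b}m^l_{a,b}!\bigr]\mul(M_l)^2$. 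By Lemma~\ref{clklem} (equation (\ref{eq:con1})) this sum is precisely $C(l,k)$, which proves the representation.

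With this in hand the recursion is immediate: I write $C(l+1,k)=\bigl[\mathcal{D}^{l+1}|X|^{l+1}\bigr]_{x=0}=\bigl[\mathcal{D}^l\bigl(\mathcal{D}\,|X|^{l+1}\bigr)\bigr]_{x=0}$ and invoke Lemma~\ref{lemma5} (equation (\ref{lemma3})), which is the polynomial identity $\mathcal{D}\,|X|^{l+1}=\frac{(k+l)!}{l!}|X|^l$. This yields $C(l+1,k)=\frac{(k+l)!}{l!}\bigl[\mathcal{D}^l|X|^l\bigr]_{x=0}=\frac{(k+l)!}{l!}C(l,k)$, the desired relation.

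The only delicate point lies in the bookkeeping of the first two steps: one must verify that the passage from the polynomial $|X|^l$ to the operator $\mathcal{D}^l$ is genuinely multiplicative (so that $\mathcal{D}^{l+1}=\mathcal{D}^l\circ\mathcal{D}$ and the single-step Lemma~\ref{lemma5} may be inserted inside the $l$-fold operator), and that the pairing at $x=0$ reproduces exactly the weights $\prod_{a,b}m^l_{a,b}!$ occurring in Lemma~\ref{clklem}. Once these are checked, no further computation is required, since all of the combinatorics has already been absorbed into Lemmas~\ref{clklem} and~\ref{lemma5}.
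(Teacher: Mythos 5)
Your proof is correct, and it takes a genuinely different (and considerably shorter) route than the paper's. Both arguments rest on the same two pillars, namely the evaluation $C(l,k)=\sum_{M_l\in{\cal M}_l}\bigl[\prod_{a,b=1}^k(m^l_{a,b})!\bigr]\mul(M_l)^2$ of Lemma \ref{clklem} and the Cayley-type derivative identity of Lemma \ref{lemma5}, but the paper works entirely at the level of coefficients: it expands one factor of $\mul(M_{l+1})$ via the recursion (\ref{lemma2}), reduces the claim to the identity (\ref{goal}), and then verifies (\ref{goal}) by expanding both sides into sums over tuples $(n_\sigma)_{\sigma\in S_k}$ using (\ref{Ml}), with a case split $\sigma=\tau$ versus $\sigma\neq\tau$ and delicate multinomial bookkeeping culminating in the comparison of (\ref{ess0}) with (\ref{essential}). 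You bypass all of this by recognizing $C(l,k)$ as the apolarity pairing $\bigl[\mathcal{D}^l|X|^l\bigr]_{x=0}$ with $\mathcal{D}=\sum_{\sigma\in S_k}\mathrm{sgn}(\sigma)\prod_{a=1}^k \partial/\partial x_{a,\sigma(a)}$; the two points you flag as delicate do hold: the substitution $x_{a,b}\mapsto\partial/\partial x_{a,b}$ lands in a commutative ring of constant-coefficient operators, hence is a ring homomorphism and gives $\mathcal{D}^{l+1}=\mathcal{D}^l\circ\mathcal{D}$ with $\mathcal{D}^l=\sum_{M_l}\mul(M_l)\,\partial^{M_l}$, and since all monomials of $|X|^l$ have the same total degree $kl$ and distinct $M_l\in{\cal M}_l$ give distinct monomials, the evaluation $\bigl[\partial^{M_l}x^{M'_l}\bigr]_{x=0}=\delta_{M_l,M'_l}\prod_{a,b}m^l_{a,b}!$ reproduces exactly the weights of Lemma \ref{clklem}. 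After that, one insertion of Lemma \ref{lemma5} inside the bracket finishes the induction step. What the paper's longer computation buys is the explicit combinatorial identity (\ref{goal}) relating $\mul(M_l+R(\sigma)-R(\tau))$ to $\mul(M_l)$, which has some independent interest and makes the recursion lemma (\ref{lemma2}) do visible work; what your formulation buys is brevity and conceptual clarity, since (\ref{lemma2}) becomes unnecessary and the entire recursion collapses to the single operator identity $\mathcal{D}\,|X|^{l+1}=\frac{(k+l)!}{l!}|X|^l$.
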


\begin{proof} 
We start from Lemma \ref{clklem} applied to $C(l+1,k)$.  
\ba
C(l+1,k)=\sum_{M_{l+1}\in{\cal M}_{l+1}}\Biggl[\prod_{a,b=1}^k (m^{l+1}_{a,b}) ! \Biggr] \mul(M_{l+1})^2
\label{lstep1}
\ea
By combining the above equality with (\ref{lemma2}), we obtain,
\ba
C(l+1,k)&=&\sum_{M_{l+1}\in{\cal M}_{l+1}}\sum_{\sigma,\tau  \in S_k}\Biggl[\prod_{a,b=1}^k (m_{a,b}^{l+1})! \Biggr] \mathrm{sgn}(\sigma) \mathrm{sgn}(\tau) \no \\
&&\times \mul(M_{l+1}-R(\sigma)) \mul(M_{l+1}-R(\tau)).
\ea
We set $M_l=M_{l+1}-R(\sigma)$ and rewrite further the above equality.
\ba
C(l+1,k)&=&\sum_{M_{l}\in{\cal M}_{l}}\sum_{\sigma,\tau  \in S_k}\Biggl[\prod_{a,b=1}^k (m_{a,b}^{l}+\delta_{\sigma(a),b})! \Biggr]\mathrm{sgn}(\sigma) \mathrm{sgn}(\tau) \mul(M_{l}) \no \\
&&\times \mul(M_{l}+R(\sigma)-R(\tau))\no \\
&=&\sum_{M_{l}\in {\cal M}_{l}}\Biggl[\prod_{a,b=1}^k (m_{a,b}^{l})! \Biggr] \mul(M_{l}) \sum_{\sigma,\tau  \in S_k}\Biggl[\prod_{a=1}^k (m_{a,\sigma(a)}^{l}+1) \Biggr]\mathrm{sgn}(\sigma) \mathrm{sgn}(\tau)  \no \\
&&\times \mul(M_{l}+R(\sigma)-R(\tau)) .
\ea
Hence in order to prove the lemma, we only have to confirm the following equality:
\begin{align}
&\sum_{\sigma,\tau  \in S_k}\Biggl[\prod_{a=1}^k (m_{a,\sigma(a)}^{l}+1) \Biggr]\mathrm{sgn}(\sigma) \mathrm{sgn}(\tau) \mul(M_{l}+R(\sigma)-R(\tau)) =\frac{(k+l)!}{l!}\mul(M_{l}) .
\label{goal}
\end{align}
If $\sigma=\tau$, $\mathrm{sgn}(\sigma) \mathrm{sgn}(\tau) \mul(M_{l}+R(\sigma)-R(\tau))$ obviously equals $\mul(M_{l})$. 
Let us use the representation $M_{l}=\sum_{\sigma^{\prime}\in S_{k}}n_{\sigma^{\prime}}R(\sigma^{\prime})$.
If $\sigma\neq \tau$, we have the following representation:
\ba
M_{l}+R(\sigma)-R(\tau)=\sum_{\sigma^{\prime}\neq \sigma,\tau}n_{\sigma^{\prime}}R(\sigma^{\prime})
+(n(\sigma)+1)R(\sigma)+(n(\tau)-1)R(\tau).
\ea 
Applying (\ref{Ml}) carefully to these two cases, we obtain,
\begin{eqnarray}
&&\mathrm{sgn}(\sigma) \mathrm{sgn}(\tau) \mul(M_{l}+R(\sigma)-R(\tau)) \notag \\
&&=
\left \{ 
\begin{array}{c}
\displaystyle \sum_{\varphi((n_{\sigma^{\prime}})_{\sigma^{\prime}\in S_{k}})=M_{l}} \frac{l!}{\prod_{\sigma^{\prime} \in S_k} n_{\sigma^{\prime}}!} \prod_{\sigma^{\prime} \in S_k}\Bigl(\mathrm{sgn}(\sigma^{\prime})\Bigr) ^{n_{\sigma^{\prime}}} ~~(\sigma=\tau), \\
 \displaystyle \sum_{\varphi((n_{\sigma^{\prime}})_{\sigma^{\prime}\in S_{k}})=M_{l}}\frac{l!}{\prod_{\sigma^{\prime} \in S_k} n_{\sigma^{\prime}}!} \times \frac{n_{\tau}}{n_{\sigma}+1} \prod_{\sigma^{\prime} \in S_k}\Bigl(\mathrm{sgn}(\sigma^{\prime})\Bigr) ^{n_{\sigma^{\prime}}} ~~(\sigma \neq \tau).
\end{array} 
\right.
\end{eqnarray} 
Hence we can rewrite the l.h.s. of (\ref{goal}) as follows. 
\begin{align}
&\sum_{\sigma,\tau  \in S_k}\Bigl(\prod_{a=1}^k (m_{a,\sigma(a)}^{l}+1) \Bigr)\mathrm{sgn}(\sigma) \mathrm{sgn}(\tau) \mul(M_{l}+R(\sigma)-R(\tau)) \notag \\
&=\sum_{\sigma  \in S_k}\Bigl(\prod_{a=1}^k (m_{a,\sigma(a)}^{ l}+1) \Bigr) \Bigl \{ \displaystyle  \sum_{\varphi((n_{\sigma^{\prime}})_{\sigma^{\prime}\in S_{k}})=M_{l}}\frac{l!}{\prod_{\sigma^\prime \in S_k} n_{\sigma^\prime}!} \prod_{\sigma^\prime \in S_k}\Bigl(\mathrm{sgn}(\sigma^\prime)\Bigr) ^{n_{\sigma^\prime}} \notag \\
& +\sum_{\tau \neq \sigma} \sum_{\varphi((n_{\sigma^{\prime}})_{\sigma^{\prime}\in S_{k}})=M_{l}} \frac{l!}{\prod_{\sigma^\prime \in S_k} n_{\sigma^\prime}!} \times \frac{n_{\tau}}{n_{\sigma}+1} \prod_{\sigma^\prime \in S_k}\Bigl(\mathrm{sgn}(\sigma^\prime)\Bigr) ^{n_{\sigma^\prime}} \Bigr\} \no\\
&=\sum_{\sigma  \in S_k}\Bigl(\prod_{a=1}^k (m_{a,\sigma(a)}^{ l}+1) \Bigr) \Bigl \{ \displaystyle  \sum_{\varphi((n_{\sigma^{\prime}})_{\sigma^{\prime}\in S_{k}})=M_{l}}\frac{l!}{\prod_{\sigma^\prime \in S_k} n_{\sigma^\prime}!} \prod_{\sigma^\prime \in S_k}\Bigl(\mathrm{sgn}(\sigma^\prime)\Bigr) ^{n_{\sigma^\prime}} \notag \\
& + \sum_{\varphi((n_{\sigma^{\prime}})_{\sigma^{\prime}\in S_{k}})=M_{l}}\frac{l!}{\prod_{\sigma^\prime \in S_k} n_{\sigma^\prime}!}\times \frac{l-n_\sigma}{n_{\sigma}+1}  \prod_{\sigma^\prime \in S_k}\Bigl(\mathrm{sgn}(\sigma^\prime)\Bigr) ^{n_{\sigma^\prime}} \Bigr\} \no\\
&=\sum_{\sigma  \in S_k}\Bigl(\prod_{a=1}^k (m_{a,\sigma(a)}^{ l}+1) \Bigr)  
 \sum_{\varphi((n_{\sigma^{\prime}})_{\sigma^{\prime}\in S_{k}})=M_{l}} \frac{(l+1)!}{(n_\sigma+1) \prod_{\sigma^\prime \in S_k} n_{\sigma^\prime}!}  \prod_{\sigma^\prime \in S_k}\Bigl(\mathrm{sgn}(\sigma^\prime)\Bigr) ^{n_{\sigma^\prime}}. 
\label{ess0}
\end{align}
In going from the second expression to the third expression, we used the equality:
\ba
\sum_{\tau\neq\sigma}n_{\tau}=\sum_{\tau\in S_{k}}n_{\tau}-n_{\sigma}=l-n_{\sigma}.
\ea
At this stage, we explicitly compute the l.h.s. of  Lemma \ref{lemma5} $\sum_{\sigma  \in S_k}\mathrm{sgn}(\sigma)\Bigl(\prod_{a=1}^k \frac{\partial}{\partial x_{a, \sigma(a)}} \Bigr) |X|^{l+1}$ by using the expansion $ |X|^{l+1}=\sum_{M_{l+1}\in {\cal M}_{l+1}} \mul(M_{l+1}) \prod_{a,b=1}^{k}  x_{a,b}^{m^{l+1}_{a,b}}$. 
\begin{align}
&\sum_{\sigma  \in S_k}\mathrm{sgn}(\sigma)\Bigl(\prod_{a=1}^k \frac{\partial}{\partial x_{a \sigma(a)}} \Bigr) |X|^{l+1} \notag \\
&=\sum_{\sigma  \in S_k}\mathrm{sgn}(\sigma)\Bigl(\prod_{a=1}^k \frac{\partial}{\partial x_{a \sigma(a)}} \Bigr) \sum_{M_{l+1}\in {\cal M}_{l+1}} \sum_{\varphi((n_{\tau})_{\tau \in S_{k}})=M_{l+1}}\frac{(l+1)!}{\prod_{\tau \in S_k} n_{\tau}!} \Biggl[\prod_{\tau \in S_k}\Bigl(\mathrm{sgn}(\tau)\Bigr)^{n_\tau}\Biggr]   \prod_{a,b=1}^{k} x_{a,b}^{m_{a,b}^{l+1}} \notag \\
&=\sum_{\sigma  \in S_k} \sum_{M_{l+1}\in {\cal M}_{l+1}} \sum_{\varphi((n_{\tau})_{\tau \in S_{k}})=M_{l+1}}\frac{(l+1)!\mathrm{sgn}(\sigma)}{\prod_{\tau \in S_k} n_{\tau}!} \Biggl[\prod_{\tau \in S_k}\Bigl(\mathrm{sgn}(\tau)\Bigr)^{n_\tau} \Biggr] \Biggl[ \prod_{a=1}^k m_{a,\sigma(a)}^{l+1} x_{a,\sigma(a)}^{m_{a,\sigma(a)}^{l+1}-1}   \Biggr] \notag \\
&\times \prod_{a=1 }^{k}\prod_{\substack{b=1\\(b\neq\sigma(a))}}^k x_{a,b}^{m_{a,b}^{l+1}}.
\end{align}
In the last expression, $\varphi((n_{\tau})_{\tau \in S_{k}})=M_{l+1}$ that corresponds to non vanishing summand satisfies the condition 
$n_{\sigma}\geq 1$. Hence we can set $M_{l+1}=M_{l}+R(\sigma)$ with $\varphi((m_{\tau})_{\tau \in S_{k}})=M_{l}$. Then we can further rewrite 
the above expression,
\begin{align}
&=\sum_{\sigma  \in S_k}\sum_{M_{l}\in{\cal M}_{l}} \sum_{\varphi((m_{\tau})_{\tau \in S_{k}})=M_{l}}\frac{(l+1)!}{(m_\sigma+1)\prod_{\tau} m_{\tau}!} (\mathrm{sgn}(\sigma))^2\Biggl[ \prod_{\tau \in S_k}\Bigl(\mathrm{sgn}(\tau)\Bigr)^{m_\tau} \Biggr]  \notag \\
&\times \Biggr[\prod_{a=1}^k(m_{a,\sigma(a)}^{l}+1) x_{a,\sigma(a)}^{m_{a,\sigma(a)}^{l}}  \Biggr]
\prod_{a=1 }^{k}\prod_{\substack{b=1\\(b\neq\sigma(a))}}^k x_{a,b}^{m_{a,b}^{l}} \notag \\
&=\sum_{\sigma  \in S_k}\sum_{M_{l}\in{\cal M}_{l}}\sum_{ \varphi((m_{\tau})_{\tau \in S_{k}})=M_{l}}\frac{(l+1)!}{(m_\sigma+1)\prod_{\tau} m_{\tau}!} \Biggl[\prod_{\tau \in S_k}\Bigl(\mathrm{sgn}(\tau)\Bigr)^{m_\tau}  \Biggr] 
\Biggl[\prod_{a=1}^k(m_{a,\sigma(a)}^{l}+1)\Biggr]\prod_{a,b=1 }^{k} x_{a,b}^{m_{a,b}^{l}}. \label{proof2,2}
\end{align}
By combining, the above derivation with assertion of Lemma \ref{lemma5},
\ba
&&\sum_{\sigma  \in S_k}\mathrm{sgn}(\sigma)\Bigl(\prod_{a=1}^k \frac{\partial}{\partial x_{a \sigma(a)}} \Bigr) |X|^{l+1} \notag \\
&&=\sum_{M_{l}\in{\cal M}_{l}}\sum_{\varphi((m_{\tau})_{\tau \in S_{k}})=M_{l}} \sum_{\sigma  \in S_k} \frac{(l+1)!}{(m_\sigma+1)\prod_{\tau} m_{\tau}!} \Biggl[\prod_{\tau \in S_k}\Bigl(\mathrm{sgn}(\tau)\Bigr)^{m_\tau}  \Biggr] \Biggl[\prod_{a=1}^k(m_{a,\sigma(a)}^{l}+1)\Biggr]\prod_{a,b=1 }^{k} x_{a,b}^{m_{a,b}^{l}}\no\\
&&=\frac{(k+l)!}{l!}\sum_{M_{l}\in{\cal M}_{l}}\mul(M_{l}) \prod_{a,b=1}^{k}  x_{a,b}^{m^{l}_{a,b}}, 
\ea
we obtain, 
\ba
\sum_{ \substack{M_{l}=\\ \sum_{\tau} m_{\tau}R(\tau)} } \sum_{\sigma  \in S_k} \frac{(l+1)!}{(m_\sigma+1)\prod_{\tau} m_{\tau}!} \Biggl[\prod_{\tau \in S_k}\Bigl(\mathrm{sgn}(\tau)\Bigr)^{m_\tau}  \Biggr] \Biggl[\prod_{a=1}^k(m_{a,\sigma(a)}^{l}+1)\Biggr]=\frac{(k+l)!}{l!}\mul(M_{l}).\no\\
\label{essential} 
\ea
By comparing (\ref{ess0}) with (\ref{essential}), we reach the equality:
\ba
&&\sum_{\sigma,\tau  \in S_k}\Biggl[\prod_{a=1}^k (m_{a,\sigma(a)}^{l}+1) \Biggr]\mathrm{sgn}(\sigma) \mathrm{sgn}(\tau) \mul(M_{l}+R(\sigma)-R(\tau))=\frac{(k+l)!}{l!}\mul(M_{l}),
\ea
which completes the proof of the lemma.
\end{proof}
{\bf Proof of Theorem 3}\\
\begin{proof} 
$C(1,k)$ is calculated as follows.
\begin{align}
C(1,k)&= \sum_{\sigma \in S_k}\mathrm{sgn} (\sigma)\int D\psi  \psi_{1}^{1}\psi_{\bar 1}^{\overline{\sigma(1)}}\cdots \psi_{1}^{k}\psi_{\bar 1}^{\overline{ \sigma(k)}} \no \\
&=\sum_{\sigma \in S_k} (\mathrm{sgn} (\sigma) )^2\int D\psi \psi_{1}^{1}\psi_{\bar 1}^{\bar 1}\cdots \psi_{1}^{k}\psi_{\bar 1}^{\bar k}=\sum_{\sigma \in S_k} 1=k!.
\end{align}
Then successive use of  Lemma \ref{lemma6} leads us to,
\ba
C(l,k)=\frac{(l+k-1)!}{(l-1)!}\cdots\frac{(k+1)!}{1!}C(1,k)=\frac{\prod_{j=0}^{k+l-1}j!}{\prod_{j=0}^{k-1}j!\prod_{j=0}^{l-1}j!}.
\ea
Hence in the case of Theorem 3, the l.h.s. $\frac{ \prod_{j=0}^{k-1} j!}{ \prod_{j=N-k}^{N-1} j!}\int D\psi^\prime  \left(\det(\Phi^\prime)\right)^{N-k}$ equals $\frac{ \prod_{j=0}^{k-1} j!\prod_{j=0}^{N-k-1}j!}{ \prod_{j=0}^{N-1} j!}C(N-k,k)
=\frac{ \prod_{j=0}^{k-1} j!\prod_{j=0}^{N-k-1}j!}{ \prod_{j=0}^{N-1} j!}\cdot\frac{\prod_{j=0}^{N-1}j!}{\prod_{j=0}^{k-1}j!\prod_{j=0}^{N-k-1}j!}=1$.

\end{proof}

\vspace{1cm}

\end{document}